\def\ps@pprintTitle{%
 \let\@oddhead\@empty
 \let\@evenhead\@empty
 \def\@oddfoot{\centerline{\thepage}}%
 \let\@evenfoot\@oddfoot}
\tikzset{>={latex},initial text={}}
\tikzstyle{accepting}=[double distance=1.7pt,outer sep=1pt+\pgflinewidth]
\tikzstyle{state}=[inner sep=1.4pt, minimum width=7mm, circle, draw,scale=0.85]
\tikzstyle{node}=[inner sep=1.3pt, minimum width=6mm, circle, draw,scale=0.80]
\crefname{lem}{Lemma}{Lemmas}
\crefname{thm}{Theorem}{Theorems}
\crefname{result}{Result}{Results}
\crefname{prop}{Proposition}{Propositions}
\crefname{rem}{Remark}{Remarks}
\crefname{ex}{Example}{Examples}
\crefname{defi}{Definition}{Definitions}
\crefname{appendix}{}{}
\newtheorem{thm}{Theorem}[section]
\newtheorem{lem}[thm]{Lemma}
\newtheorem{cor}[thm]{Corollary}
\newtheorem{prop}[thm]{Proposition}
\newdefinition{defi}{Definition}[section]
\newtheorem{problem}{Problem} 
\newtheorem{result}{Result}
\newtheorem{clm}{Claim}
\theoremstyle{remark}
\newtheorem{ex}{Example}
\newtheorem{rem}{Remark}
\newcommand{\NN}{\mathbb{N}}
\newcommand{\RRpos}{\mathbb{R}_{\geq 0}}
\newcommand{\arith}[1]{\ensuremath{\mathsf{arith}_{\:\!\mathsf{C}}(#1)}} 
\newcommand{\arithf}[1]{\ensuremath{\mathsf{arith}_{\:\!\mathsf{F}}(#1)}} 
\newcommand{\boolf}[1]{\ensuremath{\mathsf{bool}_{\:\!\mathsf{F}}(#1)}} 
\newcommand{\rpn}[1]{\ensuremath{\mathsf{rpn}(#1})} 
\newcommand{\Rpn}[1]{\ensuremath{\mathsf{rpn}\big(#1}\big)} 
\newcommand{\leaf}[1]{\ensuremath{\mathsf{leaf}(#1})} 
\newcommand{\binomial}[2]{B_{#1,#2}} 
\newcommand{\binomialinf}[2]{B^+_{#1,#2}} 
\newcommand{\dyck}[1]{D_{#1}} 
\newcommand{\thresh}[2]{T_{#1,#2}} 
\newcommand{\Lxor}[1]{L^{\textup{xor}}_{#1}} 
\newcommand{\Ldiv}[2]{L^{\textup{div}}_{#1,#2}} 
\newcommand{\Ldivv}[3]{L_{#1,#2}^{#3}} 
\newcommand{\Rdiv}[3]{R_{#1,#2}^{\:\!#3}} 
\newcommand{\Leven}[2]{L^{\textup{even}}_{#1,#2}} 
\newcommand{\Levenn}[1]{L^{\textup{even}}_{#1}} 
\newcommand{\Lpar}[3]{L^{\equiv #3}_{#1,#2}} 
\newcommand{\Lpalin}[1]{L^{\textup{pal}}_{#1}} 
\newcommand{\Lperm}[1]{P_{#1}} 
\newcommand{\Lperminf}[1]{P^+_{#1}} 
\newcommand{\Aeven}[1]{\ensuremath{A_{#1}^{\textsf{even}}}} 
\newcommand{\Lnw}{\ensuremath{ L_{n,\hspace*{0.6pt}\overline{\hspace*{-0.6pt} w \hspace*{-0.6pt}}\hspace*{0.6pt}}}} 
\newcommand{\QM}{\textup{QM}_n} 
\newcommand{\eins}{\mathsf{1}}
\newcommand{\nulll}{\mathsf{0}}
\newcommand{\mplus}{\boxplus} 
\newcommand{\const}[1]{\lambda_{#1}} 
\renewcommand{\epsilon}{\varepsilon} 
\newcommand{\bin}[1]{\langle{#1}\rangle_{\textsf{2}}} 
\newcommand{\Cdot}{\:\!\raisebox{-0.5pt}{{\scalebox{1.2}{\ensuremath{\cdot}}}}\:\!}
\newcommand{\prodd}[1]{A(#1)}
\newcommand{\Hrubes}{Hrube\v{s}}
\newcommand{\isdef}{\!:=}
\newcommand{\defis}{=:\!}
\newcommand{\cupp}{\mathbin{\dot{\cup}}}
\newcommand{\equivp}{\ensuremath{\equiv_p}}
\newcommand{\minus}{\:\!\!\ensuremath{-}\:\!\!}
\newcommand{\medfrac}[2]{\scalebox{0.91}{\ensuremath{\displaystyle{\frac{\raisebox{-1.2pt}{\ensuremath{#1}}}{\raisebox{0.5pt}{\ensuremath{#2}}}}}} }
\newcommand{\clmproof}[1]{\hfill \raisebox{-0pt}{$\qedhere_{\textup{\:\!\cref*{#1}}}$}}
\newcommand{\lenv}[1]{{#1}_{\textup{le}}}
\newcommand{\henv}[1]{{#1}_{\textup{he}}}
\newcommand{\lenvv}[2]{{#1}_{#2\textup{-le}}}
\newcommand{\henvv}[2]{{#1}_{#2\textup{-he}}}
\newcommand{\reg}[1]{\textup{Reg}(#1)}
\journal{Elsevier}
\begin{document}

\begin{frontmatter}

\tnotetext[prelim]{A preliminary version of this paper appeared in Descriptional Complexity of Formal Systems \cite{CS20}.
The second author was partially supported by the DFG grant JU 3105/1-2 (German Research Foundation).}

\title{Regular expression length via arithmetic formula complexity\tnoteref{prelim}}

\author{Ehud Cseresnyes}
\ead{ehud@posteo.de} 
 
\author{Hannes Seiwert\corref{cor1}}
\ead{seiwert@em.uni-frankfurt.de}
\cortext[cor1]{Corresponding author.}

\address{Institute of Computer Science, Goethe University Frankfurt, Germany}

\begin{abstract}
We prove lower bounds on the length of regular expressions for finite languages by methods from arithmetic circuit complexity. 
First, we show a reduction: the length of a regular expression for a language $L\subseteq \{0,1\}^n$ is bounded from below by the minimum size of a monotone arithmetic formula computing a polynomial that has $L$ as its set of exponent vectors, viewing words as vectors.
This result yields lower bounds for the binomial language of all words with exactly~$k$ ones and $n\!-\!k$ zeros and for the language of all Dyck words of length~$2n$. 
We also determine the blow-up of language operations (intersection and shuffle) of regular expressions for finite languages.
Second, we adapt a lower bound method for multilinear arithmetic formulas by so-called log-product polynomials to regular expressions. 
With this method we show almost tight lower bounds for the language of all binary numbers with~$n$ bits that are divisible by~a given odd integer~$p$, for the language of all words of length~$n$ over a~$k$ letter alphabet with an even number of occurrences of each letter and for the language of all permutations of $\{1,\dots, n\}$.
\end{abstract}

\begin{keyword}
 regular expression \sep lower bound \sep descriptional complexity  \sep arithmetic circuit complexity
  \sep formal language \sep monotone arithmetic formula
\end{keyword}

\end{frontmatter}

\section{Introduction}\label{sec:intro} 
Deriving lower bounds on the length of regular expressions is a fundamental problem in formal language theory \cite{zeiger76, ellul2004, gruber2008cc, gruber2008ops1, gelade12, mousavi2017, shallit2019}. 
Particularly interesting are language families that have small finite automata but require long regular expressions, since they reveal the gap between the descriptional complexity of these models.
However, despite regular expressions being around for some decades, only few lower bounds are known so far. This is in sharp contrast to state complexity of finite automata which is understood quite well.

One has to distinguish between finite and infinite languages, as well as between alphabets of constant and growing size. 
Ehrenfeucht and Zeiger \cite{zeiger76} gave an exponential lower bound for the infinite language of all walks in a complete graph over an alphabet of size~$n^2$, 
thereby showing that the blow-up of converting a deterministic finite automaton (DFA) into a regular expression may be exponential. 
This result was generalized by Gelade and Neven~\cite{gelade12} for four-letter alphabets, and by Gruber and Holzer~\cite{gruber2008ops1} for binary alphabets using concepts of cycle rank and star height; 
in particular, Gruber and Holzer showed that the length of any regular expression is exponential in the star height of its described language.

In this paper, we focus on \emph{finite} languages. Since star height of finite languages  trivially is zero, this method is not applicable here. Instead, techniques from circuit complexity have proven useful.
In the next subsection we give an overview of the most important lower bound methods for finite languages and briefly discuss their strengths and weaknesses. Subsequently we present our own results.

\goodbreak
\subsection{Related work} \label{sec:related}

As customary, a regular expression (or just \emph{expression}) consists of the binary operations union $(+)$ and concatenation~$(\Cdot)$, the unary star operation $(^*)$, the empty language~$\emptyset$, the empty word~$\epsilon$, and letters~$a$ of an alphabet~$\Sigma$.
Every expression $R$ \emph{describes} a regular language $L(R)$ in the usual way.
Note that for finite languages the star operation is useless and can be avoided.
For an expression $R$ we measure its \emph{length} by the reverse polish length $\rpn{R}$, namely the number of nodes in its syntax tree.
For a language $L$ its \emph{expression length} $\rpn{L}$ is the length of a shortest expression describing~$L$.
A language is \emph{homogeneous} if all its words have the same length. 
For a word $w$ and a letter $a$ we denote by $|w|_a$ the number of occurrences of $a$ in~$w$.

\paragraph*{Circuits and formulas over semirings} 
A regular expression without star is nothing else than a formula (that is, a circuit whose graph is a tree) over the non-commutative \emph{free semiring} $(+,\Cdot)$.
So, lower bounds on the size of formulas (or circuits) over semirings can also be applied to expression length.
In particular, if we interpret union ($+$) as addition and concatenation ($\Cdot)$ as multiplication,  bounds for monotone arithmetic formulas (or circuits) also carry over.

For example, Jerrum and Snir \cite{jerrum1982} showed several such lower bounds on circuit size. 
By additionally taking into account the non-commutativity of the concatenation operation, they showed a lower bound of $2^n{-}2$ for the language $\Lperm{n}$ of all permutations over the alphabet $\Sigma= [n]$, see \cite[Sect.\:5.4]{jerrum1982}. (Recently, Molina Lovett and Shallit \cite{shallit2019} improved this bound to $4^n   n^{-(\log n)/4 + \Theta(1) }$ by a custom argument.)
Also \Hrubes, Wigderson and Yehudayoff \cite{hrubes10,non-commuting} developed a lower bound method for non-commutative circuits, and Filmus \cite{filmus2011} used a similar method  for lower bounds on the size of context-free grammars for finite languages. 

By using a lower bound on monotone arithmetic \emph{formula} size, \Hrubes\ and Yehudayoff \cite{yehudayoff2011} (implicitly) showed  that the language 
$L_{n,k} \isdef \{{i_1} {i_2} \cdots {i_k} : 1\leq i_1 < i_2 < \dots < i_k \leq n \}$ of all length-$k$ subwords of the word $12 \cdots n$ over the alphabet $\Sigma=[n]$ requires expressions of length $n k^{\Omega(\log k)}$, provided $k\leq n/2$.

\paragraph*{Fooling sets}

It is well known that every expression of length $s$ can be transformed into a nondeterministic finite automaton (NFA) with $s+1$ states. 
Therefore, lower bounds for NFAs carry over to expressions.
Such bounds can be shown by the fooling set method of Birget \cite{foolingset1}, rediscovered by Glaister and Shallit \cite{foolingset2}.
Given a language $L$, a \emph{fooling set} for $L$ is a set of pairs of words $(u_i,v_i)$ such that the word $u_i v_i$ lies in $L$ for all $i$, but for all $i\not = j$ one of the words $u_i v_j$ or $u_j v_i$ lies not in~$L$.
The number of states of every NFA for~$L$ is bounded from below by the size of a largest fooling set for~$L$.
\begin{ex} \label{ex:fooling}
Consider the language $\Lpalin{2n}=\{w w^\textit{reverse} : w \in \{0,1\}^{n} \} $ of all palindromes of length $2n$.
Then $F = \{(w ,w^\textit{reverse}) : w \in \{0,1\}^{n} \}$ is a fooling set for $\Lpalin{2n}$.
Hence, any NFA for $\Lpalin{2n}$ has at least $|F|=2^n$ states and any expression has length at least $2^n-1.$
\end{ex}

\paragraph*{Reduction to boolean formula complexity} \label{sec:reduction-bool}

Ellul, Krawetz, Shallit and Wang \cite{ellul2004} described a construction that transforms a given expression $R$ for a homogeneous language $L \subseteq \{0,1\}^n$ into a boolean formula of size $\rpn{R}$ for its characteristic function $f_L \!: \{0,1\}^n  \to \{0,1\}$. 
This transformation is done as follows:
Every letter $1$ (resp.~$0$) is replaced by the literal $x_i$ (resp. $\lnot x_i$), where the index $i$ corresponds to the position of the letter in the words of~$L$.
Every union is replaced by OR and every concatenation is replaced by AND.
Hence, the expression length of $L$ is bounded from below by the \emph{boolean formula complexity} of $f_L$:
$$\rpn{L} \geq \mathsf{Boolean}(f_L)$$
Together with a result by Khrapchenko \cite{khrapchenko1971} on the formula complexity of the XOR function, they derived a tight lower bound of $\Omega(n^2)$ for the XOR  language $\Lxor{n} = \left\{ w \in \left\{ 0,1 \right\}^n : |w|_1 \text{ is even} \right\}$.

\paragraph*{Reduction to monotone boolean formula complexity} \label{sec:reduction-mon}

A language $L\subseteq \{0,1\}^*$ is \emph{monotone} if it is closed under replacing any number of zeros by ones. 
By methods from communication complexity, Gruber and Johannsen \cite{gruber2008cc} developed a construction that transforms a given expression for a \emph{monotone} homogeneous language $L \subseteq \{0,1\}^n$ into a \emph{monotone} boolean formula for~$f_L$.
Thus, the expression length of $L$ is bounded from below by the \emph{monotone} boolean formula complexity of~$f_L$:
$$\rpn{L} \geq \mathsf{Monotone}\text{-}\mathsf{Boolean}(f_L)$$
With this reduction they showed that the blow-up for converting a DFA accepting a \emph{finite} language into an expression may be as large as $n^{\Theta(\log n)}$.
This result is based upon a lower bound on the monotone boolean complexity for the FORK problem shown by Grigni and Sipser \cite{grigni}.
Since any DFA (or NFA) for a finite language can be simulated by an expression of length $n^{O( \log n)}$ \cite{ellul2004,gruber2008cc}, this result is tight.
In particular, expression length of finite languages with polynomial size NFAs is at most quasi-polynomial.

\paragraph*{Strengths and limitations}
We give a brief discussion of the above methods, focusing on two desirable properties: (i) the ability to show blow-ups between finite automata and expressions, and (ii) the possibility to work over constant size alphabets.

General circuit lower bound methods may work well for languages over alphabets of growing size. 
Often, however, and in particular when working over constant size alphabets, non-commutativity must be used explicitly. For example, if we ignore non-commutativity, the permutation language $P_n$ collapses to the trivial language $\{12 \cdots n\}$ containing only a single word.
Moreover, circuit methods \emph{cannot} show nontrivial blow-ups since NFAs can be simulated by circuits (see \cite{italian2010}). 
However, \emph{formula} size methods \emph{can} (e.g., the aforementioned language $L_{n,k} = \{{i_1} \cdots {i_k} : 1\leq i_1 < \dots < i_k \leq n \}$ has small DFAs with $O(nk)$ states, but no short expressions).

The fooling set method can be applied to languages over arbitrary alphabets. It is usually simpler in its application than circuit methods and gives the same (sometimes even better) bounds, so it is often a promising first attempt. 
However, since lower bounds shown with fooling sets hold also for NFAs, clearly this method cannot show blow-ups either.  

The two ``boolean methods'' in contrast \emph{can} show blow-ups and also work for constant size (even binary) alphabets. 
On the downside, they rely on lower bounds for boolean (or monotone boolean) complexity which are rare and whose proofs are usually quite involved. Also, bounds obtained by the boolean methods can be rather loose for many languages (see \cref{sec:limits} and \cref{rem:arithvsbool,rem:threshold}) and are restricted to languages over the alphabet $\{0,1\}$.

\paragraph*{Our contribution}

Our goal in this paper is to improve the existing methods with regard to strength as well as to simplicity of application.
We present two new methods that naturally refine the boolean methods resp. the formula size method. Both methods have our two desired properties (i) and (ii), that is, they are able to show blow-ups and work over constant size alphabets. 
As demonstration of these methods, we also show several explicit lower bounds.

\section{Results} \label{sec:results}

\paragraph*{Reduction to monotone arithmetic formula complexity}
A \emph{monotone arithmetic formula} is a rooted tree with leaves holding variables $x_1, \dots, x_n$ and inner nodes (\emph{gates}) performing multiplication~$(\times)$ or addition~$(+)$ operations. 
Any such formula computes a polynomial $f(x_1, \dots, x_n)= \sum_{a \in A} \const{a} \prod_{i=1}^n x_i^{a_i}$ over the nonnegative reals in a natural manner, where  $A\subseteq \NN^n$ is its set of \emph{exponent vectors} and $\const{a}>0$ are positive  coefficients.
For a set $A \subseteq \NN^n$ let $\arithf{A}$ be the size of a smallest monotone arithmetic formula that computes a polynomial whose set of exponent vectors is $A$. We identify words $w=w_1 \cdots w_{n}$ with  vectors $(w_1, \dots, w_n)$.

Our first method reduces expression length to monotone arithmetic formula complexity.
\begin{result}[Arithmetic bound, \cref{thm:arith}]\label{res:arith}
Let $L \subseteq \{0,1\}^n$ be a homogeneous language. Then  $$\rpn{L} \geq \arithf{L}\, .$$
\end{result}
Informally stated, the following hierarchy for the different formula complexities holds:
$$ \mathsf{Monotone}\text{-}\mathsf{Arithmetic} \geq \mathsf{Monotone}\text{-}\mathsf{Boolean} \geq \mathsf{Boolean}$$
The gaps between these complexities can be exponentially large (see \cref{rem:arithvsbool}). So, our arithmetic bound covers both aforementioned boolean methods, and can be exponentially stronger.
Another advantage is that bounds on arithmetic complexity can be proven much more easily than on boolean complexity. 
In particular, there are already many strong bounds known.
In contrast to the ``monotone boolean method'' of Gruber and Johannsen \cite{gruber2008cc}, the arithmetic bound is not restricted to monotone languages.
For a survey on boolean complexity resp. arithmetic complexity, including the non-monotone case, see \cite{jukna2012} resp. \cite{amir10:survey,survey:github}.

\paragraph*{The log-product bound}
A general flaw of \cref{res:arith} is that the non-commutativity of the concatenation operation cannot be fully utilized since arithmetic operations ($+$ and $\times$) are commutative. 
Further, it is restricted to languages over the alphabet $\Sigma=\{0,1\}$.
To cope with these issues, we adapt a lower bound method from Shilpka and Yehudayoff \cite{amir10:survey} resp. \Hrubes\ and Yehudayoff \cite{yehudayoff2011} for multilinear arithmetic formula size \emph{directly} to expression length. 
(Essentially, we include non-commutativity in their lower bound method for formula size, as similarly done in \cite{hrubes10,non-commuting} for a circuit size method.) 

An expression $R$ is \emph{homogeneous} if it describes a  homogeneous language (all words have the same length) and its \emph{degree} $\deg R $ is the length of its described words.
A homogeneous expression $B$ is \emph{log-product} if it is either a letter or a concatenation of two homogeneous expressions $B_1, B_2$ such that $\deg B_1 \geq \deg B_2$ and $B_1$ is log-product itself. 
We show that every homogeneous expression $R$ can be written as union $B_1 + \dots + B_{\ell}$  of $\ell \leq \rpn{R}$ log-product expressions $B_i$. This yields our second method.
\begin{result}[Log-product bound, \cref{thm:logprod}] \label{res:logprod} 
Let $L$ be a homogeneous language and $h\in \RRpos$. 
If $|L(B)| \leq h$ holds for every log-product expression $B$ with $L(B) \subseteq  L$, 
then any expression for $L$ has length at least $$\rpn{L} \geq |L| /h \,.$$
\end{result}
So, in order to show a lower bound on $\rpn{L}$, we only have to upper bound the number of words in $L(B)$ for any log-product expression $B$ with $L(B) \subseteq L$. 
Every log-product expression can be written as factorization $B=F_1 F_2 \cdots F_m$ of $m \geq \log (\deg B)$ nontrivial factors $F_i$ (see \cref{sec:factor} for details).
Since every word described by $B$ lies in $L$, we can derive properties of languages described by the factors $F_i$.
For example, if $L\subseteq \{0,1\}^n$ is a \emph{uniform} language, that is, every word has the same number of ones, then every language $L(F_i)$ described by a factor $F_i$ must be uniform as well. 
These properties then allow us to upper bound the number of words described by each factor, and hence, by $B$.

\paragraph*{Lower bounds for explicit languages}
We apply our two methods (\cref{res:arith,res:logprod}) to several explicit language families. Namely, we prove lower bounds for
\begin{itemize}[itemsep=-1.0pt]
\item
the binomial language $\binomial{n}{k} =\{ w\in \{0,1\}^n :|w|_1=k \}$,
\item 
the threshold language $ \thresh{n}{k}=\{ w\in \{0,1\}^n :|w|_1 \geq k \}$,
\item  
the language $\dyck{2n}$ of all length $2n$ Dyck words, 
\item
the divisibility language $\Ldiv{n}{p} $ of all binary numbers with $n$ bits that are divisible by $p$, 
\item
the parity language $\Leven{n}{k}= \{w \in \Sigma^n: |w|_i \text{ is even for all }i \in \Sigma \}$ over $\Sigma=\{1,\dots,k\}$, and
\item the language $P_n$ of all permutations of $\Sigma=\{1,\dots, n\}$.
\end{itemize}
The results are summarized in \cref{table:results}, for comparison we also list upper bounds on expression length and DFA size. 
These DFAs are more or less trivial, and the upper bounds follow from a simple conversion of the DFAs (see \cref{prop:dfa2regex}). 
All these languages reveal a large gap between DFA size and expression length.
The first three bounds follow from the arithmetic bound (\cref{res:arith}), the other three are derived with the log-product bound (\cref{res:logprod}).

\begin{table}[t]
\setlength{\tabcolsep}{2mm}
\def\arraystretch{1.25}
\begin{center}
\begin{tabular}{c|c|c|c}
language & DFA & \textsf{rpn}, upper bound & \textsf{rpn}, lower bound    \\
\hline 
$\phantom{\Big\{ } \binomial{n}{k} \phantom{\Big\{ }$ & $O(nk)$ & $O(nk \cdot n^{\log (k+1) })$ & $n  k^{\Omega(\log k) }~$ [\cref{cor:binomial}] \\
$\thresh{n}{k}$ & $O(nk)$ & $O(nk \cdot n^{\log (k+1) })$ & $n  k^{\Omega(\log k) }~$  [\cref{cor:threshold}]   \\
$\dyck{2n}$ & $O(n^2)$ & $O(n^{\log n + 3} )$ & $n^{\Omega(\log n) }~$ [\cref{cor:dyck}] \\
$\Ldiv{n}{p}$ & $O( n\:\!p )$ & $O(n\:\!p\cdot p^{\log (n/\log p)})$ & $\Omega( n^{-1} p^{\log (n / \log p) -2 })~$ [\cref{thm:div}]  \\
$\Leven{n}{k}$  & $O(2^k \:\!n)$   & $O(k \:\!2^k  n^{k})$ &  
$   n^{(k-2)} \, k^{ -\Theta(k)}$ [\cref{thm:even}]  \\
$\Lperm{n}$ & $O(2^n)$ & $4^n n^{- (\log n)/4+O(1)}$ & $\Omega(4^n n^{- (3+\log n)/4})~$ [\cref{thm:perm}]
\end{tabular}
\end{center}
\vspace*{-2mm}
\caption{Upper and lower bounds on expression length and DFA sizes for several languages.}\label{table:results}
\end{table}

In particular, we answer a question from Ellul et al. \cite{ellul2004} asking for the length of optimal expressions for the binomial language $\binomial{n}{k} =\{ w\in \{0,1\}^n :|w|_1=k \}$, namely for $k= n^{\Theta(1)}$, length $n^{\Theta(\log n)}$ is optimal. 
In \cref{cor:uniform} we show  a superpolynomial lower bound for all \emph{uniform} languages that contain sufficiently many words. 
This yields our lower bound for the language $\dyck{2n}$ of all length $2n$ Dyck words. 
The bound for the threshold language $\thresh{n}{k}$ is an easy consequence of our bound for the binomial language.

Ellul et al. \cite{ellul2004} also asked for lower bounds for the language of all (arbitrary length) binary numbers that are divisible by a number $p$. So, our lower bound for the divisibility language $\Ldiv{n}{p}$ answers the ``finite variant'' of this question.
Our lower bound for the parity language  $\Leven{n}{k}$ naturally generalizes the lower bound from Ellul et al. \cite[Thm.\:23]{ellul2004} for the XOR language $\Lxor{n}= \{ w \in \{0,1\}^n : |w|_1 \text{ is even} \}$ to non-binary alphabets.
The bound for the permutation language $\Lperm{n}$ was already shown by Molina Lovett and Shallit \cite{shallit2019}, however, we give an alternative, shorter proof.

\paragraph*{Blow-up of language operations and DFA conversion}
A classical question is: by how much can expression length increase when performing operations like intersection, shuffle (also known as interleaving) or complementation? For infinite languages this problem was solved by Gelade and Neven \cite{gelade12} and Gruber and Holzer \cite{gruber2008ops1,gruber2009ops2}: The blow-up is exponential for intersection and shuffle, and even double-exponential for complementation. 
In contrast, for \emph{finite} languages the blow-up for intersection and shuffle is at most $n^{O(\log n)}$. 
We show that for some languages this is inevitable: There are finite languages $L_1,L_2$ with  expressions of length $O(n)$ such that any expression describing their intersection $L_1 \cap L_2$ or their shuffle $L_1 \shuffle L_2$ requires length $n^{\Omega(\log n)}$.
We prove this as a consequence of our lower bound for the binomial language.

Gruber and Johannsen \cite{gruber2008cc} showed that the blow-up of converting a DFA for a finite language into an expression can be as large as $n^{(\log n)/192}$. 
Ellul et al.\ \cite{ellul2004} showed an upper bound of $n^{\log (n) + O(1)}$ for such conversions (see \cref{prop:dfa2regex-old}), so the lower bound is tight apart from the constant factor in the exponent.
We will improve this factor to $1/4 - o(1)$.
An overview of the blow-ups is given in \cref{table:blowup}.

\newcommand{\RE}{\text{RE}}
\begin{table}[t]
\crefname{cor}{Cor.}{Cor.}
\crefname{prop}{Prop.}{Props.}
\crefname{thm}{Thm.}{Thms.}
\begin{center}
\begin{tabular}{c|c|c}
conversion & upper bound &  lower bound    \\ \hline 
$\RE \cap \RE,\; \RE \shuffle \RE \to \RE$ &  $n^{O(\log n) } \phantom{\Big\{ }$ & $n ^{\Omega(\log n) }~$ [\cref{thm:ops}]  \\
DFA,\;NFA $\to$ RE  & $n^{\log(n) +O(1)}$ \cite{ellul2004,gruber2008cc} & $n^{(\log n)/4 - \Theta( \log \log n)}$ [\cref{cor:conversion}]  \\
\end{tabular}
\end{center}
\vspace*{-2mm}
\caption{Blow-ups of operations and conversion for finite languages (assuming an alphabet of size  at most $n^{O(1)}$).} \label{table:blowup}
\end{table}

\paragraph*{Non-homogeneous and infinite languages}
All results mentioned so far are obtained for \emph{homogeneous} languages (i.e., all words have the same length). 
For non-homogeneous and even infinite languages we show that their expression length is bounded from below by the expression length of their lower and higher \emph{envelopes}, that is, their sublanguages of all shortest resp.\ all longest words, see \cref{lem:envelopes}.

By this lemma our lower bounds for the binomial language $\binomial{n}{k}$ and the permutation language~$\Lperm{n}$ carry over to their infinite variants
\begin{itemize}[itemsep=0pt]
\item 
$\binomialinf{n}{k}  = \{w \in \{0,1\}^* : |w|_1 \geq k, |w|_0 \geq n-k \}$ and
\item 
$\Lperminf{n} = \{w \in [n]^* : |w|_a \geq 1 ~\text{for all } a \in [n] \hspace*{1pt} \}$.
\end{itemize}
This is particularly interesting since the previous known lower bound methods for infinite languages (relying on star height) do not work for these languages.

\subsection{Organization} \label{sec:orga}

In the next section (\cref{sec:preliminaries}) we recall basic concepts of languages and expressions, show how to transform automata into  expressions and introduce monotone arithmetic formulas.
In \cref{sec:arith} we prove our arithmetic bound (\cref{res:arith}) and discuss some properties.
We then apply the arithmetic bound to uniform languages (\cref{sec:uniform}), investigate the blow-up of language operations (intersection and shuffle) for finite languages (\cref{sec:ops}) and address limitations (\cref{sec:limits}).
In \cref{sec:direct} we prove our log-product bound (\cref{res:logprod}) and show some useful factorization properties of log-product expressions (\cref{sec:factor}).
Applications of the log-product bound are demonstrated in \cref{sec:app-bal}, 
namely we show lower bounds for the divisibility language $\Ldiv{n}{p}$ (\cref{sec:div}),
the parity language $\Leven{n}{k}$ (\cref{sec:par}) and the permutation language~$\Lperm{n}$ (\cref{sec:perm}).
In \cref{sec:envelope} we treat non-homogeneous and infinite languages with the help of envelopes.
Finally, we summarize our results and sketch open problems in \cref{sec:conc}.

In order to provide some background regarding the connection between expressions for finite languages and circuit complexity in general, we give an introduction to semirings and circuits in \cref{sec:semirings}. Since these concepts are not actually needed in our proofs, we moved them into the appendix.

\section{Preliminaries} \label{sec:preliminaries}

Throughout, let $\RRpos$ be the set of all nonnegative reals, $\NN=\{0,1,2,\dots\}$ the set of all nonnegative integers and $[n]=\{1,2,\dots, n\}$ the set of the first $n$ positive integers.
The logarithm of base $2$ is denoted by $\log(\cdot)$ and the natural logarithm by $\ln (\cdot)$.
We assume the reader to be familiar with basic concepts of regular languages and only recall some central aspects. An introduction can be found, for example, in \cite{hopcroft2001}.

\subsection{Languages and regular expressions}

An \emph{alphabet}~$\Sigma$ is a finite nonempty set of \emph{letters}, a \emph{word}~$w=w_1 w_2 \cdots w_n$ over~$\Sigma$ is an element of~$\Sigma^n$. As customary, $\Sigma^* = \cup_{i=0}^\infty \Sigma^i$ is the set of all words over~$\Sigma$, including the empty word~$\epsilon$.
A \emph{language} over~$\Sigma$ is a set~$L \subseteq \Sigma^*$.
For a word~$w$ its \emph{length}~$|w|$ is the total number of its letters, and for a  given letter~$a \in \Sigma$ we denote by~$|w|_a$ its number of occurrences in~$w$.
To avoid pathological situations, we assume throughout that our languages satisfy~$L \neq \{\epsilon\}, \emptyset$.

Regular expressions (or just \emph{expressions}) over an alphabet $\Sigma$ are defined recursively as follows.
\begin{itemize}[itemsep=-1pt,topsep=5pt]
    \item The symbols~$\emptyset$ and~$\epsilon$ are expressions, as well as all letters~$a \in \Sigma$.
    \item If~$R$ and~$R'$ are expressions, then so are~$R^*$,~$(R \cdot R')$ and~$(R + R')$.
\end{itemize}
Every expression~$R$ \emph{describes} a regular language~$L(R)$ as follows.
\begin{itemize}[itemsep=-1pt,topsep=5pt]
    \item $L(\emptyset)=\emptyset$, $ L(\epsilon)= \{\epsilon\}$ and $L(a)=\{a\}$ for all $a \in \Sigma$.
    \item $L(R^*)= (L(R))^*$, $ L(R\cdot R') =L(R) \cdot L(R')$ and $L(R + R') = L(R) \cup L(R')$.
\end{itemize}
As customary, we abbreviate concatenations~$R \cdot R'$  by~$RR'$ and omit parenthesis where possible.
Two expressions $R$ and $R'$ are \emph{equivalent}, denoted by~$R \equiv R'$, if they describe the same language.
With a slight abuse of notation, we will sometimes identify expressions with their languages, e.g., $(a+b)$ stands for $\{a,b\}$.
We assume throughout and w.l.o.g that no expression contains the symbol $\emptyset$ and that no expression for a finite language contains a star.
Furthermore, we identify every expression with its \emph{syntax tree}.
For a node~$u$ in the syntax tree of an expression~$R$, the subtree of~$u$ is a \emph{subexpression} of~$R$, denoted by $R_u$.
There are several measures for the \emph{length} of an expression. In this paper, we use the \emph{reverse polish length}~$\rpn{R}$ which is the number of nodes in the syntax tree of~$R$; for an overview of other length measures see \cite{ellul2004}.
For a regular language~$L$ its \emph{expression length} $\rpn{L}$ is the length of a shortest expression that describes~$L$. 

We will mainly deal with homogeneous languages.
A language $L \subseteq \Sigma^*$ is \emph{homogeneous} if all words have the same length, i.e., $L \subseteq \Sigma^n$ holds for some~$n \in \NN$. 
An expression $R$ is \emph{homogeneous} if $R$ describes a homogeneous language.
Since we do not allow the symbol $\emptyset$, every subexpression of a homogeneous expression is homogeneous itself.
The \emph{degree} of $R$, denoted by $\deg R$, is the length of its described words, and the degree of a node $u$ in the syntax tree of $R$ is the degree of its subexpression $R_u$.
For a language $L \subseteq \Sigma^*$ we call the homogeneous language $L \cap \Sigma^n$ its \emph{$n$-slice}. 

\goodbreak
\subsection{Conversions of automata} \label{sec:conversions} 

To bring our lower bounds in context, we compare them with upper bounds as well as with automata for the respective languages.  
For infinite languages, the conversion of an NFA into an expression can cause an exponential blow-up \cite{zeiger76,gelade12,gruber2008ops1}. For finite languages, however, this blow-up is at most quasi-polynomial, as the following proposition tells. We refer to \cite{gruber2015survey} for a comprehensive survey on conversions. 
All results in this subsection given for NFAs hold analogously for DFAs.

\begin{prop}[Conversion for finite languages, {\cite[Cor.\ 22]{ellul2004}} or {\cite[Cor.\ 13]{gruber2008cc}}] \label{prop:dfa2regex-old}
Let $L$ be a finite language over a $k$-letter alphabet accepted by an NFA with $s$ states. Then $L$ can be described by regular expressions of length $$\rpn{L} \leq O \big( k s^3 s^{\log s} \big) \,.$$
\end{prop}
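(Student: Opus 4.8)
The plan is to follow the classical divide-and-conquer recursion on the NFA used in the conversions of \cite{ellul2004,gruber2008cc}. Let $A$ be an NFA with state set $Q$, $|Q|=s$, over a $k$-letter alphabet, accepting a finite language $L$. For states $p,q\in Q$ and a set $S\subseteq Q$, let $L^S_{p,q}$ be the (finite) language of all words $w$ that label a walk in $A$ from $p$ to $q$ all of whose intermediate states lie in $S$. Then $L$ is the union over all final states $q$ of $L^{Q}_{q_0,q}$, so it suffices to build short expressions for the languages $L^S_{p,q}$ and then take a union over at most $s$ accepting states, costing an additive $O(s)$ in length. (Since $L$ is finite, no stars appear and there is no cycle through any state more than once along accepting walks, which is what keeps the recursion finite.)

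The recursion proceeds by splitting $S$ roughly in half. First I would fix an ordering of $Q$ and, for $L^S_{p,q}$ with $|S|=m$, pick the ``median'' state $r$ with respect to this order so that $S=S_1\cupp\{r\}\cupp S_2$ with $|S_1|,|S_2|\leq m/2$. Because $L$ is finite, any accepting walk visits $r$ at most once (otherwise it would contain a cycle through $r$, forcing arbitrarily long accepted words), so
\[
L^S_{p,q} \;=\; L^{S_1\cup S_2}_{p,q} \;\cup\; \bigl(L^{S_1}_{p,r}\cdot L^{S_2}_{r,q}\bigr).
\]
Wait---more care is needed: after passing through $r$ the walk may re-enter states of $S_1$, so the correct decomposition routes the walk as ``$p$ to $r$ using intermediate states in $S_1\cup S_2$'' followed by ``$r$ to $q$ using intermediate states in $S_1\cup S_2$'', i.e.
\[
L^S_{p,q} \;=\; L^{S_1\cup S_2}_{p,q} \;\cup\; \bigl(L^{S_1\cup S_2}_{p,r}\cdot L^{S_1\cup S_2}_{r,q}\bigr),
\]
using finiteness to rule out a second visit to $r$. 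Let $g(m)$ denote the worst-case expression length for $L^{S'}_{p,q}$ over all $p,q$ and all $S'$ with $|S'|\leq m$. The displayed identity gives the recurrence $g(m)\leq 3\,g(m-1)+O(1)$ from peeling one state, or better $g(m)\leq 2\,g(\lceil m/2\rceil)\cdot(\text{concat/union overhead})$; iterating the halving version $\log m$ times and tracking the constant-size gluing gates yields $g(s)\leq s^{O(1)}\cdot s^{\log s}$. Finally, for the base case $S=\emptyset$, the language $L^{\emptyset}_{p,q}$ is a subset of $\Sigma\cup\{\epsilon\}$, describable by an expression of length $O(k)$; this is where the factor $k$ enters.

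The main obstacle---and the only place that truly needs finiteness of $L$---is justifying that each state may be assumed to occur at most once on an accepting walk, so that the union/concatenation decomposition is both correct and terminates without a star. Once that is in place, the bookkeeping is routine: unrolling the recurrence $g(s)=O\bigl(g(s/2)\cdot s^{O(1)}\bigr)$ with base value $g(0)=O(k)$ gives $g(s)=O\bigl(k\,s^{c}\,s^{\log s}\bigr)$ for an absolute constant $c$ (here $c=3$, matching the stated bound), and the outer union over $\leq s$ accepting states is absorbed into this. I would double-check the exponent of $s$ by counting: each of the $\log s$ levels multiplies the running length bound by $O(s)$ from forming $\le s$-fold unions over choices of split state, plus $O(1)$ connective gates, which already accounts for a factor $s^{\log s}$, and the polynomial $s^{3}$ slack comfortably covers the choices of $(p,q)$ and the $O(s)$ final union. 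This reproduces $\rpn{L}\le O\bigl(ks^{3}s^{\log s}\bigr)$.
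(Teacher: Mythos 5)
Your key use of finiteness is right: assuming no useless states, an NFA for a finite language has no cycles, so every accepting walk is a simple path, each state occurs at most once on it, and all words have length at most $s-1$. But the divide-and-conquer is set up on the wrong parameter, and this creates a genuine gap. Your displayed identity
$L^S_{p,q} = L^{S_1\cup S_2}_{p,q} \cup \bigl(L^{S_1\cup S_2}_{p,r}\cdot L^{S_1\cup S_2}_{r,q}\bigr)$
removes only the single state $r$ from $S$: all three subproblems live on $S\setminus\{r\}$, which has $m-1$ states, not $m/2$. So the only recurrence this identity supports is $g(m)\le 3g(m-1)+O(1)$, i.e.\ $g(s)=3^{O(s)}$, and the ``better'' halving recurrence $g(m)\le 2g(\lceil m/2\rceil)\cdot(\cdots)$ that you then iterate is asserted but never derived. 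Choosing $r$ to be the median of an ordering does not help, because a walk may interleave states of $S_1$ and $S_2$ arbitrarily, so you cannot confine the two pieces of the walk to $S_1$ and $S_2$ separately. Your final accounting also does not match your own decomposition: you speak of ``$\le s$-fold unions over choices of split state,'' but in your recursion the split state $r$ is fixed, not unioned over. That union over $s$ intermediate states belongs to a different recursion.

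The recursion that actually yields $s^{\log s}$ — the one used in the cited sources and in the paper's own proof of the layered-automaton version (\cref{prop:dfa2regex}) — halves the \emph{word length}, not the state set. Since every word of $L$ has length $d\le s-1$, define $R^{x\to y}_d$ describing the length-$d$ words leading from $x$ to $y$, and split each such word at its midpoint: $R^{x\to y}_d=\sum_{q\in Q} R^{x\to q}_{\lfloor d/2\rfloor}\cdot R^{q\to y}_{\lceil d/2\rceil}$, with a union over all $s$ possible midpoint states $q$ (this is where the $s$-fold union genuinely occurs). This gives $T(d)\le s\bigl(T(\lfloor d/2\rfloor)+T(\lceil d/2\rceil)\bigr)$ with $T(1)\le k$, hence $T(d)=O(kds\cdot s^{\log d})$, and summing over the at most $s$ word lengths and final states yields $O(ks^3 s^{\log s})$. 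If you want to keep a state-set recursion, you would need a topological order and a decomposition in which a path crosses from the first half of the order to the second half exactly once — a different and more delicate argument than the one you wrote.
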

This conversion can be refined for the subclass of ``layered'' automata, and in particular for  homogeneous languages.
Call an NFA for a finite language \emph{layered} if there is a partition  $Q_0 \cup Q_1 \cup  \dots \cup Q_n$ of its states such that every word of length~$j$ leads to a state $q\in Q_j$ when given into the automaton. 
We call $Q_j$ the $j$-th \emph{layer}. 
Note that in a layered automaton only transitions between neighboring layers $Q_j$ and $Q_{j+1}$ are allowed. 
For a layered automaton define its \emph{width} as $\omega= \max_{0 \leq j \leq n} |Q_j|$ and its \emph{length} as the length $n$ of its longest path.

Every NFA\footnote{We assume that there are no ``useless'' states from which no path to a final state exists.} for a \emph{homogeneous} language is layered since every path leading from its initial state to a final state has length exactly~$n$; and in this case one final state always suffices.
In particular, given an NFA~$A$ with~$s$ states for a language $L$, we obtain a layered NFA for its $n$-slice $L\cap \Sigma^n$ by constructing the product automaton of $A$ and the trivial DFA for $\Sigma^n$; we call it the \emph{$n$-slice NFA} of $A$.\footnote{We implicitly collapse equivalent states; in particular, any slice automaton has only one initial and one final state.} This automaton has at most~$sn$ states, width~$s$ and length~$n$.

The following proposition tells us how to transform a layered automaton into an expression.\footnote{In fact, a layered automaton is nothing else than a layered \emph{branching program} over the \emph{free semiring} $(+, \Cdot)$ (see \cref{sec:semirings}) and our conversion follows the standard simulation of branching programs by formulas.}
\begin{prop}[Conversion of layered automata] \label{prop:dfa2regex}
Let $L$ be a finite language over a $k$-letter alphabet accepted by a layered NFA of width $\omega$ and length $n$ with $f$ final states. Then $L$ can be described by regular expressions of length
$$\rpn{L} \leq  O\big(  fk n  \omega \cdot \omega^{\log n} \big)\,.$$
\end{prop}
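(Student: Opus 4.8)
The plan is to simulate the layered NFA by a regular expression using the standard divide-and-conquer strategy on the layers, in the spirit of the branching-program-to-formula simulation. Fix a layered NFA $A$ of width $\omega$ and length $n$ with layers $Q_0, \dots, Q_n$, one initial state and $f$ final states. For states $p \in Q_i$ and $q \in Q_j$ with $i \le j$ I would write $L_{p,q}^{[i,j]}$ for the (finite, homogeneous) language of all words of length $j-i$ that lead from $p$ to $q$ within the sub-automaton spanned by layers $Q_i, \dots, Q_j$, and let $R_{p,q}^{[i,j]}$ denote a shortest expression for it. The target language $L$ is then the union $\sum_{q \in F} R_{q_0,q}^{[0,n]}$ over the $f$ final states $q$, so it suffices to bound each $\rpn{R_{p,q}^{[i,j]}}$.

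The recursion splits the interval $[i,j]$ at its midpoint $m = \lfloor (i+j)/2 \rfloor$: any word from $p$ to $q$ passes through exactly one state $r \in Q_m$, whence
$$L_{p,q}^{[i,j]} = \bigcup_{r \in Q_m} L_{p,r}^{[i,m]} \cdot L_{r,q}^{[m,j]}\,,$$
which gives an expression using at most $\omega$ concatenations and $\omega - 1$ unions on top of the $2\omega$ recursively built subexpressions. Writing $S(\ell)$ for the worst-case length of an expression of this form over an interval of length $\ell = j - i$, this yields the recurrence $S(\ell) \le 2\omega \cdot S(\ell/2) + O(\omega)$, with base case $S(1) \le 1$ (a single letter), $S(0) \le 1$ (the empty word $\epsilon$). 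Unrolling over the $\lceil \log n \rceil$ levels of recursion gives $S(n) = O\big(\omega^{\log n} \cdot \omega\big) = O\big(\omega^{1+\log n}\big)$ — more precisely $S(\ell) \le c\,\omega^{1+\lceil \log \ell\rceil}$ for a suitable constant $c$, which is verified by induction. Multiplying by the $f$ final states in the outermost union contributes a factor $f$; the alphabet size $k$ and the length $n$ enter only as polynomial overhead absorbed in the $O(\cdot)$ (in a careful count, $k$ appears because base-case languages $L_{p,q}^{[i,i+1]}$ may contain up to $k$ distinct letters, costing $O(k)$ per leaf, and there are $O(n \omega)$ leaves, while the $O(\omega)$ glue per internal node summed over $O(n\omega)$ internal nodes across all intervals contributes the factor $n$). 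Collecting these factors yields $\rpn{L} \le O\big(f k n \omega \cdot \omega^{\log n}\big)$, as claimed.

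The only real subtlety is the bookkeeping that produces the polynomial factors $f k n$ cleanly rather than as a larger power of $\omega$ or $n$; in particular one must be slightly careful that the intervals $[i,m]$ and $[m,j]$ in the split are genuinely shorter than $[i,j]$ even when $j - i$ is odd (taking $m = \lfloor (i+j)/2 \rfloor$ with $i < j$ guarantees $m - i < j - i$ and $j - m < j - i$), and that base cases of length $0$ and $1$ are handled without generating spurious $\emptyset$ symbols. I expect the main obstacle to be purely notational — keeping the double-indexed family $R_{p,q}^{[i,j]}$ and its size recurrence organized — rather than conceptual, since the divide-and-conquer simulation itself is routine.
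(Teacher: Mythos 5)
Your approach is the same divide-and-conquer simulation the paper uses: split the layer interval at its midpoint, take a union over the at most $\omega$ states in the middle layer, recurse on the two halves, and finally take a union over the $f$ final states. (One small point: since $L$ need not be homogeneous, final states may occur in any layer, so the outer union is over $R^{i\to q}_{j}$ for $q\in F\cap Q_j$ and all $j$; this does not change the count.)

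There is, however, a concrete error in your solution of the recurrence. The closed form $S(\ell)\le c\,\omega^{1+\lceil\log\ell\rceil}$ does \emph{not} satisfy your own recurrence $S(\ell)\le 2\omega\,S(\ell/2)+O(\omega)$: the induction step yields $2\omega\cdot c\,\omega^{\lceil\log\ell\rceil}=2c\,\omega^{1+\lceil\log\ell\rceil}$, and the extra factor $2$ per level accumulates to $2^{\log\ell}=\ell$. The correct solution is $S(\ell)=O\bigl(k\,\ell\,\omega^{1+\log\ell}\bigr)$, with base case $S(1)\le k$ (two adjacent states may be joined by up to $k$ differently labelled transitions, so a leaf is a union of up to $k$ letters, not a single letter). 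This is in fact where the factors $k$ and $n$ in the final bound come from; your later attribution of the factor $n$ to ``$O(\omega)$ glue summed over $O(n\omega)$ internal nodes'' is not right, since that contribution is additive and of order $n\omega^{2}$, whereas the $n$ in $O(fkn\omega\cdot\omega^{\log n})$ is the multiplicative $2^{\log n}$ produced by the branching factor $2\omega$. Your final bound is correct, and once the recurrence is solved properly your argument coincides with the paper's, which tracks $T(d)=\max_{x,y}\leaf{R^{x\to y}_d}$ via $T(d)\le\omega\bigl(T(\lfloor d/2\rfloor)+T(\lceil d/2\rceil)\bigr)$ with $T(1)\le k$ and the solution $T(d)\le O(kd\omega\cdot\omega^{\log d})$.
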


Note the difference in the bounds given by the two propositions above: The number of states of a layered automaton of width $\omega$ and length $n$ is at least $\omega+n$ and can be as large as roughly $\omega n$. 
So, for example, if the width~$\omega$ is constant, then \cref{prop:dfa2regex} yields a polynomial bound $k n^{O(1)}$ while \cref{prop:dfa2regex-old} yields only a superpolynomial bound $k n^{O( \log n)}$. 
\begin{proof}
We use a standard idea which can be found, for example, in \cite{zeiger76} or \cite{ellul2004}.
Let $A$ be an NFA for $L$, let $Q=Q_0 \cup Q_1 \cup \dots \cup Q_n$ be its set of states, $i \in Q_0$ be its initial state and $F\subseteq Q $ be its set of final states.
We recursively construct an expression $R^{x \to y}_d$ that describes all words of length $d$ that lead from state $x$ to state $y$. For $\ell \in [n]$ and $ x \in Q_\ell$ define
\begin{align}
R^{x\to y}_d= \!\! \sum_{q \in Q_{\ell+\lfloor d/2 \rfloor}}  R^{x\to q}_{\lfloor d/2 \rfloor } \cdot R^{q \to y}_{\lceil d/2\rceil } \label{eq:rec}
\end{align} for $d\geq 2$, while $R^{x\to y}_1$ is given directly by the transitions of the automaton.
Finally, the expression $R  \isdef  \sum_{j=0}^n \sum_{q\in F \cap Q_{j}} R^{i \to q}_{j}$ describes the language $L$.

For an expression $R'$ let $\leaf{R'}$ be the number of leaves in the syntax tree of $R'$, and let $T(d) = \max_{x,y} \big\{ \leaf{ R^{x\to y}_d } \big\}$. Then \cref{eq:rec} gives the recursion 
$$T(d) \leq \omega \cdot \big(\;\!T(\lfloor d/2\rfloor ) +T(\lceil d/2 \rceil) \;\! \big)$$ for $d\geq 2$, and $T(1) \leq k$. 
Solving this recursion leads to $T(d) \leq  O( k d\omega \cdot \omega^{\log d} )$ (see \cite[Lem.\ 21]{ellul2004}).
Clearly, $\rpn{R} = O(\leaf{R})$ holds and we get $\rpn{R} \leq  O( f \cdot T(n)) =  O( fk n \omega \cdot \omega^{\log n} )$.
\end{proof}

\begin{ex}[Divisibility language] \label{ex:div}
Let~$p$ be an odd integer and consider the language $L^{\textsf{div}}_p \subseteq \{0,1\}^*$ of all binary numbers that are divisible by~$p$.
This language can be accepted by a DFA with states~$Q=\{0,1,\dots, p-1\}$, initial and final state~$0$, and transitions $\delta(q,a) = (2q+a) \bmod p$ for all~$q\in Q$ and $a \in \{0,1\}$.
The $n$-slice of this language is $\Ldiv{n}{p}  =  L \cap \{0,1\}^n$ and the $n$-slice DFA has width $p$ and length $n$.
Thus, \cref{prop:dfa2regex} implies that $\Ldiv{n}{p}$ has expressions of length~$O(np \cdot p^{\log n})$. In \cref{sec:div} we will improve this upper bound to $O(np \cdot p^{\log (n/ \log p)})$
 and give an almost matching lower bound.
\end{ex}

\subsection{Monotone arithmetic formulas} \label{sec:formulas}

In this subsection we briefly treat monotone arithmetic formulas. 
A more detailed exposition of monotone arithmetic \emph{circuits} is given in \cref{sec:semirings}.

Given a number $n \in \NN$, a monotone arithmetic formula of the variables $x_1, \dots, x_n$ is a rooted tree with leaves holding either one of the variables $x_i$ or a constant $ c \in \RRpos$. 
Every inner node (a \emph{gate}) performs one of the operations addition ($+$) or multiplication ($\times$).
The \emph{size} of a formula is the number of its nodes.
Every formula computes a polynomial 
$$f(x_1, \dots, x_n)= \sum_{a \in A} \const{a} \prod_{i=1}^n x_i^{a_i}
$$ over $\RRpos$ in a natural manner, where  $A\subseteq \NN^n$ is its (finite) set of \emph{exponent vectors} and $\const{a} \in \RRpos$ are positive coefficients.
We say that the formula \emph{produces} the set $A$.

For a given set $A\subseteq \NN^n$ denote by $\arithf{A}$ the size of a smallest formula that produces $A$, i.e., that computes a polynomial whose set of exponent vectors is $A$.
Note that we do \emph{not} require the  formula to compute a polynomial with specific \emph{coefficients}, we are only interested in its \emph{set of monomials}. 
This is in agreement with almost all lower bounds shown for monotone arithmetic circuit (or formula) complexity (see, e.g., \cite[Rem.\:1]{jukna2016} or \cite{VPvsVNP}).
We identify every vector~$(w_1, w_2, \dots, w_n) \in \NN^n$ with the word $w_1 w_2 \cdots w_n$. 
Thus, the produced set $A \subseteq \NN^n$ can be interpreted as a homogeneous language over the alphabet $\Sigma= \{0,1, \dots, k\}$, with~$k$ being the largest entry of a vector in~$A$. In particular, if $A \subseteq \{0,1\}^n$ (i.e., $A$ is produced by a \emph{multilinear} formula), then $A$ is a language over $\Sigma= \{0,1\}$.

\section{Reducing expression length to monotone arithmetic formula size} \label{sec:arith}

Let $n \geq 1$, $L\subseteq \{0,1\}^n$ be a homogeneous language and $R$ be a homogeneous expression describing~$L$.
In this section, we assume w.l.o.g.\ that $R$ does not contain the symbol $\epsilon$ (in addition to not containing the symbol~$\emptyset$); hence, every leaf in $R$ is a letter.
Ellul et al. \cite[Lem.\:24]{ellul2004} transformed $R$ into a boolean formula for the function $f_L\!: \{0,1\}^n \to \{0,1\}$ such that $f_L(x)=1$ iff $x \in L$.
Namely, they assigned a unique \emph{position} $i \in [n]$ to each leaf of $R$, such that its letter occurs as the $i$-th letter in all words in $L$. For example, in the expression $(a{+}b)(cd{+}ce)$ the positions of the leaves holding the letters $a,b,c,d,e$ are $1,1,2,3,3$, respectively.
The transformation is as follows: Replace each union by OR, each concatenation by AND, and each leaf at position~$i$ holding the letter~$1$ (resp. $0$) by the literal $x_i$ (resp. $\lnot x_i$).
We present a similar transformation of~$R$ into a monotone \emph{arithmetic} formula that produces the set $L$. 

The \emph{arithmetic version of $R$} is the monotone arithmetic formula $\Phi_R$ of the variables $x_1, \dots, x_n$ that is constructed as follows: 
Replace each union node $(+)$ of $R$ by an arithmetic addition gate $(+)$ and each concatenation node $(\Cdot)$ by an arithmetic multiplication gate $(\times)$. 
Replace each leaf holding the letter~$0$ by the constant $1$ and replace each leaf holding the letter $1$ at position~$i$ by the variable $x_i$.
Note that in any case a leaf holding a letter $\sigma \in \{0,1\}$ at position $i$ is replaced by $x_i^\sigma$.
For example, the arithmetic version of the expression $000+011+100$ is the formula $1+ x_2 x_3 + x_1$ which produces the set $\{(0,0,0),(0,1,1),(1,0,0)\}$.

\begin{lem} \label{lem:trans} 
Let $R$ be a homogeneous expression with $L(R) \subseteq \{0,1\}^n$ and $\Phi_R$ be its arithmetic version.
Then  $\Phi_R$ has size at most $\rpn{R}$ and produces the set $L(R)$. 
\end{lem}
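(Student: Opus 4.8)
The plan is to prove both assertions simultaneously by structural induction on the expression $R$, tracking not only the produced set but also the size bound, since the two quantities evolve together under the recursive construction.

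First I would set up the induction. The claim to prove is: for every homogeneous expression $R$ with $L(R)\subseteq\{0,1\}^n$, its arithmetic version $\Phi_R$ satisfies (a) $\Phi_R$ produces the set $L(R)$, and (b) the size of $\Phi_R$ is at most $\rpn{R}$. Here I identify a word $w=w_1\cdots w_n\in\{0,1\}^n$ with its exponent vector $(w_1,\dots,w_n)$, so ``produces the set $L(R)$'' means the polynomial computed by $\Phi_R$ is $\sum_{w\in L(R)}\const{w}\prod_i x_i^{w_i}$ for some positive coefficients $\const{w}$.

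For the base case, a leaf of $R$ holds a letter $\sigma\in\{0,1\}$ at some position~$i$; by construction it is replaced by $x_i^\sigma$, i.e.\ by the constant $1$ if $\sigma=0$ and by the variable $x_i$ if $\sigma=1$. In either case $\Phi_R$ is a single node computing the monomial $x_i^\sigma$, whose set of exponent vectors is $\{\sigma\}$ (viewed in the appropriate coordinate), matching $L(R)=\{\sigma\}$, and $\rpn{R}=1$, so (b) holds with equality. For the inductive step there are two cases. If $R=R_1+R_2$, then $\Phi_R=\Phi_{R_1}+\Phi_{R_2}$; by induction each $\Phi_{R_j}$ produces $L(R_j)$, and an addition gate computes the sum of the two polynomials, whose set of exponent vectors is the union $L(R_1)\cup L(R_2)=L(R)$ (positivity of the coefficients guarantees no cancellation, so every monomial survives); the size is $\mathsf{size}(\Phi_{R_1})+\mathsf{size}(\Phi_{R_2})+1\leq\rpn{R_1}+\rpn{R_2}+1=\rpn{R}$. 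If $R=R_1\cdot R_2$, then $\Phi_R=\Phi_{R_1}\times\Phi_{R_2}$; a multiplication gate computes the product of the two polynomials, and since all coefficients are positive there is again no cancellation, so the set of exponent vectors of the product is the sumset $\{a+b:a\in L(R_1),\,b\in L(R_2)\}$. The point is that this sumset is exactly $L(R_1\cdot R_2)$ read as vectors: if $R_1$ has degree $d_1$ and $R_2$ degree $d_2$, then $R_1$ assigns positions $1,\dots,d_1$ and (after a suitable shift) $R_2$ assigns positions $d_1+1,\dots,d_1+d_2$, so concatenation of words corresponds to concatenation of the disjoint blocks of coordinates, which is addition of the (zero-padded) exponent vectors. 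The size bound is again additive plus one, giving $\leq\rpn{R}$.

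The main technical point to handle carefully is the bookkeeping of \emph{positions}: one must argue that the positions assigned to leaves of a subexpression $R_u$ form a consecutive block whose length equals $\deg R_u$, and that in a concatenation $R_1\cdot R_2$ the block for $R_2$ sits immediately after the block for $R_1$. This is precisely what makes the sumset of the two exponent-vector sets coincide with the concatenation of the languages. This relies on homogeneity (every subexpression is homogeneous, as noted in the preliminaries, because $\emptyset$ does not occur) so that degrees are well-defined and additive under concatenation and equal under union. I would state this positional consistency as the substantive content of the induction and then observe that (a) and (b) follow routinely from the three cases above; the only subtlety worth flagging is that positivity of coefficients in a monotone arithmetic formula is exactly what prevents monomials from cancelling, so the produced set behaves like a set union (for $+$) and a set sumset (for $\times$) rather than something smaller.
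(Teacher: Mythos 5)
Your proof is correct, but it takes a different route from the paper's. The paper does not induct on the structure of $R$: instead it first applies the distributive law to bring both $R$ and $\Phi_R$ into a ``sum--product normal form'' with all union/addition nodes at the top (noting that this changes neither $L(R)$ nor the produced set), and then observes that each resulting singleton subexpression describing a word $w$ corresponds to a subformula computing the monomial $x^w$, so the whole formula is a sum $\sum_{w\in L(R)}\const{w}x^w$ with $\const{w}$ counting multiplicities. Your direct structural induction proves the same correspondence gate by gate, and in doing so you have to make explicit the positional bookkeeping (each subexpression occupies a consecutive block of positions of length equal to its degree, unions preserve the block, concatenations juxtapose disjoint blocks so that the Minkowski sum of exponent-vector sets equals concatenation of languages) --- precisely the point the paper's normal-form argument sidesteps by only ever reasoning about full-length singletons. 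The trade-off: the paper's proof is shorter and pushes the combinatorics into a single rewriting step whose compatibility with the $R\mapsto\Phi_R$ translation is asserted rather than verified, while your induction is more self-contained and makes the role of homogeneity (equal degrees under union, additive degrees under concatenation) and of coefficient positivity (no cancellation, so $+$ acts as set union and $\times$ as sumset) explicit. Both are sound; yours is arguably the more careful write-up of the same underlying correspondence.
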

That is, the arithmetic formula $\Phi_R$ computes a polynomial $\sum_{a \in A} \const{a} \prod_{i=1}^n x_i^{a_i}$ with $A=L(R)$ and some coefficients $\const{a}> 0$.
\begin{proof}
The claim for size is trivial. To show the claim for the produced set, we first bring both $R$ and $\Phi_R$ in ``sum-product normal form'', that is, we move all union or addition nodes to the top by iteratively applying the distributive law: replace $u \cdot (v + w)$ by $ u  v + uw$ and $ (u + v) \cdot w $ by $u w + v w$.
By this procedure, the language $L(R)$ described by $R$ and both the computed polynomial and the produced set of $\Phi_R$ do not change. Hence, it suffices to show the claim for these modified versions of $R$ and~$\Phi_R$.

For the rest of the proof assume that all union nodes in $R$ and all addition gates in $\Phi_R$ are at the top.
Call a subexpression of $R$ describing a single word of length $n$ a \emph{singleton}. 
Hence, the expression $R$ is a union over all words $w \in L(R)$ (possibly with repetitions) of singletons, each describing one of the words $w$.
By construction of $\Phi_R$, every singleton describing $w=w_1  \cdots w_n$ is replaced by a subformula that computes the monomial $x^w \isdef x_1^{w_1} \cdots x_n^{w_n}$.
The entire formula $\Phi_R$ is a sum over all monomials $x^w$ for $w \in L(R)$, that is, it computes the polynomial $f(x) = \sum_{w \in L(R)} \const{w} x^w $, where the constant $\const{w}>0$ denotes the number of singletons in $R$ that describe the same word $w$.
Hence, $\Phi_R$ produces $L(R)$, as desired.
\end{proof}

Recall that $\arithf{L}$ is the size of a smallest monotone arithmetic formula that produces $L$.
\cref{lem:trans} directly yields the following theorem.
\begin{thm}[Arithmetic bound] \label{thm:arith}
Let $L \subseteq \{0,1\}^n$ be a homogeneous language. Then any regular expression describing $L$ has length at least
$$\rpn{L} \geq \arithf{L}\,.$$
\end{thm}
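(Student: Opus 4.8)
The plan is to derive \cref{thm:arith} immediately from \cref{lem:trans}, which has already done all the work. Given a homogeneous language $L\subseteq\{0,1\}^n$, I would take any regular expression $R$ describing $L$; since finite languages need no star and (as noted at the start of \cref{sec:arith}) we may assume $R$ contains neither $\emptyset$ nor $\epsilon$, every leaf of $R$ holds a letter. First I would apply \cref{lem:trans} to obtain the arithmetic version $\Phi_R$, which is a monotone arithmetic formula of size at most $\rpn{R}$ that produces the set $L(R)=L$. By definition of $\arithf{L}$ as the size of a \emph{smallest} formula producing $L$, this gives $\arithf{L}\le \size{\Phi_R}\le \rpn{R}$.

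Finally I would take the minimum over all expressions $R$ describing $L$: since $\rpn{L}=\min_R \rpn{R}$, the chain $\arithf{L}\le \rpn{R}$ holding for every such $R$ yields $\arithf{L}\le \rpn{L}$, which is the claimed bound $\rpn{L}\ge\arithf{L}$.

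There is essentially no obstacle here — the content of the theorem lives entirely in \cref{lem:trans}, whose proof (normalizing to sum-of-products form, matching singletons of $R$ with monomials of $\Phi_R$) is already spelled out in the excerpt. The only point worth a sentence of care is the reduction to expressions with all leaves being letters: one should note that removing occurrences of $\epsilon$ and $\emptyset$ from an expression for a nonempty homogeneous language $L\neq\{\epsilon\}$ does not increase $\rpn{}$ and does not change the described language, so it is safe to assume $R$ has this form before invoking \cref{lem:trans}.
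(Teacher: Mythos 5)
Your proposal is correct and matches the paper exactly: the paper also derives \cref{thm:arith} directly from \cref{lem:trans} by noting that the arithmetic version $\Phi_R$ of any expression $R$ for $L$ has size at most $\rpn{R}$ and produces $L$, after the same w.l.o.g.\ removal of $\epsilon$ and $\emptyset$. Nothing further is needed.
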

We make some remarks before turning to the applications.

\begin{rem} \label{rem:note}
Note that lower bounds on the size of a monotone arithmetic formula yield bounds for \emph{two different} related languages.
As a simple example, take the monotone arithmetic formula $f = 1 + x_2 x_3 + x_1$.
Then lower bounds on the formula size of $f$ carry over to the expression length of two different languages $L$ and $L'$, namely
\begin{itemize}[itemsep=0pt]
\item the language $L = \{ \epsilon, x_2 x_3 , x_1 \}$ of all \emph{monomials} of $f$  over the alphabet $\Sigma=\{ x_1, x_2, x_3 \}$, and 
\item the language $L' = \{000, 011, 100 \}$ of all \emph{exponent vectors} of $f$ over the alphabet $\Sigma=\{0,1\}$.
\end{itemize}
The first claim is long known, see for example \cite{jerrum1982}, and we already mentioned it in \cref{sec:related}.
The second claim is our arithmetic bound \cref{thm:arith}.
\end{rem}

\begin{rem}[Invariance under permutations] \label{rem:perm}
Since arithmetic operations $+$ and~$\times$ are commutative, the order of the variables does not matter, i.e., reordering the variables $x_1, \dots, x_n$ in a polynomial does not change its arithmetic complexity.
For a language $L \subseteq \{0,1\}^n$ and a permutation $\sigma \!:[n] \to [n]$ define $\sigma(L)  \isdef  \{w_{\sigma(1)}  \cdots w_{\sigma(n)} : w_1 \cdots w_n \in L\}$.
Then $\arithf{L} = \arithf{ \sigma(L) }$ holds, and all bounds shown with \cref{thm:arith} for $L$ also hold for the permuted language $\sigma(L)$. 

On the one hand, this is a nice feature; on the other, it limits the possibilities for application.
In \cref{sec:limits} we address this issue in detail.
\end{rem}

\begin{rem}[Arithmetic vs.\ boolean complexity] \label{rem:arithvsbool}

Every language $L \subseteq \{0,1\}^n$ also defines a monotone \emph{boolean} function $f^{\textsf{mon}}_L : \{0,1\}^n \to \{0,1\}$ with $f^{\textsf{mon}}_L(x) = 1 $ iff there is a word $y \in L$ such that $x\geq y$ holds componentwise.
Let $\boolf{L}$ be the size of a smallest monotone boolean formula that computes the function $f^{\textsf{mon}}_L$. 
Gruber and Johannsen \cite{gruber2008cc} showed that the expression length of a (monotone) language $L \subseteq \{0,1\}^n$ is bounded from below by the monotone boolean formula size of $f^{\textsf{mon}}_L$, that is, they showed $\rpn{L} \geq \boolf{L}$ (see \cref{sec:reduction-mon}).

\Cref{thm:arith} strictly improves on this bound.
It is well known that
$$\boolf{L} \leq \arithf{L} $$ holds for every language $L$ (see, e.g., \cite{jukna2015}).
We give an example for an exponential gap between these complexities which is due to Jukna \cite{jukna16count}.

Consider the complete bipartite $n\times n$ graph $K_{n,n}$ and let $U,V$ be its two sets of vertices.
A \emph{quasi matching} is a subgraph of $K_{n,n}$ that is constructed by picking an incident edge for each vertex $u \in U$ and each vertex $v\in V$ and then take the union of these two sets.
Let $\QM$ be the set of all quasi matchings, each viewed as characteristic vector of its set of edges.
Then the lower bound $\arithf{\QM} = 2^{\Omega(n)}$ is known \cite[Sect.\:3.1]{jukna16count} (this holds even for circuits).
On the other hand, we have $\boolf{\QM} \leq O(n^2)$:
For an edge $(u,v) \in U\times V$ let $x_{u,v}$ be the corresponding variable. 
Then the boolean function corresponding to $\QM$ can be computed by the following monotone boolean formula of size $O(n^2)$.
\begin{align*}
f^{\textsf{mon}}_{\QM} (x) 
= \left( \bigwedge_{u \in U} \bigvee_{v' \in V} x_{u,v'} \right) \land \left( \bigwedge_{v \in V} \bigvee_{u' \in U} x_{u',v} \right)  \,.
\end{align*}
\end{rem}

\goodbreak
\begin{rem}[Formula	vs.\ circuit complexity] \label{rem:circuits}
Let us emphasize the use of formula size over circuit size in \cref{thm:arith}. For a language $L\subseteq \{0,1\}^n$ let $\arith{L}$ denote the size of a smallest monotone arithmetic circuit producing $L$; for a proper introduction of circuits see \cref{sec:semirings}.
Since every formula is a circuit, the inequality stated in \cref{thm:arith} still holds if we replace formula size $\arithf{L }$ by circuit size $\arith{L}$.
However, with circuit size it is not possible to show superpolynomial blow-ups between finite automata and expressions since automata can be simulated by circuits.

Let $L \subseteq \{0,1\}^n$ be a language that can be accepted by a layered NFA (as defined in \cref{sec:conversions}).
From this NFA we obtain a \emph{monotone algebraic branching program} (MABP) of same size  via the following transformation:
Replace all edges between two neighboring layers $Q_{j-1}$ and $Q_{j}$ labeled with the letter~$0$ (resp. the letter~$1$) by the constant~$1$ (resp. by the variable~$x_j$).
The resulting MABP then describes a polynomial $f(x)= \sum_{a \in L} \const{a} \prod_{i=1}^n x_i^{a_i}$ with $L$ as set of exponent vectors.
By a standard construction, this MABP can be transformed into a monotone arithmetic circuit of polynomial size (see, e.g., \cite{nisan91}).
Thus, if a language $L$ can be described by small NFAs, then the monotone arithmetic circuit complexity of $L$ is also small.

\end{rem}

\section{Applications and limits of the arithmetic bound} \label{sec:app-arith}

\subsection{Uniform languages} \label{sec:uniform}

Recall that a language $L \subseteq \{0,1\}^n$ is \emph{uniform} if all words in $L$ have the same number of ones.
The most basic uniform language is the \emph{binomial language} \[ \binomial{n}{k}= \big\{w \in \{0,1\}^n : |w|_1=k \big\} \]
which was investigated by Ellul et al. \cite{ellul2004}. 
They constructed an expression of length $ n^{O (\log k) }$ (resp. length $O(n \log^k (n))$ if $k$ is a constant) and asked whether its length is optimal. 
Recently, Mousavi \cite{mousavi2017} showed the optimality for $k\leq 3$ by analyzing a linear program derived from the language.
We show that the length is asymptotically optimal also for $k=n^{\Theta(1)}$, using a lower bound shown by \Hrubes\ and Yehudayoff \cite{yehudayoff2011} for the corresponding arithmetic formula complexity. 
Note that the binomial language can be accepted by a DFA of width $k\!+\!1$ and length $n$ (see \cref{fig:binom}), and thus, also \cref{prop:dfa2regex} implies an upper bound of $\rpn{ \binomial{n}{k}} \leq O( k n^{1+\log (k+1)})$.
\begin{cor}[Binomial language] \label{cor:binomial} 
Let $n,k \in \NN$ with $k\leq n/2$. Then the binomial language $\binomial{n}{k}$ requires regular expressions of length $\rpn{ \binomial{n}{k} } \geq n k^{\Omega(\log k)}$.
\end{cor}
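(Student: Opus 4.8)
The plan is to invoke the arithmetic bound (\cref{thm:arith}), which reduces the task to lower bounding $\arithf{\binomial{n}{k}}$, the smallest monotone arithmetic formula that produces the set $\binomial{n}{k} \subseteq \{0,1\}^n$, i.e. computes a polynomial whose monomials are exactly the squarefree degree-$k$ monomials in $x_1, \dots, x_n$ (with arbitrary positive coefficients). This is precisely the elementary symmetric polynomial $e_{n,k}(x_1,\dots,x_n)$ up to coefficients, and the hard part will be to cite (or reprove) the right arithmetic formula lower bound for it. \Hrubes\ and Yehudayoff \cite{yehudayoff2011} proved that any monotone arithmetic formula computing a polynomial with the monomial set of $e_{n,k}$ has size $n k^{\Omega(\log k)}$ when $k \leq n/2$; in fact their bound is stated for the "set of length-$k$ subwords" language $L_{n,k}$, which under the identification of words with exponent vectors is the same monomial set as $\binomial{n}{k}$ (a length-$k$ subword $i_1 < \dots < i_k$ corresponds to the $0/1$ vector with ones in positions $i_1,\dots,i_k$). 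So the main step is to observe this identification and apply their result verbatim.

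Concretely, I would first note that $\binomial{n}{k}$, viewed as a set of exponent vectors, is the monomial set of the elementary symmetric polynomial of degree $k$. Then I would recall the lower bound technique behind \cite{yehudayoff2011}: every monotone arithmetic formula of size $s$ can be written as a sum of at most $s$ "log-product" subformulas, each of which factors as a product of $\Omega(\log(\deg))$ nontrivial factors on disjoint variable sets; such a product can only contribute monomials that are highly "decomposable", and a counting argument (analogous to the one sketched for our own \cref{res:logprod}) shows that the number of degree-$k$ squarefree monomials covered by one such product is at most $|\binomial{n}{k}| \cdot k^{-\Omega(\log k)}$. Dividing $|\binomial{n}{k}| = \binom{n}{k}$ by this quantity gives $s \geq n k^{\Omega(\log k)}$ (the extra factor of $n$ comes from the base of the product having degree up to $n$; see the discussion following \cref{res:logprod}). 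Combining with \cref{thm:arith} yields $\rpn{\binomial{n}{k}} \geq \arithf{\binomial{n}{k}} \geq n k^{\Omega(\log k)}$.

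The main obstacle is that the cited lower bound of \cite{yehudayoff2011} is for \emph{multilinear} formulas, and one must check that it transfers to \emph{monotone} formulas producing the same monomial set (with possibly different positive coefficients): here monotonicity is actually helpful, since over the nonnegative reals there is no cancellation, so a monotone formula computing a polynomial with monomial set $\binomial{n}{k}$ really does "see" all $\binom{n}{k}$ monomials, and the structural/counting argument applies directly. A secondary point to be careful about is the hypothesis $k \leq n/2$, which is exactly the regime in which \cite{yehudayoff2011} establishes the $k^{\Omega(\log k)}$ factor; for $k > n/2$ one could pass to the complement language, but the statement restricts to $k \leq n/2$ so no such maneuver is needed. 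I expect the write-up to be short: identify the monomial set, cite \cite{yehudayoff2011} (and \cref{thm:arith}), and conclude.
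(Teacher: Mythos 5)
Your proposal is correct and follows essentially the same route as the paper: identify $\binomial{n}{k}$ with the set of exponent vectors of the elementary symmetric polynomial $f_{n,k}$, invoke the \Hrubes--Yehudayoff lower bound of $nk^{\Omega(\log k)}$ on the size of any monotone arithmetic formula producing that set, and conclude via \cref{thm:arith}. Your worry about multilinearity is unnecessary since \cite[Thm.\ 1\,(a)]{yehudayoff2011} is stated directly for monotone formulas computing any polynomial with that monomial set, which is exactly what \cref{thm:arith} requires.
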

\begin{proof}
 The \emph{elementary symmetric polynomial} $f_{n,k}=f_{n,k}(x_1, \dots, x_n)=\sum_{i_1 < i_2 < \dots< i_k} x_{i_1} x_{i_2} \cdots x_{i_k} $ has exactly the words of the binomial language $\binomial{n}{k}$ as its exponent vectors. In \cite[Thm.\ 1\,(a)]{yehudayoff2011} \Hrubes\ and Yehudayoff showed that any monotone arithmetic formula computing a polynomial with the same set of exponent vectors as $f_{n,k}$ has size at least $n k^{\Omega(\log k)}$.
By \cref{thm:arith} the same bound holds for the expression length of $\binomial{n}{k}$.
\end{proof}

\begin{figure}[t]
\begin{center}
\begin{tikzpicture}[xscale=1,semithick]
\node[state,initial] (0) at(0,0) {0};
\node[state]  (1) at (2,0) {1};
\node[state]  (2) at (4,0) {2};
\node (3) at (5.5,0) {\large \dots};
\node[state,accepting] (5) at (7,0) {$k$};

\draw (0) edge[->, bend left=00]  node[above,scale=0.9] {1} (1);
\draw (1) edge[->, bend left=00]  node[above,scale=0.9] {1} (2);
\draw (3) edge[->, bend left=00]  node[above,scale=0.9] {1} (5);
\draw (2) edge[->, bend left=00]  node[above,scale=0.9] {1} (3);

\draw (1) edge[->,loop above,yscale=0.9,xscale=1.20]  node[above,scale=0.9] {0} (1);
\draw (2) edge[->,loop above,yscale=0.9,xscale=1.20]  node[above,scale=0.9] {0} (2);
\draw (5) edge[->,loop above,yscale=0.9,xscale=1.20]  node[above,scale=0.9] {0} (5);
\end{tikzpicture}
\vspace*{-2mm}
\end{center}
\caption{A (partial) DFA accepting the language of all binary words with exactly $k$ ones. 
Thus, the associated $n$-slice DFA accepts the binomial language $\binomial{n}{k}$ and has width $k+1$ and length $n$.}\label{fig:binom}
\end{figure}
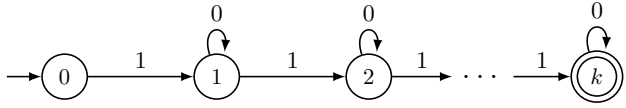

\begin{rem} \label{rem:threshold}
The corresponding monotone boolean function to $\binomial{n}{k}$ is the \emph{threshold function} $\tau_{n,k} : \{0,1\}^n \to \{0,1\}$ that outputs $1$ iff there are at least $k$ ones among $x_1, \dots ,x_n$.
Valiant \cite{valiant84} has shown that this function can be computed by monotone boolean formulas of size $O(n^{5.3})$. So, by the boolean methods (see \cref{sec:related}) only lower bounds of polynomial size can be obtained for $\binomial{n}{k}$.
\end{rem}

The following corollary tells us, that \emph{any} uniform language requires long expressions, provided that it contains sufficiently many words.
\begin{cor}[Uniform languages] \label{cor:uniform}
Let $k\leq n/2$ and $L\subseteq \binomial{n}{k}$ be a uniform language. 
Then $L$ requires regular expressions of length $ \rpn{ L } \geq n k^{\Omega(\log k) } \cdot |L|/\textstyle{\binom{n}{k}}$.
\end{cor}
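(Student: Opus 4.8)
The plan is to invoke the Arithmetic Bound (\cref{thm:arith}), so that it suffices to prove $\arithf{L} \geq n k^{\Omega(\log k)} \cdot |L|/\binom{n}{k}$, and then to amplify the lower bound $\arithf{\binomial{n}{k}} \geq n k^{\Omega(\log k)}$ — which is exactly what the proof of \cref{cor:binomial} establishes (\Hrubes\ and Yehudayoff for the elementary symmetric polynomial) — from the full binomial language down to its subset $L$. The amplification rests on a symmetry observation: $\binomial{n}{k}$ is invariant under the natural action of the symmetric group $S_n$ permuting the $n$ coordinates, this action is transitive on $\binomial{n}{k}$, and by \cref{rem:perm} the arithmetic complexity does not change under permuting coordinates. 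Hence for every $\sigma \in S_n$ the permuted copy $\sigma(L)$ still lies inside $\binomial{n}{k}$, has the same complexity $\arithf{\sigma(L)} = \arithf{L}$, and the same relative density $|\sigma(L)|/\binom{n}{k} = |L|/\binom{n}{k}$.

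Next I would prove a covering lemma by the probabilistic method: there exist $\sigma_1, \dots, \sigma_t \in S_n$ with $t = \lceil \binom{n}{k}/|L| \rceil$ such that $S \isdef \sigma_1(L) \cup \dots \cup \sigma_t(L)$ satisfies $|S| \geq \tfrac12 \binom{n}{k}$. Indeed, for a uniform random $\sigma \in S_n$ and a fixed word $w \in \binomial{n}{k}$ we have $w \in \sigma(L)$ iff $\sigma^{-1}(w) \in L$, and $\sigma^{-1}(w)$ is uniform over $\binomial{n}{k}$ by transitivity, so $\Pr[w \in \sigma(L)] = |L|/\binom{n}{k}$; for independent $\sigma_1, \dots, \sigma_t$ the expected number of words of $\binomial{n}{k}$ left uncovered is then $\binom{n}{k}\,(1 - |L|/\binom{n}{k})^{t} \leq \binom{n}{k}\, e^{-t|L|/\binom{n}{k}} < \tfrac12\binom{n}{k}$, so some choice of the $\sigma_i$ covers at least half of $\binomial{n}{k}$. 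Gluing optimal formulas for the $\sigma_i(L)$ by $t-1$ addition gates gives a formula producing $S$ of size at most $\sum_{i=1}^t \arithf{\sigma_i(L)} + (t-1) \leq t\,(\arithf{L}+1)$; together with $t = O(\binom{n}{k}/|L|)$ this yields $\arithf{L} = \Omega\big(\arithf{S}\cdot|L|/\binom{n}{k}\big)$.

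To finish, it remains to observe that $\arithf{S} \geq n k^{\Omega(\log k)}$ for our set $S \subseteq \binomial{n}{k}$ with $|S| \geq \tfrac12 \binom{n}{k}$; combined with the previous inequality this gives $\arithf{L} \geq n k^{\Omega(\log k)}\cdot|L|/\binom{n}{k}$, and hence the claim via \cref{thm:arith}. This is the \Hrubes--Yehudayoff bound for $f_{n,k}$ used in \cref{cor:binomial}; since that argument is of ``log-product''/partial-derivative type, it already goes through when merely a constant fraction of the monomials of $f_{n,k}$ is present, and I expect checking this robustness to be the one point genuinely requiring care. If one prefers to use the \Hrubes--Yehudayoff bound only as a black box for the set $\binomial{n}{k}$ itself, the fallback is to cover \emph{all} of $\binomial{n}{k}$: a standard coupon-collector estimate shows $t = O\big((\binom{n}{k}/|L|)\ln\binom{n}{k}\big)$ permuted copies suffice, and since $\ln\binom{n}{k} \leq n$ this still yields $\rpn{L} \geq k^{\Omega(\log k)}\cdot|L|/\binom{n}{k}$ — the lost factor $n$ being harmless, as the asserted bound falls below the trivial bound $\rpn{L} \geq 2n-1$ unless $k \to \infty$, in which regime the superpolynomial factor $k^{\Omega(\log k)}$ absorbs the discrepancy.
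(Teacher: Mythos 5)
Your reduction machinery (invariance of $\arithf{\cdot}$ under coordinate permutations, the transitive $S_n$-action on $\binomial{n}{k}$, the probabilistic covering with $t=O\bigl(\binom{n}{k}/|L|\bigr)$ copies, and the union-by-addition-gates step, which is sound because positive coefficients cannot cancel) is all correct, but the argument bottoms out at a lemma you do not prove: that $\arithf{S}\geq n\,k^{\Omega(\log k)}$ for a set $S\subseteq\binomial{n}{k}$ containing merely half of $\binomial{n}{k}$, rather than for the exact exponent-vector set of $f_{n,k}$. You flag this yourself as ``the one point genuinely requiring care,'' and it is in fact the entire content of the corollary: \Hrubes\ and Yehudayoff's Proposition~7 states precisely the density-sensitive bound $n\,k^{\Omega(\log k)}\cdot|A_g|/\binom{n}{k}$ for \emph{any} polynomial $g$ with exponent vectors $A_g\subseteq\binomial{n}{k}$, and the paper's proof is a one-line application of that proposition plus \cref{thm:arith}. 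So your covering argument reduces the general-density case to the constant-density case of a statement that is already known in full generality; as written, the key inequality remains unproven.

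The fallback does not rescue the full claim. Covering all of $\binomial{n}{k}$ costs an extra factor $\ln\binom{n}{k}$, and your assertion that the lost factor $n$ is ``absorbed'' by $k^{\Omega(\log k)}$ fails uniformly: if $k^{c\log k}=n^{\alpha}$ with $\alpha$ slightly above $1$ (e.g.\ $\log k\approx\sqrt{(\log n)/c}$) and $|L|/\binom{n}{k}$ is a constant, then recovering the factor $n$ would require $c'\leq c(1-1/\alpha)\to 0$, so no single constant in the exponent works for all parameter ranges, and the trivial bound $2n-1$ does not cover the deficit either. The fallback therefore proves only the strictly weaker bound $k^{\Omega(\log k)}\cdot|L|/\binom{n}{k}$. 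In short: your approach is a legitimate and genuinely different amplification strategy, and it would yield a correct proof once the dense-subset lower bound is supplied --- but that bound is exactly the cited result, at which point the amplification is unnecessary.
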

\begin{proof} \Hrubes\ and Yehudayoff \cite[Prop.\ 7]{yehudayoff2011} showed that any polynomial $g$ with a set $A_g \subseteq \binomial{n}{k}$ of exponent vectors requires monotone arithmetic formulas of size at least $n\:\!k^{\Omega(\log k)}  \cdot |A_g| / \binom{n}{k}$, where $|A_g|$ is the number of monomials in $g$. 
Since each monomial of $g$ corresponds to an exponent vector in $A_g$ and, hence, to a word in $L$, the claim follows by \cref{thm:arith}. 
\end{proof}

A \emph{Dyck word} is a word $w \in \{0,1\}^*$ such that $w$ has the same number of zeros and ones, and every prefix of $w$ contains not more ones than zeros. 
The language $D$ of all Dyck words can  be defined recursively as follows: 
\begin{itemize}[noitemsep,topsep=3pt]
\item $\epsilon \in D$, and
\item if $v$ and $w$ are in $D$, then so are $vw$ and $0 \,w\, 1$.
\end{itemize}
Usually, $D$ is interpreted  as the language of all correctly nested sequences of brackets, with $0$ representing opening and $1$ representing closing brackets. 
It is well known that $D$ is context-free, but not regular. However, if we restrict the length or the height of the Dyck words, the language turns regular.
(The \emph{height} of a Dyck word $w$ is the minimal number $h$ such that $|w'|_0-|w'|_1\leq h$ holds for every prefix $w'$ of $w$.)

For an integer~$n$ consider the $2n$-slice $\dyck{2n}  \isdef  D \cap \{0,1\}^{2n}$ consisting of  all Dyck words of length $2n$. The language $\dyck{2n}$ can be accepted by a DFA with $O(n^2)$ states, see \cref{fig:dyck}. 
In contrast it has no short expressions.
\begin{cor}[Dyck language] \label{cor:dyck} 
The language $\dyck{2n}$ requires regular expressions of length $\rpn{\dyck{2n}} \geq  n ^{\Omega( \log n) } $.
\end{cor}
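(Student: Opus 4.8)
The plan is to reduce the Dyck language $\dyck{2n}$ to the binomial language $\binomial{2n}{n}$ via the arithmetic bound (\cref{thm:arith}), exploiting the fact that \cref{cor:uniform} gives a superpolynomial lower bound for \emph{any} sufficiently dense uniform sublanguage of $\binomial{n}{k}$. Since every Dyck word of length $2n$ has exactly $n$ ones and $n$ zeros, we have $\dyck{2n} \subseteq \binomial{2n}{n}$, so $\dyck{2n}$ is a uniform language sitting inside the binomial language with parameters $N = 2n$ and $k = n = N/2$. By \cref{cor:uniform} (applied with $n \mapsto 2n$, $k \mapsto n$), any regular expression for $\dyck{2n}$ has length at least $2n \cdot n^{\Omega(\log n)} \cdot |\dyck{2n}| / \binom{2n}{n}$.

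The remaining step is to verify that the density factor $|\dyck{2n}| / \binom{2n}{n}$ is not too small — specifically, that it does not kill the superpolynomial growth. Here $|\dyck{2n}| = C_n$, the $n$-th Catalan number, and $C_n = \frac{1}{n+1}\binom{2n}{n}$. Hence $|\dyck{2n}| / \binom{2n}{n} = \frac{1}{n+1}$, which is only polynomially small. Plugging this in, any expression for $\dyck{2n}$ has length at least $2n \cdot n^{\Omega(\log n)} \cdot \frac{1}{n+1} = n^{\Omega(\log n)}$, since the $n^{\Omega(\log n)}$ factor absorbs the polynomial loss (any fixed polynomial is $n^{o(\log n)}$). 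This gives exactly the claimed bound $\rpn{\dyck{2n}} \geq n^{\Omega(\log n)}$.

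I would write this out in three short sentences: (1) observe $\dyck{2n} \subseteq \binomial{2n}{n}$ so $\dyck{2n}$ is uniform; (2) recall $|\dyck{2n}| = C_n = \binom{2n}{n}/(n+1)$; (3) invoke \cref{cor:uniform} and simplify. One may also note the matching upper bound $\rpn{\dyck{2n}} \leq n^{\log n + O(1)}$ follows from \cref{prop:dfa2regex} applied to the width-$O(n)$, length-$2n$ DFA of \cref{fig:dyck}, so the bound is tight up to the constant in the exponent.

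The main (and really the only) obstacle is the bookkeeping of the density factor: one must make sure that the sublanguage $\dyck{2n}$ is dense enough inside $\binomial{2n}{n}$, and the Catalan-number identity $C_n = \binom{2n}{n}/(n+1)$ settles this cleanly since the loss is merely a factor of $n+1$. There is no genuinely hard step here — this corollary is a direct application of \cref{cor:uniform}, itself a consequence of the \Hrubes--Yehudayoff lower bound transported through \cref{thm:arith}.
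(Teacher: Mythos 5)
Your proof is correct and is essentially identical to the paper's: both observe $\dyck{2n}\subseteq\binomial{2n}{n}$, use the Catalan number identity $|\dyck{2n}|=\binom{2n}{n}/(n+1)$, and apply \cref{cor:uniform} with the density loss of only $1/(n+1)$ being absorbed by the $n^{\Omega(\log n)}$ factor. Nothing further is needed.
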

\begin{proof}
Clearly~$\dyck{2n} \subseteq \binomial{2n}{n}$ holds, since every word in~$\dyck{2n}$ contains the same number of zeros and ones. It is well known that $\dyck{2n}$ contains exactly $\frac{1}{n+1} \binom{2n}{n}$ words -- this number is known as the Catalan number -- see, for example, \cite{chungfeller} for an elegant proof.
Thus, by \cref{cor:uniform}, we have $\rpn{\dyck{2n}} \geq \frac{2n}{n+1}  \cdot n^{\Omega(\log n ) }$.
\end{proof}

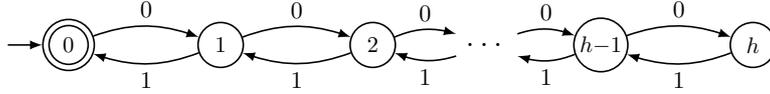
\begin{figure}[t]
\begin{center}
\begin{tikzpicture}[xscale=1,semithick]
\node[state,initial,accepting] (0) at(0,0) {0};
\node[state]  (1) at (2,0) {1};
\node[state]  (2) at (4,0) {2};
\node (3) at (5.5,0) {\large \dots};
\node[state] (4) at (7,0) {$h{-}1$};
\node[state] (5) at (9,0) {$h$};

\draw (0) edge[->, bend left=20]  node[above,scale=0.9] {0} (1);
\draw (1) edge[->, bend left=20]  node[above,scale=0.9] {0} (2);
\draw (3) edge[->, bend left=20]  node[above,scale=0.9] {0} (4);
\draw (2) edge[->, bend left=20]  node[above,scale=0.9] {0} (3);
\draw (4) edge[->, bend left=20]  node[above,scale=0.9] {0} (5);

\draw (1) edge[->, bend left=20]  node[below,scale=0.9] {1} (0);
\draw (2) edge[->, bend left=20]  node[below,scale=0.9] {1} (1);
\draw (3) edge[->, bend left=20]  node[below,scale=0.9] {1} (2);
\draw (5) edge[->, bend left=20]  node[below,scale=0.9] {1} (4);
\draw (4) edge[->, bend left=20]  node[below,scale=0.9] {1} (3);
\end{tikzpicture}
\vspace*{-2mm}
\end{center}
\caption{A (partial) DFA accepting the language of all Dyck words of height at most $h$.  Since in the $2n$-slice $\dyck{2n}$ the height of every word is at most $n$, if we set $h=n$, then the associated $2n$-slice DFA accepts the language $\dyck{2n}$ and has width $n+1$ and length $2n$.
So, the upper bound $\rpn{\dyck{2n}} \leq O( n^{3+\log n})$ follows from \cref{prop:dfa2regex}.}\label{fig:dyck}
\end{figure}

\subsection{Blow-up of language operations} \label{sec:ops}

A classical question is to determine by how much expression length can increase when performing operations like complementation, intersection or shuffle.
The situation for \emph{infinite} languages has been resolved by Gelade and Neven \cite{gelade12} resp.\ Gruber and Holzer \cite{gruber2008ops1,gruber2009ops2}: the blow-up is exponential for shuffle and intersection and double-exponential for complementation. 
For finite languages, however, the blow-up of intersection and shuffle is at most $n^{O( \log n)}$, this can be shown as follows. 
Given two expressions of lengths $m_1$ resp.\ $m_2$ describing finite languages, transform them into NFAs. Then construct the corresponding (intersection or shuffle) product automaton. 
For either of the two operations, this gives an NFA with $n= O(m_1  m_2)$ states. 
Finally, translate this NFA back into an expression. 
Since its accepted language is finite, length $n^{O(\log n ) }$ suffices according to \cref{prop:dfa2regex-old}. 

We now give matching lower bounds, thereby showing that this construction cannot be substantially improved.
Recall the definition of the \emph{shuffle} operation~$\shuffle$ (also called  \emph{interleaving}): For two words~$v,w$, their shuffle~$v \shuffle w$ is the set of all words of the form~$v_1 w_1 v_2 w_2 \cdots v_k w_k$ where~$k\in \NN$,~$v_i,w_i \in \Sigma^*$ for all~$i$ and~$v_1 v_2 \cdots v_k=v$ and~$w_1 w_2 \cdots w_k= w$.  The shuffle of two languages is $L_1 \shuffle L_2 =\bigcup_{ v \in L_1, w\in L_2 } v \shuffle w$.
\goodbreak
\begin{thm}[Blow-up of intersection and shuffle] \label{thm:ops}
There are finite languages $L_1, L_2$ with regular expressions of length $O(n)$ such that
\begin{enumerate}
\item[\textup{(a)}] $\rpn{L_1 \cap L_2} \geq n^{\Omega(\log n)}$,
\item[\textup{(b)}] $\rpn{L_1 \shuffle L_2 }\geq n^{\Omega(\log n)}$.
\end{enumerate}
There is a regular language $L \subseteq \Sigma^*$ with regular expressions of length $O(n)$, such that
\begin{enumerate}
\item[\textup{(c)}] $\rpn{L \cap \Sigma^n} \geq n^{\Omega(\log n)}$.
\end{enumerate}
\end{thm}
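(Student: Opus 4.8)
The idea is to reduce everything to the binomial-language bound of \cref{cor:binomial}. For parts (a) and (c) the target language will be $\binomial{n}{k}$ for $k=n/2$ (or a small constant times $n$), which by \cref{cor:binomial} requires expressions of length $n^{\Omega(\log n)}$; the task is only to realize $\binomial{n}{k}$ as an intersection (resp.\ a slice, resp.\ a shuffle) of languages with linear-size expressions. First I would handle (a). The natural witnesses are the languages ``at most $k$ ones'' and ``at least $k$ ones'' restricted to $\{0,1\}^n$, whose intersection is exactly $\binomial{n}{k}$. Each of these has a trivial layered DFA of width $k+1$ and length $n$, but \cref{prop:dfa2regex} gives only an $n^{O(\log n)}$ expression for that — too big. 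To get linear-size expressions I would instead write, e.g., $\thresh{n}{k}$-complement-style languages as concatenations: the language $L_1=\{w\in\{0,1\}^n: |w|_1\le k\}$ is \emph{not} obviously linear, so a cleaner choice is to take $L_1$ = words whose first half (say, positions $1,\dots,n/2$) is arbitrary and second half is all zeros, \emph{or} to build $\binomial{n}{k}$ from a product structure on disjoint blocks of coordinates. Concretely: split $[n]$ into $n$ singleton blocks and let $L_1$ force an even-indexed pattern, $L_2$ an odd-indexed one — but this does not yield a binomial coefficient count. The trick that does work (and is the standard one) is: let $L_1 = (0+1)^{n}$ trivially and push the binomial structure into $L_2$ via a homomorphic encoding; but $L_2=\binomial{n}{k}$ then has no small expression.

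So the honest approach must use the fact that \emph{intersection of two simple layered languages} gives $\binomial{n}{k}$, where ``simple'' means short expression. The right pair: $L_1=\{w : w$ has a $1$ in positions forming a prefix-closed set$\}$ — no. Let me state the actual plan. Take an alphabet with a few extra symbols. Let $L_1$ be the language over $\{0,1\}\cup\{\#\}$ of words of the form $u\# v$ with $|u|=|v|=n$ and $u=v$ (equality can be forced position-by-position if the two halves interleave, i.e.\ really take $L_1$ = words $x_1 y_1 x_2 y_2\cdots x_n y_n$ with $x_i=y_i$ for all $i$; this has a linear expression $\sum_i$-free, namely $(00+11)^n$ — wait, that's $(00+11)\cdots(00+11)$, which has length $O(n)$). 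Then let $L_2=(0+1)^n\binomial{n}{k}'$ where on the second block we only count; again circular. The clean resolution, which I expect is what the paper does: take $L_1=(00+11)(00+11)\cdots(00+11)$ ($n$ factors, reading positions $1,\dots,2n$) and $L_2 = (0+1)(0+1)\cdots$ on odd positions combined with a \emph{fixed} weight-$k$ constraint on even positions realized not by an expression but — no.

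I think the correct and economical plan is: \textbf{(a)} use $L_1 = \{w\in\{0,1\}^{2n} : |w|_1 = k,\ w \text{ restricted to odd positions is arbitrary}\}$ — abandon that. Let me just commit: the main obstacle is exhibiting the two linear-size building blocks; once that is done, (b) follows from (a) because $\binomial{n}{k} \cap \binomial{n}{k}$-type intersections can be simulated by shuffles (shuffle two languages over disjoint alphabet copies $\{0,1\}$ and $\{\bar 0,\bar 1\}$, each a linear expression $(0+1)^n$ resp.\ $(\bar0+\bar1)^n$ won't constrain; instead shuffle $L_1$ over $\{a\}$ with $L_2$ over $\{b\}$ so the positions interleave and then a homomorphism collapses) and (c) follows by taking $L = L_1 \cdot \Sigma^* \cap \Sigma^*\cdot L_2$-style or simply $L = \bigcup_m (\text{$m$-th slice is } \binomial{m}{\lfloor m/2\rfloor})$, an infinite regular language with a linear expression $(01+10)^*$-flavored description whose $n$-slice is dominated by a permuted binomial language. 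Since $(01)^* \shuffle$-tricks give $\binomial{n}{k}$ up to a coordinate permutation, and $\arithf{\cdot}$ is permutation-invariant (\cref{rem:perm}), \cref{thm:arith} and \cref{cor:binomial} still apply.

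\textbf{Summary of the plan.} Step 1: for (a), define $L_1,L_2\subseteq\{0,1\}^n$ (or over a slightly larger alphabet, then apply a length-preserving homomorphism) with $O(n)$-length expressions and $L_1\cap L_2 = \pi(\binomial{n/2}{k})$ for some coordinate permutation $\pi$ — the concrete choice being languages of the form ``fixed pattern on one block, free on the complementary block,'' combined so that the free coordinates carry exactly the weight-$k$ constraint; verify $\rpn{L_i}=O(n)$ directly from the expression. Step 2: invoke \cref{cor:binomial} together with permutation-invariance (\cref{rem:perm}) to conclude $\rpn{L_1\cap L_2}\ge n^{\Omega(\log n)}$. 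Step 3: for (b), replace intersection by shuffle using disjoint alphabet copies and a final homomorphism — shuffling $L_1$ (alphabet $\Sigma_1$) with $L_2$ (alphabet $\Sigma_2$, disjoint) and then mapping both copies of each letter back to $\{0,1\}$ recovers the intersection up to the position permutation, so the same bound applies while each factor retains its linear expression. Step 4: for (c), take $L$ to be an infinite regular language with an $O(n)$-length (in fact $O(1)$, or linear once we fix the block sizes) expression whose $n$-slice equals $L_1\cap L_2$ from Step 1; e.g.\ a starred version $(\text{block pattern})^*$ whose length-$n$ words are exactly the homogeneous target. The main obstacle throughout is Step 1: finding two genuinely short (linear, not quasipolynomial) expressions whose intersection is binomially large — the binomial language itself is the one thing we may \emph{not} use as a building block, so the construction must manufacture the $\binom{n}{k}$ count from the interaction of two ``almost trivial'' languages. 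I expect the paper sidesteps this by letting each $L_i$ fix the bits on half the coordinates and leave the other half completely free, arranged on complementary coordinate blocks so that $L_1\cap L_2$ forces a word whose free-on-both-sides part is empty and whose pattern is prescribed — no; rather, each $L_i$ is itself a (permuted) binomial-type language of a \emph{smaller} parameter that happens to have a short expression because its weight is extreme ($0$ or full) on most coordinates, and the intersection amplifies. Pinning down exactly this construction, and checking the $O(n)$ expression bound and the $n^{\Omega(\log n)}$ consequence, is the whole content of the proof.
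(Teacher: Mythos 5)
You have the right target (reduce everything to the binomial language and \cref{cor:binomial}), but none of your three constructions actually materializes, and the one concrete idea you do commit to for each part either fails or is missing the key trick. For (a), you insist on $L_1,L_2\subseteq\{0,1\}^n$ being homogeneous ``block-pattern'' languages, and you correctly sense that this leads nowhere — the intersection of two such languages is again a product over blocks and never produces a $\binom{n}{k}$-sized count. The idea you are missing is to make one of the two languages \emph{non-homogeneous}: the paper takes $L_1=(0+\epsilon)^m\,(1\,(0+\epsilon)^m)^m$, which pins the number of ones to exactly $m$ while leaving the number of zeros free (at most $m$ per gap), and $L_2=(0+1)^{2m}$, which only pins the total length. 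Their intersection is exactly $\binomial{2m}{m}$, both expressions have length $O(m^2)=O(n)$, and $m^{\Omega(\log m)}=n^{\Omega(\log n)}$ since $\log m=(\log n)/2$. Your framing rules this construction out from the start.

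For (b), your plan (shuffle over disjoint alphabet copies, then collapse by a homomorphism) does not prove the statement, since the theorem is about $\rpn{L_1\shuffle L_2}$ itself and applying a homomorphism afterwards is not available; you also miss the elementary witness $L_1=0^n$, $L_2=1^n$, whose shuffle is exactly $\binomial{2n}{n}$. For (c), your concrete candidate $(01+10)^*$ fails: its $n$-slice is $\{01,10\}^{n/2}$, a set of only $2^{n/2}$ words with a trivial $O(n)$ expression, and \cref{cor:uniform} applied to such a sparse sublanguage of $\binomial{n}{n/2}$ gives an exponentially damped (hence vacuous) bound. The working choice is $L=0^*(1\,0^*)^{\lfloor n/2\rfloor}$, whose $n$-slice is exactly $\binomial{n}{\lfloor n/2\rfloor}$ — again the trick of fixing the number of ones by the expression's shape while letting the star absorb the zeros. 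In short, the unifying device you are missing in all three parts is to use expressions that constrain the multiset of letters without constraining the length, and to let the second operand (intersection with $\Sigma^{2m}$ or $\Sigma^n$, or shuffling with $1^n$) supply the missing length constraint.
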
 
In particular, claims (a) and (b) answer a question asked in \cite{gruber2008cc}, while claim (c) answers a question in \cite[Open Problem 5]{ellul2004} asking for the blow-up of $n$-slices.
Note the difference between claims (a) and (c): in claim (c) we require $\rpn{L}$ to be linear in the length of the words in the intersection $L \cap \Sigma^{n}$, but do not require~$L$ to be finite.

\begin{proof} 
(a) Consider the language~$L_1=(0+\epsilon)^m (1 (0+\epsilon)^m )^{m}$ of all words with exactly $m$ ones and not more than $m$ zeros in a row, and let~$L_2 =(0+1)^{2m}$. 
If we set~$n=m^2$, then both~$L_1$ and~$L_2$ can be described by expressions of length~$O(n)$. The intersection~$L_1 \cap L_2$ is exactly the binomial language~$\binomial{2m}{m}$ and by \cref{cor:binomial} the lower bound $\rpn{L_1 \cap L_2} =\rpn{\binomial{2m}{m}} \geq m^{\Omega( \log m)} = n^{\Omega( \log n) }$ follows.

(b) Consider the languages~$L_1= 0^{n} $ and~$L_2=1^{n}$ with expressions of length~$O(n)$. 
Their shuffle~$L_1 \shuffle L_2$ is exactly the binomial language~$\binomial{2n}{n}$, and \cref{cor:binomial} yields~$\rpn{L_1 \shuffle L_2} \geq n^{\Omega (\log n)}$.

(c) Let~$\Sigma=\{0,1\}$ and~$L=0^* (1 0^*)^{\lfloor n/2\rfloor}$ be the language of all words with exactly $\lfloor n/2\rfloor$ ones; hence,~$\rpn{L} = O(n)$. 
Then~$L \cap \Sigma^n = \binomial{n}{\lfloor n/2 \rfloor}$ and \cref{cor:binomial} yields~$\rpn{L \cap \Sigma^n } \geq n^{\Omega (\log n)}$.
\end{proof}

\subsection{Limitations of the arithmetic bound} \label{sec:limits}
Let us address weaknesses of \cref{thm:arith}.
We already mentioned in \cref{rem:perm} that arithmetic complexity ignores the \emph{order} of the variables.
This prevents us from proving bounds for languages~$L$ that have a permutation $\sigma$ such that $\sigma(L)$ has short expressions.
Take, for example, the language $L=\Lpalin{2n}$ of all palindromes over $\{0,1\}$ of length $2n$. The fooling set method gives an exponential lower bound $\rpn{L} \geq \Omega(2^{n})$ (see \cref{ex:fooling}).
However, reordering the letters yields the language $L'  \isdef \{w_1w_{2n} w_2 w_{2n-1} \dots w_{n} w_{n+1} : w \in L\} = \{00,11\}^{n}$ with $\rpn{L'}=O(n)$. 
According to \cref{rem:perm}, lower bounds obtained by \cref{thm:arith} are the same for $\rpn{L}$ and $\rpn{L'}$, and, thus, are at most linear. In other words, $\arithf{L} \leq O(n)$, but $\rpn{L} \geq \Omega(2^n)$.
The same problem arises for the boolean methods \cite{ellul2004,gruber2008cc} (see \cref{sec:related}), Gruber and Johannsen \cite{gruber2008cc} actually presented the same example.
 
More subtle problems of ``ignored non-commutativity'' can occur even for languages that are invariant under permutations (like the binomial language):
If we transform an expression $R$ into its arithmetic version $\Phi_R$ (as described before \cref{lem:trans}), this formula $\Phi_R$ has a special structure: For every multiplication gate $u \cdot v$, the variables of the gates $u$ and $v$ must be ``consecutive'' in that $u$ contains only variables from the set $\{x_k, \dots, x_j\}$ and $v$ contains only variables from the set $\{ x_{j+1}, \dots ,x_\ell \}$ for some $k \leq j \leq \ell$.
In contrast, an \emph{arbitrary} monotone arithmetic formula (not derived from an expression) can have multiplications of arbitrary sets of variables, e.g., $(x_1 + x_3) \cdot (x_2 + x_4)$ is possible.

Finally, the arithmetic bound only works for languages over the alphabet $\{0,1\}$.
In the next section, we circumvent these issues by translating a lower bound method from arithmetic formula complexity \emph{directly} to regular expression length.

\section{Direct lower bounds} \label{sec:direct}

Until now we only used \emph{already existing} bounds from arithmetic formula complexity. But how can such bounds be obtained? 
One possibility is to lower bound the \emph{circuit depth}. 
By standard balancing arguments a lower bound $2^{\Omega(\mathsf{depth})}$ on formula size follows. Shamir and Snir \cite{shamir} and Tiwari and Tompa 
 \cite{tiwari} developed techniques for such depth bounds, Jukna \cite{jukna2015} put these into a general framework. 
Another option is the lower bound method for the size of monotone (or even multilinear) arithmetic formulas by Shpilka and Yehudayoff \cite{amir10:survey} resp. \Hrubes\ and Yehudayoff \cite{yehudayoff2011} by so-called log-product polynomials.
With this method \Hrubes\ and Yehudayoff showed their lower bound 
for the elementary symmetric polynomial that we used in \cref{cor:binomial} for the binomial language.

So, in order to obtain a lower bound on the expression length of a given language $L$, one can lower bound the monotone arithmetic formula size of $L$ with one of the above methods and apply the arithmetic bound (\cref{thm:arith}).
However, we here take a different approach: we translate the ``log-product method'' from \cite{amir10:survey,yehudayoff2011} \emph{directly} to  expression length.
Thereby we solve the problem of ``ignored non-commutativity'' discussed in \cref{sec:limits} and can use arbitrary alphabets.
A similar translation of lower bound methods to the non-commutative world was done in \cite{hrubes10,non-commuting} (see also \cite{filmus2011} and \cite{seiwert20}), the difference is that we here do it for \emph{formulas} instead of circuits.

\subsection{The log-product bound}\label{sec:logprod}
In the following, $\Sigma$ is an arbitrary alphabet, $L \subseteq \Sigma^n$ a homogeneous language and $R$ a homogeneous expression describing $L$.
The high-level idea for lower bounding $\rpn{L}$ is roughly as follows.
\begin{itemize}[noitemsep]
\item Write $R$ as \emph{union} $B_1+ \dots +B_\ell$ of ``log-product'' expressions~$B_i$, where $\ell\leq \rpn{R}$ and every log-product expression $B_i$ can be factorized into $m \geq \log n$ nontrivial factors $B_i \equiv F_1 F_2 \cdots F_m$.
\item From $L$ derive \emph{structural properties} that any language described by a factor $F_j$  must have.
\item Upper bound the \emph{number of words} in any language with these properties to obtain an upper bound on $|L(B_i)|$.
\end{itemize}

Recall that the degree $\deg R$ of a homogeneous expression $R$ is the length of its described words.
\begin{defi}[Log-product] \label{def:logprod}
A homogeneous expression $B$ is \emph{log-product}\footnote{In the preliminary version of this paper \cite{CS20} we used the term ``balanced'' instead of ``log-product'', and used a slightly different definition.}, if 
\begin{itemize}[topsep=1pt,noitemsep]
\item $B$ is a letter, or
\item if there are homogeneous expressions $B_1, B_2$ such that $B_1$ is log-product itself, $\deg B_1 \geq \deg B_2$ and $B = B_1 B_2$ or $B = B_2 B_1$.
\end{itemize}
\end{defi}
In other words, any single letter is log-product, and if an expression $B_1$ is log-product, then so are $ B_1 B_2$ and $B_2 B_1$ for any homogeneous expression $B_2$ with $\deg B_2\leq \deg B_1$.
For example, the expression $(00+11)(00+11)1$ is log-product, but the expression $(00+11)(00+11)$ is not. In \cref{sec:factor} we will investigate some useful properties of log-product expressions.

The following lemma is a straightforward adaption of \cite[Lem. 4]{yehudayoff2011} or \cite[Lem. 3.5]{amir10:survey}.
\begin{lem} \label{lem:decomp}
Let $R \neq \epsilon$ be a homogeneous expression. 
Then there exist $\ell \leq \rpn{R}$ log-product expressions $B_1, \dots, B_{\ell}$ such that $R \equiv B_1 + \dots + B_{\ell}$.
\end{lem}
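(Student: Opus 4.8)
The plan is to induct on the structure of the homogeneous expression $R$, decomposing it into log-product pieces by recursively pushing the log-product property down through the syntax tree. First I would handle the base cases: if $R$ is a single letter, it is itself log-product and we take $\ell = 1$. If $R = R_1 + R_2$ is a union, we recurse on $R_1$ and $R_2$ to obtain log-product decompositions with $\ell_1 \le \rpn{R_1}$ and $\ell_2 \le \rpn{R_2}$ summands respectively; concatenating the two lists gives a decomposition of $R$ into $\ell_1 + \ell_2 \le \rpn{R_1} + \rpn{R_2} \le \rpn{R}$ log-product expressions (the union node itself accounts for the remaining count).

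The interesting case is concatenation $R = R_1 R_2$. Here I would use the degree to decide which factor to descend into: without loss of generality assume $\deg R_1 \ge \deg R_2$ (otherwise swap the roles, noting that $B = B_2 B_1$ is allowed in \cref{def:logprod} whenever $B_1$ is log-product). Recursively decompose $R_1 \equiv B_1^{(1)} + \dots + B_{\ell_1}^{(1)}$ into $\ell_1 \le \rpn{R_1}$ log-product expressions. Distributivity of concatenation over union gives $R \equiv R_1 R_2 \equiv B_1^{(1)} R_2 + \dots + B_{\ell_1}^{(1)} R_2$. Each summand $B_i^{(1)} R_2$ is a concatenation of a log-product expression $B_i^{(1)}$ (which is homogeneous with $\deg B_i^{(1)} = \deg R_1 \ge \deg R_2$) with the homogeneous expression $R_2$, so by the second clause of \cref{def:logprod} each $B_i^{(1)} R_2$ is itself log-product. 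Thus we get a decomposition of $R$ into $\ell_1 \le \rpn{R_1} \le \rpn{R} $ log-product expressions, which suffices since $\rpn{R} = \rpn{R_1} + \rpn{R_2} + 1 \ge \rpn{R_1}$.

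Two small points need care. First, one must check that every subexpression arising is genuinely homogeneous and that $\deg B_i^{(1)} = \deg R_1$ for each $i$ in the decomposition of $R_1$ — this holds because equivalent homogeneous expressions have the same degree, and \cref{lem:trans}-style reasoning (or simply that $L(R_1)$ is homogeneous) guarantees each $B_i^{(1)}$ describes words of length $\deg R_1$. Second, in the union case one should double-check the counting inequality $\ell_1 + \ell_2 \le \rpn{R}$: since $\rpn{R} = \rpn{R_1} + \rpn{R_2} + 1$ and $\ell_j \le \rpn{R_j}$, we even have one node to spare. The main obstacle, such as it is, is purely bookkeeping: making sure the induction is set up on a measure that decreases (e.g. $\rpn{R}$ itself) and that the "without loss of generality" swap in the concatenation case is justified by the symmetric second clause of the log-product definition, rather than requiring $B_1$ to be the left factor.
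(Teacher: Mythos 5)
Your proposal is correct and follows essentially the same route as the paper's proof: structural induction on $R$, with the concatenation case handled by decomposing the factor of larger degree and distributing it over the other, using the symmetric clause of \cref{def:logprod} to absorb the lower-degree factor on either side. The only cosmetic difference is that you track the count as $\ell_1+\ell_2\le\rpn{R_1}+\rpn{R_2}<\rpn{R}$ in the union case, which the paper leaves implicit.
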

\begin{proof}
We proceed by induction on $R$.
If $R$ is a single letter, the claim is trivial.
If $R=R_1 + R_2$ is a union, we can apply the induction hypothesis to both $R_1$ and $R_2$, and are finished.
Finally, let $R=R_1 \cdot R_2$ be a concatenation. 
Assume $\deg R_1 \geq \deg R_2$, the other case is analogous.
By induction hypothesis there are log-product expressions $B_1, \dots, B_{\ell}$ such that $R_1 \equiv B_1+ \dots + B_{\ell}$ for an $\ell \leq \rpn{R_1}$. Since $\deg B_i = \deg R_1 \geq \deg R_2$ holds for all $i$,  every expression $ B_i R_2$ is also log-product. So,
$R \equiv  B_1 R_2+ \dots + B_{\ell} R_2$ is  a union of $\ell \leq \rpn{R_1} \leq \rpn{R}$ log-product expressions, as desired.
\end{proof}

\begin{rem} \label{rem:square}
\cref{lem:decomp} still holds if we extend regular expressions by a \emph{squaring operation} $( \!\,^2)$ defined by $L(R^2)  \isdef  (L(R))^2$ as introduced in~\cite{meyer72}; see also \cite{holzer11} for a more recent overview. To show this, proceed analogously to the case when $R$ is a concatenation of two identical subexpressions $R=R' \cdot R'$.
\end{rem}
From \cref{lem:decomp} our second lower bound method follows.
\begin{thm}[Log-product bound] \label{thm:logprod} 
Let $\Sigma$ be an alphabet, $L \subseteq \Sigma^n$ be a homogeneous language and $h\in \RRpos$. 
If $|L(B)| \leq h$ holds for every log-product expression $B$ with $L(B) \subseteq  L$, 
then any expression for $L$ has length $$\rpn{L} \geq |L| /h\,.$$
\end{thm}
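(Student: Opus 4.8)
The plan is to apply the decomposition from \cref{lem:decomp} to a length-optimal expression for $L$ and then run a union bound on the number of words described.

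First I would fix a shortest expression $R$ describing $L$, so that $\rpn{R} = \rpn{L}$. By the standing conventions $R$ contains neither $\emptyset$ nor a star, and since $L \subseteq \Sigma^n$ together with $L \neq \{\epsilon\}$ forces $n \geq 1$, we also have $R \neq \epsilon$. Hence \cref{lem:decomp} applies and yields log-product expressions $B_1, \dots, B_\ell$ with $\ell \leq \rpn{R}$ and $R \equiv B_1 + \dots + B_\ell$. The key point to extract is that $L(B_i) \subseteq L(B_1) \cup \dots \cup L(B_\ell) = L(R) = L$ for every $i$; in particular each $B_i$ is a log-product expression with $L(B_i) \subseteq L$, so the hypothesis of the theorem applies to it and gives $|L(B_i)| \leq h$.

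Then I would simply estimate, using subadditivity of cardinality under union,
\[
|L| \;=\; \Bigl|\, \bigcup_{i=1}^{\ell} L(B_i) \,\Bigr| \;\leq\; \sum_{i=1}^{\ell} |L(B_i)| \;\leq\; \ell\, h \;\leq\; \rpn{R}\cdot h \;=\; \rpn{L}\cdot h,
\]
which rearranges to $\rpn{L} \geq |L|/h$. One should note that $h \geq 1$ here: any single word $w \in L$ is described by an expression built up letter by letter, which is itself log-product, so taking such a $B$ in the hypothesis rules out $h = 0$ unless $L = \emptyset$, a case our conventions exclude (and where $|L|/h$ is vacuous anyway).

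I do not expect any genuine obstacle at this stage: all of the structural work — that an arbitrary homogeneous expression can be rewritten as a union of at most $\rpn{R}$ log-product expressions — has already been carried out in \cref{lem:decomp} by repeatedly pushing additions above multiplications via the distributive law. The only things that need care are the degenerate cases ($R = \epsilon$, $h = 0$) handled above and the fact that we only get the \emph{inclusion} $L(B_i) \subseteq L$ (not equality), which is exactly the form in which the hypothesis is stated, so everything lines up.
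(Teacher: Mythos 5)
Your proposal is correct and follows essentially the same route as the paper: decompose a shortest expression via \cref{lem:decomp} into at most $\rpn{L}$ log-product expressions, observe each describes a sublanguage of $L$ of size at most $h$, and conclude by a union bound. The extra care you take with the degenerate cases ($R \neq \epsilon$, $h \geq 1$) is sound but not something the paper bothers to spell out.
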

\begin{proof}
Let $R$ be an expression of length $\ell$ for $L$ and assume that $|L(B)| \leq h$ holds for all log-product expressions $B$ with $L(B) \subseteq L$. By \cref{lem:decomp} the expression $R$ can be written as union of at most $\ell$ log-product expressions. At least $\ell \geq |L|/h$ such expressions are necessary to describe all words in $L$. 
\end{proof}

\begin{rem}[Languages with weightings] \label{rem:weightings}
It is possible to generalize \cref{thm:logprod} in the following way.
Let $\lambda : L \to \RRpos$ be a weighting of the words in $L$ and let $\lambda(L')  \isdef  \sum_{w \in L'} \lambda(w)$ for every language $L'\subseteq L$.
In \cref{thm:logprod}, we can replace $| \cdot |$ by $\lambda( \cdot)$:
If $\lambda(L(B)) \leq h$ holds for every log-product expression $B$ with $L(B)\subseteq L$, then
$\rpn{L} \geq \lambda(L) / h$ follows.
However, in this paper we will not use this generalization.
\end{rem}

\begin{rem}[Sublanguages]\label{rem:sublang}
If a lower bound $\rpn{L} \geq \ell$ for a language $L$ is shown by \cref{thm:logprod}, then for every sublanguage $L' \subseteq L$ a lower bound of $\rpn{L'} \geq \ell \cdot |L'|/|L|$ follows.
\end{rem}

\goodbreak
\subsection{Factorizations of log-product expressions} \label{sec:factor}

We take a closer look on log-product expressions.
Given a log-product expression~$B$ of degree $\deg B= n$, we can construct a path from the root to a leaf by always continuing with the child whose subexpression is log-product and that has larger degree (i.e., $B_1$ in \cref{def:logprod}).\footnote{It may happen that this node is not unique. In this case, choose any of the nodes.}  
We call this path the \emph{canonical path} of $B$; note that all inner nodes on this path are concatenation nodes and that all their subexpressions are log-product.
From the canonical path we obtain a factorization 
$$B = F_1 F_2 \cdots F_m $$
of $B$  with $m \geq \log n$ \emph{factors} $F_i$ being some homogeneous expressions, namely the subexpressions of the siblings of the nodes in the canonical path in $B$, plus the last node in this path; see \cref{fig:factor} for an example.
Note that for any factor $F_i$ of $B$ there are words $x$ and $y$ such that $L(x F_i y) \subseteq L(B)$.

\begin{figure}[t]
\begin{center}
\begin{tikzpicture}[thick, scale=0.65, xscale=0.9,
triangle/.style= {fill=gray!0,draw,thick,regular polygon, regular polygon sides=3,
yshift=-0.25cm,inner sep=0pt,outer sep=-0mm,align=center,minimum width=15.5mm}]
\node  (1a) at (-0.25,0) {};
\node(3a) at (1.5,-2) {}; 
\node (4a) at (0,-4) {}; 
\node (6a) at (-1.25,-6) {}; 
\draw[line width=2.7mm,orange!70,opacity=0.65] (1a)--(3a)--(4a)--(6a);

\node[fill=white,node,scale=0.9,inner sep=1pt] (1) at (-0.25,0) {\Large $\Cdot$};
 
\node[scale=0.9,triangle,minimum width=23mm] (2) at (-2,-2.1) {~ };  
\node[scale=0.9]  at (-2,-2.82) {\small $(a\!\cdot\!(b{+}c))$}; 
\node[fill=white,node,scale=0.9,inner sep=1pt] (3) at (1.5,-2) {\Large $\Cdot$}; 

\node[fill=white,node,scale=0.9,inner sep=1pt] (4) at (0,-4) {\Large $\Cdot$}; 
\node[triangle,scale=0.9] (5) at (3,-4) {~ }; 
\node[scale=0.9]  at (3,-4.6) {\small $(a {+} b)$}; 

\node[fill=white,node,scale=0.9] (6) at (-1.25,-6) {$b$}; 
\node[triangle,scale=0.9] (7) at (1.25,-6) {~ }; 
\node[scale=0.9]  at (1.25,-6.6) {\small $(a{+}c)$}; 

\draw (1) --(2.north);
\draw (1) --(3);
\draw (3) --(4);
\draw (3) --(5.north);
\draw (4) --(6);
\draw (4) --(7.north);

\end{tikzpicture}
\vspace*{-2mm}
\end{center}
\caption{A log-product expression $B= (a \cdot (b+c) ) \cdot ( (b \cdot (a+c) ) \cdot (a+ b) )$ and its canonical path (shaded in orange).
$B$ has the factorization $B= F_1 F_2 F_3 F_4$ where the factors are $F_1=(a\cdot(b + c))$, $F_2= b$, $F_3 = (a+c)$ and $F_4=(a+b)$.
}\label{fig:factor}
\end{figure}
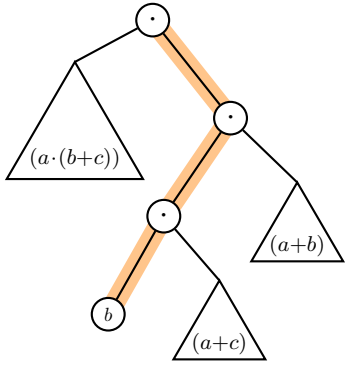

In many cases, factorizations with additional restrictions on the degree of the factors can be useful.
\Hrubes\ and Yehudayoff \cite{yehudayoff2011} used ``balanced'' factorizations of log-product polynomials where each factor $F_i$ has a degree lying between $(1/3)^i \hspace*{1pt} n$ and $(2/3)^i \hspace*{1pt} n$. 
We use factorizations where most factors have ``sufficiently large'' degree.
\begin{prop}[$\gamma$-factorization]\label{prop:factor}
Let $B$ be a log-product expression of degree $n$ and let $\gamma \geq 1$.
Then there exist $m \geq \log (1+n /\gamma)$ and $2m$ homogeneous expressions $P_1, \dots, P_{m} , S_1, \dots, S_{m}$ such that $B \equiv  P_1 P_2 \cdots P_{m}  S_{m} S_{m-1} \cdots S_1$ and
$ \deg P_i + \deg S_i \geq \gamma$ holds for all $i \in [m \!-\!1]$, and $\deg P_{m} +\deg S_{m} \leq  \gamma$.
\end{prop}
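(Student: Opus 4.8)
The plan is to exploit the \emph{canonical path} of $B$ defined just before the statement. Denote its nodes by $v_0, v_1, \dots, v_t$ (so $v_0 = B$, each $v_{i+1}$ is the log-product child of larger degree, and $v_t$ is a leaf), put $d_i = \deg v_i$, and let $c_i$ be the sibling subexpression at $v_i$, so that $v_i = v_{i+1}\,c_i$ or $v_i = c_i\,v_{i+1}$ with $\deg v_{i+1} \ge \deg c_i \ge 1$. Thus $d_0 = n > d_1 > \dots > d_t = 1$, and, crucially, $d_i = d_{i+1} + \deg c_i \le 2 d_{i+1}$, i.e.\ $d_{i+1} \ge d_i/2$, for every $i$. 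Unfolding the path from the root downwards writes $B$ as the concatenation of the left-type siblings $c_i$ (those with $v_i = c_i v_{i+1}$, listed by increasing $i$), then the leaf $v_t$, then the right-type siblings $c_i$ (those with $v_i = v_{i+1} c_i$, listed by \emph{decreasing} $i$). These two blocks of siblings are what I will cut into the $P_i$'s and $S_i$'s.

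Next I would choose breakpoints greedily. Set $k_0 = 0$; given $k_{r-1}$ with $d_{k_{r-1}} > \gamma$, let $k_r$ be the smallest index $> k_{r-1}$ with $d_{k_r} \le d_{k_{r-1}} - \gamma$ (such an index exists whenever $d_{k_{r-1}} \ge \gamma + 1$, since $d_t = 1$), and stop as soon as the current degree drops to $\le \gamma$. For each block $r$, consisting of the indices $k_{r-1} \le i < k_r$, let $P_r$ be the product of its left-type siblings and $S_r$ the product of its right-type siblings (an empty product being $\epsilon$); the final block additionally absorbs the leaf $v_t$, which I append to its $P$-part. If the degenerate case $n \le \gamma$ occurs I simply take $m = 1$, $P_1 = B$, $S_1 = \epsilon$; and if the greedy gets stuck at a degree lying in $(\gamma, \gamma+1)$ — which can only happen for non-integer $\gamma$ — I turn the whole remainder into one more block (its degree lies in $[\gamma, \gamma+1)$, hence is $\ge \gamma$, so it is admissible as an \emph{intermediate} block) and append a trivial block $P_m = S_m = \epsilon$ of degree $0$.

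It then remains to verify three things. \emph{The identity} $B \equiv P_1 P_2 \cdots P_m\, S_m S_{m-1} \cdots S_1$: the $P_r$ are consecutive segments of the left block of siblings taken in increasing order of $i$, so $P_1 \cdots P_m$ is exactly that block; the $S_r$ are consecutive segments of the right block read in decreasing order of $i$, so $S_m S_{m-1} \cdots S_1$ is exactly that block; and the leaf $v_t$ lies inside $P_m$, precisely where it sits between the two blocks — associativity of concatenation does the rest. \emph{The degree bounds}: for an intermediate block the degrees of its siblings telescope to $d_{k_{r-1}} - d_{k_r} \ge \gamma$ by the greedy choice, while the final block has degree $(d_{k_{m-1}} - d_t) + 1 = d_{k_{m-1}} \le \gamma$ (or $0$ in the appended case). \emph{The bound on $m$}: since $k_r$ is the smallest index with $d_{k_r} \le d_{k_{r-1}} - \gamma$, we have $d_{k_r - 1} > d_{k_{r-1}} - \gamma$, whence $d_{k_r} \ge d_{k_r - 1}/2 > (d_{k_{r-1}} - \gamma)/2$, i.e.\ $d_{k_{r-1}} < 2 d_{k_r} + \gamma$. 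Iterating this inequality across the $m-1$ intermediate blocks and using $d_{k_{m-1}} \le \gamma$ gives $n = d_{k_0} < 2^{m-1} d_{k_{m-1}} + (2^{m-1}-1)\gamma \le (2^m - 1)\gamma$, so $2^m > 1 + n/\gamma$ and therefore $m \ge \log(1 + n/\gamma)$.

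I expect the only real difficulty to be bookkeeping: making sure that the left-type siblings (read top-down) and the right-type siblings (read bottom-up) reassemble into the specific nested shape $P_1 \cdots P_m\, S_m \cdots S_1$ rather than some other interleaving, and that the unique leaf $v_t$ ends up in the correct (final) block. The failure of a clean greedy step for non-integer $\gamma$ and the degenerate small-$n$ case are minor technical add-ons, absorbed by the trivial $\epsilon$-blocks.
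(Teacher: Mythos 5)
Your proof is correct and is essentially the paper's own argument: the paper peels off one block per inductive step by walking the canonical path until the degree first drops to $\leq n-\gamma$ (so the new degree $n'$ satisfies $(n-\gamma)/2 < n' \leq n-\gamma$), which is exactly your greedy choice of breakpoints, and your unrolled inequality $d_{k_{r-1}} < 2\:\!d_{k_r}+\gamma$ is precisely the paper's induction step $m'+1 > \log(1+n/\gamma)$ for the bound on $m$. The only substantive difference is that you explicitly patch the degenerate case of non-integral $\gamma$ in which no node of degree $\leq d_{k_{r-1}}-\gamma < 1$ exists, a corner case the paper's proof silently skips.
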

We call $P_1 \cdots P_{m}  S_{m} \cdots S_1$ a \emph{$\gamma$-factorization}. Note that trivial factors $P_i=\epsilon$ or $S_i=\epsilon$ are allowed.
\begin{proof}
We proceed by induction.  For $n \leq \gamma$ the claim is trivial, so assume~\mbox{$n >\gamma$}.
To obtain a $\gamma$-factorization of $B$ initialize $S=\epsilon$ and $P=\epsilon$ and follow the canonical path downwards, starting at the root of $B$. 
For each node $v=x\cdot y$ passed by, update either $P$ or $S$: 
if we went to the right child $y$, set $P \isdef  P\cdot B_x$, if we went to the left child $x$, set $S  \isdef  B_y \cdot S$. By this procedure the invariant $B \equiv P B_v S$ holds in each step, where $v$ is the currently reached node. 
Eventually, we arrive at some node $u$ whose degree $n' \! := \deg B_u$ satisfies $(n - \gamma)/2 < n' \leq n - \gamma$.
This is the case just because the degree cannot drop by more than a factor of two at each step. 
Since $B_u$ is log-product, by induction hypothesis there is a $\gamma$-factorization $B_u \equiv P_1 \cdots P_{m'} S_{m'} \cdots S_1$ with $m' \geq \log(1+ n'/\gamma)$.
We claim that $P P_1 \cdots P_{m'} S_{m'} \cdots S_1 S$ is a $\gamma$-factorization of $B$ with $ 2(m'\!+\;\!\!1)$ factors:
since $n'\leq n - \gamma$, we have $\deg P + \deg S = n - n' \geq \gamma$, and
since $n' > (n-\gamma)/2$, we have $m  \isdef  m'+1 \geq \log (1+n' /\gamma)+1  >  \log (1+(n-\gamma)/2\gamma ) +1 = \log (1+n/\gamma )$.
\end{proof}

A log-product expression can also be written as ``balanced'' concatenation of \emph{two} expressions. 
\begin{prop}
\label{prop:factor2}
Let $B$ be a log-product expression of degree $n\geq 2$.
Then there are two homogeneous expressions $X$ and $Y$ such that $B\equiv X\cdot Y$ and
$n/3 \leq \deg X, \deg Y \leq 2n/3 $.
\end{prop}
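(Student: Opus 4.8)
The plan is to reuse the \emph{canonical path} of $B$ together with the prefix/suffix bookkeeping from the proof of \cref{prop:factor}, and to argue that as we walk down that path the accumulated prefix (or the accumulated suffix) is forced to take a degree in $[n/3,2n/3]$; cutting $B$ there yields $X$ and $Y$.

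Concretely, I would set things up as follows. Let $v_0,v_1,\dots,v_t$ be the canonical path of $B$, with $v_0$ the root and $v_t$ a leaf, and put $d_j \isdef \deg B_{v_j}$, so $d_0 = n$ and $d_t = 1$. Since the canonical child $v_{j+1}$ is the child of larger degree (and its sibling has degree $\geq 1$ as $B$ contains no $\epsilon$), we get $d_j/2 \leq d_{j+1} < d_j$. Walking down the path I maintain, exactly as in the proof of \cref{prop:factor}, homogeneous expressions $P_j,S_j$ with $B \equiv P_j\,B_{v_j}\,S_j$, appending the left sibling to $P_j$ when descending to a right child and prepending the right sibling to $S_j$ when descending to a left child. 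Thus $\deg P_j + \deg S_j + d_j = n$; both $\deg P_j$ and $\deg S_j$ are non-decreasing in $j$ and equal $0$ at $j=0$; and a step that changes $\deg P_j$ increases it by the degree $d_j - d_{j+1} \leq d_j/2$ of the left sibling, symmetrically for $\deg S_j$.

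The key step is a ``no-skipping'' claim: $\deg P_j$ can never pass from a value $< n/3$ directly to a value $> 2n/3$ (and likewise for $\deg S_j$). Indeed, such a jump, at some step $j\to j+1$, has size $> n/3$, hence $d_j/2 > n/3$, i.e.\ $d_j > 2n/3$, and therefore $d_{j+1} \geq d_j/2 > n/3$; but then $\deg P_{j+1} + d_{j+1} > 2n/3 + n/3 = n$, contradicting $\deg P_{j+1} + d_{j+1} + \deg S_{j+1} = n$. Now at the leaf $v_t$ we have $\deg P_t + \deg S_t = n - 1$, which for $n \geq 3$ is $\geq 2n/3$, while for $n = 2$ it equals $1$ with both summands non-negative integers; in either case at least one of $\deg P_t,\deg S_t$ is $\geq n/3$. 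Assuming w.l.o.g.\ $\deg P_t \geq n/3$, let $j^\ast$ be the least index with $\deg P_{j^\ast} \geq n/3$. Then $\deg P_{j^\ast-1} < n/3$, so the no-skipping claim forces $\deg P_{j^\ast} \leq 2n/3$, i.e.\ $\deg P_{j^\ast} \in [n/3,2n/3]$. Setting $X \isdef P_{j^\ast}$ and $Y \isdef B_{v_{j^\ast}}\,S_{j^\ast}$ gives $B \equiv XY$ with $\deg X = \deg P_{j^\ast} \in [n/3,2n/3]$ and $\deg Y = n - \deg X \in [n/3,2n/3]$; since $n \geq 2$ both degrees are positive, so $X$ and $Y$ are genuine homogeneous expressions, as required.

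I expect the main obstacle to be precisely the no-skipping claim, i.e.\ checking that the thresholds $1/3$ and $2/3$ are exactly the ones that make the degree accounting close: the whole argument rests on the fact that a single step adds at most half of the current node's degree to $P_j$ (resp.\ $S_j$), which is just enough to rule out jumping over an interval of relative length $1/3$. A secondary routine point is the small-degree case $n = 2$ and the handling of the closed endpoints $n/3$, $2n/3$, but both follow immediately from integrality.
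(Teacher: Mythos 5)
Your proof is correct and follows essentially the same route as the paper's: walk down the canonical path maintaining $B\equiv P\,B_v\,S$, and use the fact that each step increases $\deg P$ (or $\deg S$) by at most half of the current node's degree to conclude that the first time one of them reaches $n/3$ it cannot exceed $2n/3$. Your ``no-skipping'' claim is just a restatement of the paper's bound $\deg P \leq d + (n-d-\deg S)/2 < 2n/3$, and your explicit check that the threshold is actually attained at the leaf is a welcome (if routine) extra bit of care.
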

\begin{proof}
Analogously to the proof of \cref{prop:factor} initialize $P=\epsilon$ and $S=\epsilon$ and follow the canonical path of $B$ downwards, updating $P$ and $S$ in every step such that the invariant $B \equiv P B_v S$ holds.
Eventually, the degree of $P$ or $S$ will jump from $< n/3$ to $\geq n/3$.
At each update step, the degree of $P$ (resp.\ $S$) can increase by at most $(n-\deg P-\deg S)/2$ since we always follow the child of larger degree.
Let $d$ be the degree of $P$ before updating.
Then, after updating, $\deg P \leq d + (n-d-\deg S)/2 = (d+n-\deg S)/2 < (n/3+n -0)/2 = 2n/3$ must hold (resp. $\deg S < 2n/3$ must hold).
This gives us the factorization $B \equiv X\cdot Y$ with $X=P$ and $Y=B_v S$ (resp. $X=P B_v$ and $Y=S$).
\end{proof}

\subsection{Utilizing non-commutativity}  \label{sec:utilize}

In \cref{sec:limits} we discussed that the arithmetic bound (\cref{thm:arith}) is incapable of giving a nontrivial lower bound for the palindrome language 
$L=\Lpalin{2n}= 
\{ww^{\text{reverse}} : w \in \{0,1\}^{n}\}$.
Now, with \cref{thm:logprod} we can give a (suboptimal but nevertheless exponential) lower bound. 
This bound itself is not of interest, we present it only as proof of concept.

Let $n=3k$ for a positive integer $k$.
Take an arbitrary log-product expression $B$ with $L(B) \subseteq  L$. By \cref{prop:factor2} there are homogeneous expressions $X, Y$ such that
$B\equiv X\cdot Y$ and $2n/3 \leq \deg X, \deg Y \leq 4n/3$. Consider a word $w=w_1 \cdots w_{2n}$ described by $X\cdot Y$.
Since $\deg X \geq 2n/3=2k$, all letters $w_{1}, \dots, w_{2k}$ belong to~$X$, and since $\deg Y \geq 2n/3 = 2k$, all letters $w_{4k+1} ,\dots , w_{6k}$ belong to~$Y$. 
By definition of $L$, we must have $w_{i} = w_{6k+1-i}$ for all $i \in [3k]$. 
Thus, all words in $L(X)$ have $w_{6k} w_{6k-1} \cdots w_{4k}$ as prefix and all words in $L(Y)$ have $w_{2k} w_{2k-1} \cdots w_{1} $ as suffix.
Hence, $|L(B)| = |L(X)|\cdot |L(Y)| \leq 2^{k} \cdot 2^{k}  \defis h$, and
\cref{thm:logprod} yields $\rpn{L} \geq |L|/h \geq 2^{3k} / 2^{2k} = 2^k =2^{n/3}$.

\section{Applications of the log-product bound} \label{sec:app-bal}

In the next three subsections we demonstrate applications of \cref{thm:logprod} on the divisibility language $\Ldiv{n}{k}$, the parity language $\Leven{n}{k}$ and the permutation language $\Lperm{n}$.
We will abbreviate the number $|L(R)|$ of described words of an expression $R$ to $|R|$.

\subsection{The divisibility language} \label{sec:div}
Let $p$ be an odd integer. Ellul et al.\ \cite{ellul2004} considered the language of all binary numbers that are divisible by $p$. 
This language has small DFAs with just $p$ states (see \cref{ex:div}), but it seems that expressions must be large.  
However, no lower bound is known so far.
Here we consider the $n$-slice of this language. 
For a word $w\in \{0,1\}^*$ denote its interpretation as binary number by $\bin{w}$; we assume that the most significant bit is the leftmost letter, for example $\bin{0101}=5$, and for convenience let $\bin{\epsilon}=0$. 
The \emph{divisibility language} \[\Ldiv{n}{p}=\big\{ w\in \{0,1\}^n : \bin{w} \equivp 0 \;\! \big\}\] consists of all binary numbers with $n$ bits that are divisible by $p$.
This language also has small DFAs with $O(np)$ states, and expressions of length $\rpn{\Ldiv{n}{p}} \leq O(np \cdot p^{\log (n / \log p)})$ (see below). 
So, the following lower bound is tight, apart from small polynomial factors.

\begin{thm}[Divisibility language] \label{thm:div}
Let $p>2$ be odd. 
Then any regular expression describing the divisibility language $\Ldiv{n}{p}$ has length at least
\[\Rpn{ \Ldiv{n}{p}} \geq \Omega \:\! \big( n^{-1} p^{\log ( n/ \log p ) -2 } \big) \; .\]
\end{thm}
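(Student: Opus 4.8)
The plan is to apply the log-product bound, \cref{thm:logprod}, with $L = \Ldiv{n}{p}$. By that theorem it suffices to exhibit some $h \in \RRpos$ with $|L(B)| \le h$ for every log-product expression $B$ satisfying $L(B) \subseteq \Ldiv{n}{p}$, and then $\rpn{\Ldiv{n}{p}} \ge |\Ldiv{n}{p}| / h$. The number of $n$-bit integers divisible by $p$ is $\lfloor (2^n-1)/p\rfloor + 1 \ge 2^{n-1}/p$ once $n \ge 1 + \log p$ (for smaller $n$ the claimed bound is trivial), so the task reduces to showing that $h$ can be taken of order $2^n (2/p)^{\log(n/\log p)}$; that is, divisibility by $p$ must force every admissible log-product expression to describe few words.

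So fix a log-product expression $B$ of degree $n$ with $L(B) \subseteq \Ldiv{n}{p}$, and apply the $\gamma$-factorization of \cref{prop:factor} with $\gamma \isdef \log p$ (note $\gamma \ge 1$ since $p > 2$): we obtain homogeneous expressions with $B \equiv P_1 P_2 \cdots P_m S_m S_{m-1} \cdots S_1$ where $m \ge \log(1 + n/\log p)$, $\deg P_i + \deg S_i \ge \log p$ for all $i \in [m-1]$, and $\deg P_m + \deg S_m \le \log p$. Since the language of a concatenation is the product of the languages, $L(B)$ is exactly the set of words $u_1 \cdots u_m\, t_m \cdots t_1$ with $u_i \in L(P_i)$ and $t_i \in L(S_i)$; each of the $2m$ factors occupies a fixed block of consecutive positions, so the value mod $p$ of such a word is $\sum_k 2^{e_k}\bin{v_k} \bmod p$, summed over the blocks $k$, with shift exponents $e_k \ge 0$ and $v_k$ the subword in block $k$. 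The key point is a \emph{residue-constancy} property: for each factor $F$ among $P_1, \dots, P_m, S_1, \dots, S_m$, all words of $L(F)$ are congruent modulo $p$ (identifying a word $v$ with $\bin{v}$). Indeed, fix any $w \in L(B)$; for $u, u' \in L(F)$, replacing the $F$-block of $w$ by $u$ resp.\ $u'$ yields two words of $L(B) \subseteq \Ldiv{n}{p}$, both divisible by $p$, whose difference is $2^{e_F}\bigl(\bin{u} - \bin{u'}\bigr)$, so $\gcd(2,p)=1$ forces $\bin{u} \equivp \bin{u'}$. Hence the number of binary words of any fixed length $d$ that can occur as an $F$-block of degree $d$ is at most the number of integers of $[0, 2^d)$ in a fixed residue class mod $p$, i.e.\ at most $\lceil 2^d/p\rceil$, which is $1$ when $2^d < p$ and at most $2\cdot 2^d/p$ when $2^d \ge p$.

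The remainder is bookkeeping. Fix $i \in [m-1]$ and set $a = \deg P_i$, $b = \deg S_i$, so $a + b \ge \log p$. A short case distinction — both $2^a, 2^b \ge p$; exactly one of them $\ge p$; or both $< p$, in which case $|L(P_i)|\,|L(S_i)| \le 1 \le 2^{a+b}/p$ since $2^{a+b} \ge 2^{\log p} = p$ — shows $|L(P_i)|\,|L(S_i)| \le 2\cdot 2^{a+b}/p$ in every case, while trivially $|L(P_m)|\,|L(S_m)| \le 2^{\deg P_m + \deg S_m}$. Multiplying over all pairs and using $\sum_{i=1}^m(\deg P_i + \deg S_i) = n$, and writing $\mu \isdef \log(1 + n/\log p)$,
\[
|L(B)| \;\le\; \Bigl(\tfrac{2}{p}\Bigr)^{m-1} 2^{n} \;\le\; \Bigl(\tfrac{2}{p}\Bigr)^{\mu-1} 2^{n} \defis h ,
\]
where the second inequality holds because $2/p \le 2/3 < 1$ and $m \ge \mu$. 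Plugging $h$ and $|\Ldiv{n}{p}| \ge 2^{n-1}/p$ into \cref{thm:logprod} gives $\rpn{\Ldiv{n}{p}} \ge \tfrac{2^{n-1}}{p}\,(p/2)^{\mu-1}\,2^{-n} = p^{\mu-2}/2^{\mu}$; since $2^{\mu} = 1 + n/\log p \le 2n$ and $p^{\mu} = (1+n/\log p)^{\log p} \ge (n/\log p)^{\log p} = p^{\log(n/\log p)}$, this yields $\rpn{\Ldiv{n}{p}} \ge \Omega\bigl(n^{-1} p^{\log(n/\log p)-2}\bigr)$, as claimed.

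I expect the main obstacle to be the residue-constancy property and the count it feeds: one must use that the $\gamma$-factorization presents $L(B)$ as a \emph{genuine product} of the factor languages (so that swapping one block in a word of $L(B)$ keeps it in $L(B)$), convert the congruence from $\gcd(2,p)=1$ into the bound $\lceil 2^d/p\rceil$ with its two regimes, and — crucially — check that the per-pair blow-up stays strictly below $p$ (here a harmless factor $2$, using $p \ge 3$), so that raising it to the power $m-1 \ge \log(n/\log p)-1$ helps rather than hurts. Choosing $\gamma$ too small would not guarantee a length-$\log p$ block in each pair, while choosing it too large would shrink $m$; the value $\gamma = \log p$ is what balances these.
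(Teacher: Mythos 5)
Your proposal is correct and takes essentially the same route as the paper: a $\gamma$-factorization with $\gamma=\log p$, the observation that oddness of $p$ forces every factor's language into a single residue class mod $p$, the per-factor count $\lceil 2^d/p\rceil$, and the product bound $|L(B)|\le 2^n(2/p)^{m-1}$. The only cosmetic difference is that the paper bounds each pair by noting the formal concatenation $F_i=P_i\cdot S_i$ is itself residue-constant with $\deg F_i\ge\log p$, which replaces your three-way case distinction on whether $2^{\deg P_i},2^{\deg S_i}\ge p$.
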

In particular, if $p$ is constant, then $ \Omega ( n^{-1} p^{\log n} ) \leq \rpn{\Ldiv{n}{p}} \leq O(n \:\! p^{\log n})$ holds. 
\begin{proof}
Let $L\;\!\!=\;\!\!\Ldiv{n}{p}$ and $B$ be any log-product expression with $L(B) \subseteq L$.
We~will show an upper bound $h = 2^n n  \cdot p^{-\log (n/ \log p)+1}$ on the number $|B|$ of words in $L(B)$. 
Every $p$-th natural number (beginning with~$0$) is divisible by~$p$, so there are $|L|\geq 2^n/p$ words in $L$. 
Hence, the claimed bound $\rpn{L} \geq |L|/h \geq 2^n/p \cdot  2^{-n} n^{-1}p^{\log (n/ \log p)-1} = \Omega ( n^{-1} p^{\log (n/ \log p) -2 })$  will follow by \cref{thm:logprod}.

For $\gamma  \isdef  \log p$ let $P_1 \cdots P_m S_m \cdots S_1$ be a $\gamma$-factorization of $B$ ensured by \cref{prop:factor}. To prove the bound on $|B|$ we upper bound the number of words described by each factor. 
For $r \in \{0,1,\dots ,p\!-\!1\}$ and $d \in \NN$ let $\Ldivv{d}{p}{r} \isdef \{ w \in \{0,1\}^d: \bin{w} \equivp r \}$ be the language of all $d$-bit numbers that have remainder~$r$ when divided by $p$, for example $\Ldivv{n}{p}{0}= \Ldiv{n}{p}$.
For a word $w$ with $\bin{w} \equivp r $ and a word $x$ of length $d$ with $\bin{x} \equivp r' $, their concatenation satisfies $\bin{wx} \equivp r \cdot 2^d +r'$. 
Since $p$ is odd, the mapping $r \mapsto r\cdot 2^d \bmod p$ is a bijection over $\{0,1,\dots, p  - 1\}$ for every~$d$. 
Thus,  if $\bin{w} \not \equivp \bin{v} $ holds for two words $w$ and $v$ of same length, then also $\bin{wx} \not \equivp \bin{vx} $ and $\bin{xw} \not \equivp \bin{xv} $ hold for any word $x$.

Call a homogeneous expression $T$ \emph{pure}, if all words in $L(T)$ have the same remainder when divided by $p$, that is, if $L(T) \subseteq \Ldivv{d}{p}{r}$ holds for some~$r$ and $d$. 
\begin{clm} \label{clm:pure-div}
Every factor $F$ of $B$ is pure. 
\end{clm}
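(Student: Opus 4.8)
The plan is to show that any factor $F$ of $B$ is \emph{pure} by exploiting the observation, made just before the claim, that for a factor $F$ there are words $x, y$ with $L(x F y) \subseteq L(B) \subseteq L = \Ldiv{n}{p}$. Fix such $x, y$ and let $d \isdef \deg F$. First I would note that every word in $L(xFy)$ has the form $x w y$ for some $w \in L(F)$, and since $\bin{xwy} \equivp 0$ for all such words, I want to deduce that $\bin{w}$ is the same modulo $p$ for every $w \in L(F)$.

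The key step uses the bijection property already established in the proof: since $p$ is odd, concatenation respects congruences in the sense that $\bin{v} \not\equivp \bin{w}$ (for $v, w$ of the same length) implies $\bin{xvy} \not\equivp \bin{xwy}$ for any words $x, y$. Concretely, suppose for contradiction that $F$ is not pure, i.e.\ there are $v, w \in L(F)$ with $\bin{v} \not\equivp \bin{w}$. Both have length $d = \deg F$ since $F$ is homogeneous. Then $x v y$ and $x w y$ both lie in $L(xFy) \subseteq \Ldiv{n}{p}$, so $\bin{xvy} \equivp 0 \equivp \bin{xwy}$. But applying the non-congruence-preservation property twice — once for the suffix $y$ to get $\bin{vy} \not\equivp \bin{wy}$, and once for the prefix $x$ to get $\bin{xvy} \not\equivp \bin{xwy}$ — yields a contradiction. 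Hence all words in $L(F)$ share a common remainder $r$ modulo $p$, so $L(F) \subseteq \Ldivv{d}{p}{r}$, i.e.\ $F$ is pure.

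The only mild subtlety is making sure the words $x$ and $y$ witnessing $L(xFy) \subseteq L(B)$ genuinely exist and that the lengths line up: this follows because $F$ is one of the factors arising from the canonical-path factorization $B \equiv F_1 \cdots F_m$, so $x$ can be taken to describe one word of $L(F_1 \cdots F_{i-1})$ and $y$ one word of $L(F_{i+1} \cdots F_m)$ when $F = F_i$; homogeneity of $B$ (hence of all these subexpressions) guarantees $|x| + d + |y| = n$. I do not expect any real obstacle here — the argument is essentially a direct unwinding of the congruence-bijection fact the authors set up immediately before stating the claim; the main point is simply to apply it in the contrapositive direction to both the prefix and the suffix surrounding the factor.
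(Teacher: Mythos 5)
Your proof is correct and follows essentially the same route as the paper: assume two words of $L(F)$ with distinct remainders, embed them as $xvy, xwy \in L(B) \subseteq L$ using the factorization property, and derive a contradiction from the fact that (for odd $p$) concatenating a common prefix and suffix preserves non-congruence. You merely make explicit the two-step (suffix then prefix) application of the bijection $r \mapsto r\cdot 2^d \bmod p$ that the paper compresses into one sentence.
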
 
\begin{proof}[Proof of \cref{clm:pure-div}]
Assume to the contrary that some factor $F$ of $B$ is not pure, i.e., there are words $w,v \in L(F)$ with $ \bin{w} \not \equivp \bin{v} $. 
Since $F$ is a factor of $B$, there are words $x,y$ such that $ x w y , x v y \in  L(B) \subseteq L$. 
The observation above yields $\bin{x w y} \not \equivp \bin{x v y} $. 
But all words in $L$ must have the same remainder $r=0$, a contradiction.
\clmproof{clm:pure-div}
\end{proof}
Let  $F_i  \isdef  P_i \cdot S_i$ and  $d_i  \isdef  \deg F_i = \deg P_i + \deg S_i$. 
The argument above also implies that concatenations of pure expressions are pure themselves, so every expression $F_i$ is pure.
For all $d$ and all $r$ we have $ \big|\Ldivv{d}{p}{r} \big|\leq  \big\lceil 2^d / p \big\rceil \leq  2^d / p +1$, and
since every expression $F_i$ describes a subset of some language $\Ldivv{d_i}{p}{r}$, the inequality $|F_i| \leq  2^{d_i}/p +1$ holds.
\cref{prop:factor} ensures that $d_i \geq \gamma = \log p$ holds for all $i \in [m\!-\!1]$  and therefore $|F_i| \leq 2^{d_i} / p+1  \leq 2 \cdot  2^{d_i} /p $. Further, since our alphabet is binary, trivially  $|F_m| \leq 2^{d_m}$ must hold.
Finally, recall that $\sum_{i=1}^m d_i = n$ and $m \geq \log (1+n /\gamma ) \geq \log(n/ \log p)$. 
Hence,
\begin{align*}
~~~~~|B| 
& \,= \, \prod_{i=1}^m  |F_i|
\,\leq\, 2^{d_m} \cdot \! \prod_{i=1}^{m-1}  \medfrac{ 2\cdot 2^{d_i}}{p} 
\,=\,  2^{\sum_{i=1}^{m} d_i} \cdot (2/p)^{m-1} &~\\[2pt]
&\,\leq\, 2^{n} \, (2/p)^{\log (n/ \log p)-1} 
\,\leq\, 2^n n \cdot p^{-\log (n/ \log p)+1}  \;  \defis  h \; .  \tag*{\qedhere}
\end{align*} 
\end{proof} 
\vspace*{1mm}

\paragraph{Upper bound} \label{sec:div-upper}  In \cref{ex:div} we gave an upper bound $\rpn{ \Ldiv{n}{p} }\leq O(np \cdot p^{\log n})$. Now we improve this bound to $O(np \cdot p^{\log (n / \log p)} \:\! )$.
To simplify notation assume that $n$ is a power of $2$. 
For each $r \in \{0,\dots, p - 1\}$ and $d \in \{n, n/2, n/4, \dots\}$ define the expression $\Rdiv{d}{p}{r}$ for the language $\Ldivv{d}{p}{r}$ recursively by
$\Rdiv{d}{p}{r} \isdef  
\sum_{r_1, r_2} \Rdiv{ d/2  }{ p } { r_1 }  \cdot \Rdiv{ d/2 }{ p } { r_2 },$
where the sum ranges over all $p$ combinations $(r_1, r_2)$ with $r_1 2^{  d/2  }+r_2 \equivp r$.
If $d < \log p$, then $\Ldivv{d}{p}{r}$ contains at most one single word $w$, in this case let $\Rdiv{d}{p}{r} \isdef w$.
Finally, $\Rdiv{n}{p}{0}$ describes the language $\Ldivv{n}{p}{0} =  \Ldiv{n}{p}$.
This recursion has $2p$ branches in each step, depth at most $\lceil \log (n / \log p) \rceil$ and every base case expression has length at most $\log p$. 
Thus, length $\rpn{\Ldiv{n}{p}{}}\leq O( \log p \cdot(2p)^{\lceil\log (n / \log p)\rceil}) \leq O(np  \cdot p^{\log (n / \log p)} \:\! )$ suffices.

\paragraph{Blow-up of DFA Conversion} \label{sec:conversion}
Gruber and Johannsen \cite{gruber2008cc} showed that converting a DFA for a finite language into an expression can cause a blow-up of $n^{(\log n)/192}$, this is optimal apart from the factor in the exponent. (They actually stated the factor $1/75$ instead of $1/192$ in \cite[Thm.\ 10]{gruber2008cc}, but there seems to be a minor mistake in their proof regarding the number of states of the DFA.)
We now can improve this factor to $1/4 - o(1)$.
\begin{cor} \label{cor:conversion}
There are finite languages $L_m$ that can be accepted by DFAs with $m$ states, but require regular expressions of length at least $m^{(\log m)/4 - \Theta(\log \log m)}$.
\end{cor}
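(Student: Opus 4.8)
The plan is to instantiate the divisibility language $\Ldiv{n}{p}$ with a carefully chosen modulus $p$ (growing with $n$) so that both the DFA-size upper bound and the expression-length lower bound from \cref{thm:div} and the paragraph ``Upper bound'' become bounds in a single parameter $m$. First I would recall that $\Ldiv{n}{p}$ is accepted by a layered DFA of width $p$ and length $n$ (see \cref{ex:div}), hence by a DFA with $m \isdef O(np)$ states. Meanwhile \cref{thm:div} gives $\rpn{\Ldiv{n}{p}} \geq \Omega(n^{-1} p^{\log(n/\log p) - 2})$. The task is to choose $p = p(n)$ (odd) so that the exponent $\log(n/\log p)$ is as close to $\log m$ as possible while keeping $\log p$ close to $\log m$ as well, so that $p^{\log(n/\log p)}$ is roughly $m^{\log m}$.

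The natural choice is to take $n$ and $p$ of roughly the same order, say $p = \Theta(n)$ with $p$ odd; then $m = O(np) = O(n^2)$, so $\log m = 2\log n - \Theta(1)$, i.e.\ $\log n = \tfrac12 \log m + \Theta(1)$. With this choice $\log p = \log n + \Theta(1) = \tfrac12\log m + \Theta(1)$, and $n/\log p = \Theta(n/\log n)$, so $\log(n/\log p) = \log n - \log\log n + \Theta(1) = \tfrac12\log m - \Theta(\log\log m)$. Plugging into the lower bound of \cref{thm:div},
\[
\rpn{\Ldiv{n}{p}} \;\geq\; \Omega\!\bigl(n^{-1} p^{\log(n/\log p) - 2}\bigr)
\;=\; p^{\,\tfrac12\log m - \Theta(\log\log m)}
\;=\; m^{\,\tfrac14 \log m - \Theta(\log\log m)},
\]
using $p = m^{1/2 + o(1)}$ and absorbing the $n^{-1}$ and the $-2$ in the exponent into the $\Theta(\log\log m)$ term. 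So setting $L_m \isdef \Ldiv{n}{p}$ for this choice of $n,p$ (adjusting $p$ to the nearest odd integer, and letting $m$ range over the resulting values, filling gaps by padding with dead states) yields DFAs with $m$ states requiring expressions of length $m^{(\log m)/4 - \Theta(\log\log m)}$, which is exactly the claim.

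The main obstacle is purely bookkeeping: one must track the additive constants and the $\log\log$ terms through the exponent carefully, since the statement claims the precise factor $1/4$ in front of $\log m$ and only a $\Theta(\log\log m)$ loss. Concretely, one has to verify that $2^{n} n \cdot p^{-\log(n/\log p)+1}$ (the quantity $h$ in the proof of \cref{thm:div}) and the count $|L| \geq 2^n/p$ combine so that the $2^n$ factors cancel and the surviving polynomial-in-$n$ factors only perturb the exponent of $m$ by $O(\log\log m)$; this requires checking that $n = m^{1/2+o(1)}$ and $\log p = (1+o(1))\tfrac12\log m$ so that $n^{\pm 1} = m^{o(1)} = p^{\,o(\log m)}$. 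I would also need to confirm the matching upper bound $\rpn{\Ldiv{n}{p}} \leq O(np\cdot p^{\log(n/\log p)}) = m^{(\log m)/4 + O(\log\log m)}$ from the ``Upper bound'' paragraph, so that $1/4$ is the truth and not merely a lower estimate — though strictly speaking only the lower bound is needed for \cref{cor:conversion}. Beyond that the argument is a one-line substitution into \cref{thm:div}.
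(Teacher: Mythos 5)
Your proposal is correct and follows essentially the same route as the paper: the paper also instantiates $\Ldiv{n}{p}$ with $n=p=\sqrt{m}$ (so the layered DFA has at most $np=m$ states) and substitutes into \cref{thm:div}, absorbing the $n^{-1}$ and the $-2$ in the exponent into the $\Theta(\log\log m)$ term exactly as you describe. The only difference is cosmetic — you take $p=\Theta(n)$ rather than $p=n$ exactly — and your bookkeeping of the exponent is sound.
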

\begin{proof}
Let $n= p=\sqrt{m} \geq 3$ and $L_m = \Ldiv{n}{p}$.
This language can be accepted by a DFA with $m=n p$ states. 
The slice DFA described for $\Ldiv{n}{p}$ in \cref{ex:div} has at most $p$ states in each layer $Q_j$ for $j \in [n-1]$, and additionally one initial, one final and one trap state.
This leaves us with at most $3 + (n-1)p \leq np = m$ states in total, since $p\geq 3$.
\Cref{thm:div} yields the lower bound 
$\rpn{ \Ldiv{n}{p} } \geq n^{-1} \cdot p^{\log(n/\log p)-2 }$.
For $n=p=m^{1/2}$ we get $\rpn{ L_m } \geq  n^{\log(n) - \Theta(\log \log n) }
= m^{ (\log m) /4- \Theta(\log \log m )}$.
\end{proof}

\subsection{The parity language} \label{sec:par}
Let $\Sigma=\{1,\dots,k\}$ be an alphabet for $k\geq 2$ and $n$ be even. Consider the \emph{parity language} 
$$\Leven{n}{k} =\big\{w \in \Sigma^n: |w|_j \equiv_2 0\textup{  for all } j \in \Sigma \,\big\}$$ 
of all length $n$ words that have an even number of occurrences of every letter. This language naturally generalizes the XOR language $\Lxor{n}=\{w \in  \{0,1\}^n : |w|_1 \text{ is even} \}$ mentioned in the introduction. It can be accepted by DFAs of size $O( 2^k\;\! n)$ and has expressions of length  $O(k \hspace{1pt}2^k n^k)$, see below.  We give an almost matching lower bound.

\begin{thm}[Parity language] \label{thm:even} Let $n$ be even and $k \geq 2$.
Then the parity language $\Leven{n}{k}$ requires regular expressions of length $\rpn{\Leven{n}{k}} \geq    \Omega(
n^{k-2} \, (4k \ln k )^{2-k} ) = n^{k-2} \, k^{-\Theta(k)}$.
\end{thm}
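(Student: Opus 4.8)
The plan is to apply the log-product bound (\cref{thm:logprod}) with the weighting-free version, so I need two ingredients: a lower bound on $|\Leven{n}{k}|$ and, for every log-product expression $B$ with $L(B)\subseteq \Leven{n}{k}$, an upper bound $h$ on $|L(B)|$; the theorem then gives $\rpn{\Leven{n}{k}}\geq |\Leven{n}{k}|/h$. For the size of the language, a standard counting argument (e.g.\ via generating functions or a parity-inclusion-exclusion over the $2^k$ parity patterns) gives $|\Leven{n}{k}| = \Theta(k^n / n^{(k-1)/2})$ up to a $2^{-k}$-type factor; the precise constant does not matter, only that it is $k^n$ divided by a polynomial in $n$ of degree $(k-1)/2$ times a function of $k$ alone. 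I would state this as a small lemma and prove it by writing $|\Leven{n}{k}| = \tfrac{1}{2^k}\sum_{S\subseteq\Sigma} (k-2|S|)^n$ and extracting the dominant terms.

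For the upper bound on $|L(B)|$, I would take a $\gamma$-factorization $B \equiv P_1\cdots P_m S_m\cdots S_1$ from \cref{prop:factor} with $\gamma$ chosen of order $c\,k\ln k$ (the right scale will emerge from the calculation below), giving $m \geq \log(1+n/\gamma)$ factors $F_i \isdef P_i S_i$ of degrees $d_i$ with $\sum d_i = n$ and $d_i \geq \gamma$ for $i<m$. The key structural observation, paralleling \cref{clm:pure-div}: since every word of $\Leven{n}{k}$ has even count of each letter, and each factor $F_i$ sits inside a context $xF_iy$ lying in $\Leven{n}{k}$ with the contexts shared among all words of $L(F_i)$, all words in $L(F_i)$ must have the \emph{same parity vector} in $(\ZZ/2)^k$; that is, $L(F_i)$ is contained in one of the $2^k$ ``parity classes'' among length-$d_i$ words. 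Then I bound the number of length-$d$ words over $[k]$ with a fixed parity vector by (again) $\Theta(k^d/d^{(k-1)/2})$ up to a factor depending only on $k$ — the same estimate as for the full language, since all $2^k$ parity classes have roughly equal size. So $|F_i| \leq C(k)\, k^{d_i}/d_i^{(k-1)/2}$ for $i<m$ (using $d_i\geq\gamma$ to control the polynomial factor), and $|F_m|\leq k^{d_m}$ trivially. Multiplying, $|L(B)| = \prod |F_i| \leq k^{\sum d_i}\cdot C(k)^{m-1}\cdot \prod_{i<m} d_i^{-(k-1)/2} = k^n \cdot C(k)^{m-1} \cdot \big(\prod_{i<m} d_i\big)^{-(k-1)/2}$.

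Now I need the product $\prod_{i<m} d_i$ to be large. This is where $\gamma$ and $m$ interact: each $d_i\geq\gamma$, so $\prod_{i<m} d_i \geq \gamma^{m-1}$, but that alone is too weak; better is that $\sum_{i<m} d_i \leq n$ forces, by AM–GM, a bound, yet I actually want a \emph{lower} bound on the product, which AM–GM gives in the wrong direction. The correct move is: since all but the last $d_i$ are $\geq\gamma$ and there are $m-1$ of them with total at most $n$, I should instead just use $\prod_{i<m} d_i \geq \gamma^{m-1}$ together with $m-1 \geq \log(n/\gamma)$, giving $\big(\prod_{i<m}d_i\big)^{-(k-1)/2}\leq \gamma^{-(k-1)(m-1)/2}$. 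Then $h \isdef k^n \cdot C(k)^{m-1}\gamma^{-(k-1)(m-1)/2} = k^n \cdot \big(C(k)\,\gamma^{-(k-1)/2}\big)^{m-1}$, and choosing $\gamma \asymp C(k)^{2/(k-1)}\cdot(\text{const}) \asymp k\ln k$ (this is exactly where the $4k\ln k$ comes from, $C(k)$ being of order $k^{k/2}$ or so) makes the base $C(k)\gamma^{-(k-1)/2}$ a bounded constant — or even $\leq 1$ — so that $h \leq k^n \cdot k^{\Theta(k)}$, but more carefully, tuning $\gamma$ so the base is a small constant, I get $h \leq k^n / n^{?}$...

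Let me reconsider: the genuinely delicate point, and the main obstacle, is extracting a polynomial-in-$n$ saving. The saving must come from $m\approx\log(n/\gamma)$ factors of the polynomial penalty, i.e.\ the $\gamma$ in the denominator should be replaced by something $n$-dependent. The trick (as in Hrube\v{s}–Yehudayoff): don't lower-bound every $d_i$ by $\gamma$; instead note that the $d_i$ for small $i$ in a canonical-path factorization can be taken to decrease geometrically, so roughly $d_i \approx n/2^i$ (the degree halves along the path), hence $\prod_{i=1}^{m-1} d_i \approx \prod_i n/2^i \approx n^{m-1}/2^{m^2/2}$, and with $m\approx\log(n/\gamma)$ this is $n^{\Theta(\log(n/\gamma))}$ up to $2^{-\Theta(\log^2 n)} = n^{-\Theta(\log n)}$ corrections — so $\big(\prod d_i\big)^{-(k-1)/2}$ contributes $n^{-(k-1)(m-1)/2 + \Theta(\log^2 n)\cdot\text{stuff}}$. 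Comparing with $|\Leven{n}{k}| = k^n/(n^{(k-1)/2}\cdot\text{poly}(k))$, the $k^n$ cancels, one power $n^{-(k-1)/2}$ survives from the numerator of $|L|$, and... actually for the stated bound $n^{k-2}k^{-\Theta(k)}$ the arithmetic must be: $|L|/h \geq \big(k^n\, n^{-(k-1)/2}\, D(k)\big) / \big(k^n\, n^{-(k-1)(m-1)/2}\, C(k)^{m-1}\big)$ and with $m-1 = \log(n/\gamma)$ and $\gamma = 4k\ln k$ the exponent of $n$ works out to $\Theta((k-1)\log(n/\gamma)/2)$...

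I will stop optimizing in the sketch — the point is clear: \textbf{the main obstacle} is choosing $\gamma$ of order $k\ln k$ so that (a) the per-factor polynomial penalty $d_i^{-(k-1)/2}$ dominates the per-factor multiplicative loss $C(k)$ in passing from the true language size to the parity-class bound, and (b) the number of factors $m\approx\log(n/\gamma)$ is still $\Theta(\log n)$, so the accumulated $n$-saving over all factors reaches the claimed $n^{k-2}$ (the exponent $k-2$ rather than $k-1$ accounting for the one factor $F_m$ that carries no saving, plus the $n^{-(k-1)/2}$ lost from $|L|$ itself, modulo a careful bookkeeping of halving). The $k^{-\Theta(k)}$ absorbs all $k$-dependent constants ($C(k)$, the constants in $|L|$, and $\gamma^{(k-1)/2}=(4k\ln k)^{(k-1)/2}$). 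Once $\gamma$ is pinned down, everything else is the routine multiplication already rehearsed in the proof of \cref{thm:div}. I would also state separately a \cref{clm} (analogous to \cref{clm:pure-div}) that every factor lies in a single parity class, and a counting \cref{lem} for $|\{w\in[k]^d : \text{parity vector} = \rho\}|$, both with short self-contained proofs, then assemble $h$ and invoke \cref{thm:logprod}. For the advertised upper bound $O(k\,2^k n^k)$, a layered DFA with $2^k$ states (one per parity vector), width $2^k$, length $n$, one final state, plugged into \cref{prop:dfa2regex}, gives $O(k\cdot 2^k\cdot n\cdot 2^k\cdot (2^k)^{\log n})$ — hmm, that is $n^{\log 2^k + O(1)}$, too big; instead one uses the cruder direct construction $\prod_{j\in[k]} \Lxor{}$-type expression or a product-of-parities argument yielding $n^k$, which I would note briefly.
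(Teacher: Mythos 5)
Your skeleton matches the paper's: apply \cref{thm:logprod} via a $\gamma$-factorization (\cref{prop:factor}) with $\gamma$ of order $k\ln k$, observe that every factor is ``pure'' (confined to a single parity class, exactly \cref{clm:pure-even}), bound each factor by the size of its parity class, and multiply. But your central counting lemma is wrong, and because of that the mechanism by which $n^{k-2}$ is supposed to appear never materializes — which is why your sketch ends with the arithmetic unresolved.

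Concretely: $|\Leven{n}{k}|$ is \emph{not} $\Theta(k^n/n^{(k-1)/2})$. Check $k=2$: the number of words in $\{1,2\}^n$ with an even number of each letter is $\binom{n}{0}+\binom{n}{2}+\cdots = 2^{n-1}$, with no $\sqrt{n}$ in sight. In general the parity vector of a uniformly random word is (for $n\gtrsim k\ln k$) nearly uniform over the $2^{k-1}$ reachable classes, so $|\Leven{n}{k}| = \Theta(k^n 2^{-k})$ and every parity class of length-$d$ words has between roughly $k^d 2^{1-k}$ and $k^d 2^{2-k}$ elements (the paper's \cref{lem:cube}, proved via an eigenvalue formula for the hypercube walk). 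There is no polynomial-in-$d$ concentration for parity constraints; you are importing the $\binom{n}{k}$-type concentration from the Hrube\v{s}--Yehudayoff argument for the binomial language, where the constraint is an exact count rather than a parity. Consequently the saving per factor is the \emph{constant} $2^{-(k-2)}$ (the gap between the trivial bound $k^{d_i}$ and the one-class bound $k^{d_i}2^{2-k}$), not $d_i^{-(k-1)/2}$, and the $n^{k-2}$ in the theorem arises simply because this constant saving is collected $m-1 \geq \log(n/(k\ln k))-1$ times: $(2^{k-2})^{m-1} \geq n^{k-2}(k\ln k)^{-(k-2)}\cdot 2^{-(k-2)}$. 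All of your struggles with $\prod_i d_i$, AM--GM, and geometrically decaying degrees are chasing a quantity that plays no role here; the role of the threshold $d_i\geq\gamma=k\ln k$ is only to guarantee that the walk on the hypercube has mixed enough for the per-class upper bound $k^{d_i}2^{2-k}$ to hold. With the corrected counting lemma the rest of your outline (purity claim, product over factors, division by $h$) assembles exactly as in the paper, but as written the proof does not close.

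(Minor side point: your dismissal of the DFA route for the upper bound is also mistaken — after discarding the unreachable half of each layer the slice DFA has width $2^{k-1}$, and \cref{prop:dfa2regex} gives $O(kn\,2^{k-1}\cdot(2^{k-1})^{\log n}) = O(k\,2^k n^k)$, which is the advertised bound.)
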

\noindent
In particular, if $k$ is constant, then $ \Omega ( n^{k-2}) \leq \rpn{\Leven{n}{k}} \leq O(n^{k})$ holds. 

To prove \cref{thm:even} it will be convenient to have a look at the minimal DFA $\Aeven{k}$ for the \emph{infinite} variant $\Levenn{k}  \isdef  \{w \in \Sigma^* : |w|_{j} \equiv_2 0\textup{ for all } j \in \Sigma \}$ of the parity language. 
This DFA has states $Q=\{0,1\}^k$ where the $j$-th  bit $q_j$ of a state $q\in Q$ indicates the parity of the number of occurrences of letter $j$.
The initial state and the only final state is the all-$0$ vector $\vec{0}$. 
The transitions are defined as $\delta(q,j)=q + \vec{e}_j \bmod 2$ for all $j \in [k]$ where $\vec{e}_j$ is the $j$-th unit vector. That is, when reading the letter~$j$, the automaton flips the $j$-th bit of the current state. 
The underlying graph of $\Aeven{k}$ is the $k$-dimensional hypercube.
An example is given in \cref{fig:cube}.

\begin{figure}[t]
\begin{center}
\begin{tikzpicture}[scale=0.87,semithick]
\tikzstyle{tstate}=[state,scale=0.92]
\node[tstate,initial,accepting] (000) at(0,0) {\small 000};
\node[tstate]  (001) at (3,0) {\small 001};
\node[tstate]  (010) at (0,3) {\small 010};
\node[tstate]  (011) at (3,3) {\small 011};
\node[tstate]  (100) at (0+1.58,0+1.58) {\small 100};
\node[tstate]  (101) at (3+1.58,0+1.58) {\small 101};
\node[tstate]  (110) at (0+1.58,3+1.58) {\small 110};
\node[tstate]  (111) at (3+1.58,3+1.58) {\small 111};

\draw (000) edge[<->]  node[above=-1pt,scale=0.90] {$3$} (001);
\draw (010) edge[<->]  node[above left=-1pt,scale=0.90] {$3$} (011);
\draw (100) edge[<->]  node[above right=-1pt,scale=0.90] {$3$} (101);
\draw (110) edge[<->]  node[above=-1pt,scale=0.90] {$3$} (111);

\draw (000) edge[<->]  node[left=-1pt,scale=0.90] {$2$} (010);
\draw (001) edge[<->]  node[below left=-1pt,scale=0.90] {$2$} (011);
\draw (100) edge[<->]  node[above right=-1pt,scale=0.90] {$2$} (110);
\draw (101) edge[<->]  node[left=-1pt,scale=0.90] {$2$} (111);

\draw (000) edge[<->]  node[above left=-3pt,scale=0.90] {$1$} (100);
\draw (001) edge[<->]  node[above left=-3pt,scale=0.90] {$1$} (101);
\draw (010) edge[<->]  node[above left=-3pt,scale=0.90] {$1$} (110);
\draw (011) edge[<->]  node[above left=-3pt,scale=0.90] {$1$} (111);
\end{tikzpicture}
\vspace*{-2mm}
\end{center}
\caption{The DFA $\Aeven{k}$ for the language $\Levenn{k}$ for $k=3$.} \label{fig:cube}
\end{figure}

To obtain the claimed upper bound on $\rpn{\Leven{n}{k}}$, consider the $n$-slice DFA of $\Aeven{k}$.
Since for every word $w$ we have $|w| = \sum_{j \in \Sigma} |w|_j$, half of the states in each layer of the slice DFA are unreachable and can be removed.
So we obtain a DFA of width $2^{k-1}$ and length $n$ accepting  $\Leven{n}{k}$. By \cref{prop:dfa2regex} the upper bound $\rpn{\Leven{n}{k}} \leq  O(kn 2^{(k-1)(\log n+1)}) = O(k \hspace{1pt} 2^k n^k)$ follows.

We now turn to the proof of \cref{thm:even}. 
For every $q \in Q$ and $d \in \NN$ let $\Lpar{d}{k}{q}$ be the language of all words $w \in\Sigma^d$ that end in state $q$ when given as input to the DFA $\Aeven{k}$. 
These are exactly those words $w$ for that $|w|_{j} \equiv_2 q_j$ holds for all $j \in \Sigma$. For example $\Lpar{n}{k}{\vec{0}} = \Leven{n}{k}$.
We need an estimate on the number of words in the languages $\Lpar{n}{k}{q}$.

\begin{lem} \label{lem:cube}
Let $n,k\in \NN, k \geq 2$. Then the following hold:
\begin{enumerate}
\item[\textup{(a)}] $ |\Leven{n}{k}| \geq k^n \cdot 2^{1-k}$ for even $n$,
\item[\textup{(b)}] $ |\Lpar{n}{k}{q}| \leq k^n \cdot 2^{2-k}$ for $n \geq k \ln k $ and all $q \in \{0,1\}^k$.
\end{enumerate}
\end{lem}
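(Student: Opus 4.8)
The natural approach is to compute, or at least sharply estimate, the quantities $|\Lpar{n}{k}{q}|$ via a generating-function / character-sum argument on the group $\mathbb{Z}_2^k$. The number of words $w\in\Sigma^n$ with $|w|_j\equiv_2 q_j$ for all $j$ is exactly the coefficient extraction
\[
|\Lpar{n}{k}{q}| \;=\; \frac{1}{2^k}\sum_{\epsilon\in\{0,1\}^k} (-1)^{\langle \epsilon,q\rangle}\Bigl(\sum_{j=1}^k (-1)^{\epsilon_j}\Bigr)^{\!n}
\;=\; \frac{1}{2^k}\sum_{\epsilon\in\{0,1\}^k} (-1)^{\langle \epsilon,q\rangle}(k-2|\epsilon|)^n,
\]
where $|\epsilon|$ is the Hamming weight. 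This is the standard discrete Fourier inversion on the hypercube: the letter-counting automaton $\Aeven{k}$ has the hypercube as its transition graph, so the number of length-$n$ walks from $\vec 0$ to $q$ is governed by the eigenvalues $k-2|\epsilon|$ of the hypercube's (signed) adjacency structure. I would state this identity first, as it drives both parts.

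**Part (a): the lower bound for even $n$.** For $q=\vec 0$ the formula reads $|\Leven{n}{k}| = 2^{-k}\sum_{\epsilon}(k-2|\epsilon|)^n$. Since $n$ is even, every term $(k-2|\epsilon|)^n$ is nonnegative, so I may throw away all but the two extreme terms $\epsilon=\vec 0$ and $\epsilon = \vec 1$, each contributing $k^n$. This immediately gives $|\Leven{n}{k}| \geq 2^{-k}\cdot 2k^n = k^n\cdot 2^{1-k}$, which is exactly (a). This half is essentially free once the identity is on the table.

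**Part (b): the upper bound.** Here I need $|\Lpar{n}{k}{q}| = 2^{-k}\sum_\epsilon (-1)^{\langle\epsilon,q\rangle}(k-2|\epsilon|)^n \leq k^n\cdot 2^{2-k}$ for $n\geq k\ln k$. Bound the sum by absolute values: $2^{-k}\sum_\epsilon |k-2|\epsilon||^n$. Split off the two dominant terms $|\epsilon|\in\{0,k\}$, which together contribute $2^{-k}\cdot 2k^n = k^n 2^{1-k}$; it then suffices to show the remaining terms contribute at most another $k^n 2^{1-k}$. Those terms have $|k-2|\epsilon|| \leq k-2$, and there are $\binom{k}{i}$ of them with value $|k-2i|^n\le (k-2)^n$, so the tail is at most $2^{-k}\cdot 2^k (k-2)^n = (k-2)^n$. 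Thus I must verify $(k-2)^n \leq k^n\cdot 2^{1-k}$, i.e. $(1-2/k)^n \leq 2^{1-k}$. Taking logs (base $e$), since $\ln(1-2/k)\leq -2/k$, it suffices that $2n/k \geq (k-1)\ln 2$, i.e. $n\geq \tfrac{k(k-1)\ln 2}{2}$ — but this is roughly $k^2$, far larger than the hypothesis $n\geq k\ln k$, so this crude tail bound is too weak.

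**The actual obstacle and how to fix it.** The crude estimate $(k-2)^n$ for the entire tail is lossy because most of the $\binom{k}{i}$ terms have $|k-2i|$ much smaller than $k-2$. The fix is to not collapse the tail to its worst term but to keep the binomial weights: bound the tail by $2^{-k}\sum_{i=1}^{k-1}\binom{k}{i}|k-2i|^n$. The cleanest route is a probabilistic reading: if $X=\sum_{t=1}^k Y_t$ with $Y_t$ i.i.d.\ uniform on $\{\pm 1\}$, then $2^{-k}\sum_\epsilon (k-2|\epsilon|)^n = \mathbb{E}[X^n]$ (for even $n$, $=\mathbb{E}|X|^n$), and I need $\mathbb{E}|X|^n \leq k^n 2^{2-k}$ given $n\ge k\ln k$. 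Since $|X|\le k$ always, and $\Pr[|X|=k]=2^{1-k}$, write $\mathbb{E}|X|^n = k^n\Pr[|X|=k] + \mathbb{E}\bigl[|X|^n\,\mathbf{1}_{|X|\le k-2}\bigr] \le k^n 2^{1-k} + (k-2)^n$; so I still face $(k-2)^n\le k^n 2^{1-k}$ — same wall. So instead I should avoid pulling out $(k-2)^n$: use $\Pr[|X|\ge k-2j]\le \binom{k}{\le j}2^{1-k}$-type tail bounds, or more directly a Chernoff bound $\Pr[|X|\ge t]\le 2e^{-t^2/(2k)}$, and integrate: $\mathbb{E}|X|^n \le k^n\Pr[|X|=k] + \int_0^{k} n t^{n-1}\Pr[|X|\ge t]\,dt$. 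With $n\gtrsim k\ln k$, the integrand $n t^{n-1}e^{-t^2/(2k)}$ is dominated near $t=k$, and one checks the integral is $O(k^n 2^{-k})$ precisely when $n \ge k\ln k$ (the threshold arises from comparing $e^{-k/2}$, the Chernoff value at $t\approx k$, against $k^{-n/k}$). Carefully executing this tail integral — choosing where to cut the integral and confirming the constant comes out to give the clean bound $2^{2-k}$ — is the one genuinely delicate computation; parts (a) and the Fourier identity are routine, and I would present them crisply before spending the effort on the part-(b) tail estimate.
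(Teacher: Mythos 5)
Your Fourier identity and part (a) are correct and coincide with the paper's argument (the paper quotes the same formula, grouped by Hamming weight, from Diaconis--Graham--Morrison, and proves (a) by exactly your ``keep only $\epsilon=\vec 0$ and $\epsilon=\vec 1$'' step). The gap is in part (b): you correctly diagnose that collapsing the tail to $(k-2)^n$ forces $n=\Omega(k^2)$, but you then stop short of the decisive computation, and the route you lean towards --- the Chernoff bound $\Pr[|X|\ge t]\le 2e^{-t^2/(2k)}$ plus a layer-cake integral --- does not reach the stated threshold $n\ge k\ln k$ either. For $n\gtrsim k\ln k$ the integrand $nt^{n-1}e^{-t^2/(2k)}$ is increasing on the whole range, so the integral is governed by the endpoint $t\approx k$, where the Gaussian bound gives only $e^{-k/2+O(1)}$ while the true edge probability is $2^{1-k}=e^{-k\ln 2+O(1)}$; the lost factor $e^{(\ln 2-1/2)k}$ must then be absorbed by $(1-2/k)^n\le e^{-2n/k}$, which again demands $n=\Omega(k^2)$. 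So that branch of your plan hits the same wall you already identified, and the other branch (``keep the binomial weights'') is named but never executed.

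The missing step is elementary. After pairing each $\epsilon$ with its complement and taking absolute values, $|\Lpar{n}{k}{q}|\le 2^{1-k}k^n\sum_{j<k/2}\binom{k}{j}(1-2j/k)^n$, and each summand satisfies $\binom{k}{j}(1-2j/k)^n\le k^j e^{-2jn/k}=\bigl(ke^{-2n/k}\bigr)^j\le k^{-j}$ once $n\ge k\ln k$ (since then $e^{-2n/k}\le k^{-2}$). The sum is therefore at most the geometric series $\sum_{j\ge 0}k^{-j}=k/(k-1)\le 2$ for $k\ge 2$, giving exactly $k^n\cdot 2^{2-k}$. This one-line geometric-series estimate is the entire content of part (b) in the paper's proof; without it (or an equally sharp use of the exact binomial tail $\Pr[|X|=k-2j]=\binom{k}{j}2^{1-k}$) your proof of (b) is incomplete. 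A minor additional point: for general $q$ the inner Krawtchouk sums can be negative, so one must justify replacing them by $\binom{k}{j}$ in absolute value (the paper does this via Chu--Vandermonde); your $\epsilon$-indexed form handles this automatically by taking $|k-2|\epsilon||^n$, which is fine.
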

The proof of this lemma is somewhat technical and is therefore postponed to the end of this subsection. 
\begin{proof}[Proof of \cref{thm:even}]
Let $L=\Leven{n}{k}$ and $B$ be any log-product expression with $L(B) \subseteq L$. 
Our goal is to show an upper bound on the number $|B|$ of words in~$L(B)$. 
For $\gamma  \isdef  k \ln k $ let $P_1 \cdots P_m S_m \cdots S_1$ be a $\gamma$-factorization of $B$ ensured by \cref{prop:factor}. To prove the bound on $|B|$ we upper bound the number of words described by each factor of $B$. 
Call a homogeneous expression $T$ \emph{pure} if its described language is a subset of $\Lpar{d}{k}{q}$ for some $d \in \NN$  and $q \in \{0,1\}^k$.
\begin{clm} \label{clm:pure-even}
Every factor $F$ of $B$ is pure.
\end{clm}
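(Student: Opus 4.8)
The plan is to mirror the proof of \cref{clm:pure-div}, replacing arithmetic modulo $p$ by the parity vectors recorded by the DFA $\Aeven{k}$. First I would isolate the relevant ``commutation'' property of $\Aeven{k}$: for a word $x$, reading $x$ from a state $q\in\{0,1\}^k$ leads to the state $q+\pi(x)\bmod 2$, where $\pi(x)\in\{0,1\}^k$ is the vector whose $j$-th coordinate is $|x|_j\bmod 2$. Since translation by a fixed vector is a bijection of $\{0,1\}^k$, it follows that if two words $w,v$ of equal length end in distinct states of $\Aeven{k}$, then for \emph{every} pair of words $x,y$ the concatenations $xwy$ and $xvy$ also end in distinct states. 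Equivalently, there is an index $j\in\Sigma$ with $|xwy|_j\not\equiv_2|xvy|_j$, which is immediate from $|xwy|_j=|x|_j+|w|_j+|y|_j$ and the analogous identity for $v$.

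With this in hand I would argue by contradiction, exactly as in \cref{clm:pure-div}. Suppose some factor $F$ of $B$ is not pure, so there are words $w,v\in L(F)$ that end in distinct states of $\Aeven{k}$. Because $F$ is a factor of $B$, there are words $x,y$ such that $xwy,\,xvy\in L(B)\subseteq L=\Leven{n}{k}$ (this is precisely the guarantee recorded for factors in \cref{sec:factor}). By the observation above, $xwy$ and $xvy$ end in distinct states, hence at least one of them is not the all-zero state $\vec 0$; that word therefore fails to have all letter counts even, so it does not lie in $\Leven{n}{k}$, contradicting $L(B)\subseteq L$. Thus every factor $F$ of $B$ is pure.

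I do not expect a genuine obstacle here: the argument rests only on additivity of letter counts under concatenation together with the bijectivity of translations on $\{0,1\}^k$, so it is essentially a verification. The one point to be careful about is invoking the factorization bookkeeping of \cref{sec:factor} to get the surrounding words $x,y$ with $L(xFy)\subseteq L(B)$. I would also note in passing — since it is what the rest of the proof of \cref{thm:even} needs — that the same computation shows that a concatenation of pure expressions is pure, so each $F_i=P_iS_i$ in the $\gamma$-factorization is pure, and hence $L(F_i)\subseteq\Lpar{d_i}{k}{q}$ for some $q$, allowing $|F_i|$ to be bounded via \cref{lem:cube}(b).
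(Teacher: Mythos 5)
Your proposal is correct and follows essentially the same route as the paper: assume a factor $F$ is not pure, use the guarantee from \cref{sec:factor} that there are words $x,y$ with $xwy,xvy\in L(B)\subseteq L$, and derive a contradiction from the additivity $|xwy|_a=|x|_a+|w|_a+|y|_a$ of letter counts, which forces one of $xwy,xvy$ to have an odd count of some letter. The paper phrases this directly in terms of parities of $|w|_a$ and $|v|_a$ rather than DFA states, and it likewise records in passing that concatenations of pure expressions are pure, so the two arguments coincide in substance.
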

\begin{proof}[Proof of \cref{clm:pure-even}]
Assume to the contrary that some factor $F$ is not pure, i.e., there are words $w,v \in L(F)$ and a letter $a \in \Sigma$ such that $|w|_a$ is even and $|v|_a$ is odd. 
Since $F$ is a factor of $B$, there are words $x$ and $y$ such that $xwy,xvy \in L(B) \subseteq L$. 
But then  $|w'|_a = |x|_a+ |w|_a +|y|_a$ is odd or $|v'|_a = |x|_a+ |v|_a +|y|_a$ is odd, a contradiction to all words in $L$ having an even number of occurrences of every letter.
\clmproof{clm:pure-even}
\end{proof}
Let  $F_i  \isdef  P_i \cdot S_i$ and  $d_i  \isdef  \deg F_i = \deg P_i + \deg S_i$. 
The argument above also implies that concatenations of pure expressions are pure themselves, so every $F_i$ is pure.
For all $i \in [m-1]$, \cref{prop:factor} ensures $d_i \geq \gamma = k \ln k $ and so
\cref{lem:cube}\,(b) yields $|F_i| \leq k^{d_i} 2^{2-k}$. Moreover, $|F_m| \leq k^{d_m}$ trivially holds and
recall that $\sum_{i=1}^m d_i = n$.
Hence,
\begin{align}
|B| &=  \prod_{i=1}^{m} |F_i|
\leq  k^{d_m} \cdot \prod_{i=1}^{m-1} \big(k^{d_i} 2^{2-k} \big)
=   k^{\sum_{i=1}^m d_i} \cdot 2^{(2-k) (m-1)}  
\leq  k^n  \cdot 2^{(2-k)(m-1) } \, \defis h \,.
\end{align}
According to \cref{lem:cube}\,(a) there are at least $|L|\geq k^n \cdot 2^{1-k}$ words in $L$. 
Hence, the bound 
\begin{align}
\rpn{L} \geq |L|/h \geq  2^{1-k +(k-2)(m-1)} =  \Omega(2^{(k-2)(m-2)})
\end{align}
follows by \cref{thm:logprod}.
We have $m \geq \log (1+n /\gamma ) \geq \log(n/ (k \ln k )) =  \log n - \log (k \ln k)$.
Thus, 
\begin{align}
\rpn{L}\geq \Omega( 2^{(k-2)(\log n - \log(k \ln k) -2)} ) = \Omega(  n^{k-2} \cdot (4 k\ln k)^{2-k}) \,.\tag*{\qedhere}
\end{align}
\end{proof}
\vspace*{1mm}

\begin{proof}[Proof of \cref{lem:cube}]
Our goal is to show these two inequalities:
\begin{align}
 \big|L^{\textup{even}}_{n,k}\big| \geq \;&k^n \cdot 2^{1-k} \text{ for even }n, \label{eq:cube-lower}\\[1pt]
 \big|\Lpar{n}{k}{q}\big| \leq \;& k^n \cdot 2^{2-k}  \text{ for }  k \geq 2, n \geq k \ln k \text{ and } q \in \{0,1\}^k .\label{eq:cube-upper}
\end{align}
We analyze the probability of a \emph{random} word $w \in_R \Sigma^n$ being accepted by the DFA $\Aeven{k}$, recall that $\Sigma=[k]$ is our alphabet and that $\Aeven{k}$ has $Q=\{0,1\}^k$ as its set of states.
Let the word $w=w_1 \cdots w_n$ be drawn uniformly at random from $\Sigma^n$; that is, each letter $w_i$ is chosen independently uniform from $\Sigma$. If we travel for each letter $w_i$ along the corresponding edge of $\Aeven{k}$, this process corresponds to a random walk on the $k$-dimensional hypercube $\{0,1\}^k$, starting in the origin~$\vec{0}$. For a state $q$ let $p_q$ be the probability of ending in~$q$ after $n$ steps and let $|q|$ be the number of ones in~$q$.

According to Diaconis, Graham and Morrison\footnote{To be precise, they consider a slightly different random walk with self-loops. This difference only affects the term $2j/k$.} \cite[Lem.\:1]{hypercube2} the probability $p_q$ is given by
\begin{align}
p_q = 2^{-k}  \sum_{j=0}^k \ \left( 1- \frac{2j}{k} \right)^n \,\sum_{i=0}^{|q|} (-1)^i \binom{|q|}{i} \binom{k-|q|}{j-i} \,. \label{eq:pq}
\end{align}

The probability of a word being accepted by $\Aeven{k}$ is exactly the probability $p_{\vec{0}}$ of ending in the origin~$\vec{0}$.
So, to show \cref{eq:cube-lower} we need to lower bound $p_{\vec{0}}$ for even $n$. 
For $q= \vec{0}$ the probability given in \cref{eq:pq} simplifies to $p_{\vec{0}} = 2^{-k}  \sum_{j=0}^k  ( 1- 2j/k)^n \binom{k}{j}$. 
For even $n$, every summand is nonnegative. 
Hence, we can drop all summands with $0<j<n$ to obtain the lower bound
\begin{align}
p_{\vec{0}}
\,&\geq 2^{-k}  \sum_{j \in \{0,k\}} \left( 1- \frac{2j}{k} \right)^n \binom{k}{j} 
= 2^{-k}  \left[ 1 ^n \binom{k}{0} + \left( - 1\right)^n \binom{k}{k} \right]
= 2^{1-k} \;. \label{eq:6}
\end{align}

To show \cref{eq:cube-upper} we need to upper bound the probability $p_q$ given by \cref{eq:pq} for an arbitrary state $q$. 
First, we bound the inner sum using the Chu--Vandermonde identity $\sum_{i=0}^{j} \binom{|q|}{i} \binom{k-|q|}{j-i} / \binom{k}{j} = 1$ (which is clear from interpreting the summands as probabilities of a hypergeometric distribution), and the fact that the binomial coefficient $\binom{k-|q|}{j-i}$ in the inner sum of \cref{eq:pq} is zero for all values $i>j$:
\begin{align}
Z_j \isdef \sum_{i=0}^{|q|} \underbrace{(-1)^i}_{\leq 1} \binom{|q|}{i} \binom{k-|q|}{j-i}
& \;\!\leq\, \sum_{i=0}^{j}  \binom{|q|}{i} \binom{k-|q|}{j-i}
= \binom{k}{j} \;.
\end{align}
Analogously, we obtain $Z_j \geq - \binom{k}{j}$.
Then
\begin{align}
p_q 
&= 2^{-k}  \sum_{j=0}^k  \left( 1- \frac{2j}{k} \right)^n Z_j 
=   2^{-k}  \left[ \sum_{j <k/2}  \!\left( 1- \frac{2j}{k} \right)^n  Z_j+ \sum_{j <k/2}  \!\left(  \frac{2j}{k}-1 \right)^n Z_{k-j}\right] \;.
\intertext{If $n$ is odd, then all summands are positive in the first sum and negative in the second, and we use the bound $Z_j\leq \binom{k}{j}$ in the first sum and the bound $-Z_{k-j} \leq \binom{k}{k-j} = \binom{k}{j}$ in the second sum. If $n$ is even then the summands in both sums are positive and we use the bounds $Z_j \leq \binom{k}{j}$ and $Z_{k-j} \leq \binom{k}{k-j} = \binom{k}{j}$.  In both cases we get}
p_q &\leq   2^{-k}  \sum_{j<k/2} 2\binom{k}{j} \left( 1- \frac{2j}{k} \right)^n 
\leq    2^{1-k}   \sum_{j<k/2} k^j  e^{-2jn/k } 
\leq \,   2^{1-k}   \sum_{j=0}^{\infty} \left( k \:\!e^{-2n/k} \right)^j \\
&=    2^{1-k} \cdot   \frac{1}{1- k\:\!e^{-2n/k}} 
\overset{n \geq k \ln k }{\leq}    2^{1-k}	\cdot \frac{k}{k-1} \overset{k \geq 2}{\leq}   
2^{2-k} \,. \label{eq:10}
\end{align} 
Since there are $|\Sigma^n|=k^n$ possible choices for a word $w \in \Sigma^n$, we have $ |\Lpar{n}{k}{q}| =p_q \cdot k^n$, and the claimed bounds
$ \big|\Leven{n}{k}\big| = k^n p_{\vec{0}} \geq k^n \cdot 2^{1-k}$ and $\big |\Lpar{n}{k}{q}\big| = k^n p_q \leq  k^{n} \cdot 2^{2-k}$ follow from \cref{eq:6,eq:10}.
\end{proof}

\subsection{The permutation language} \label{sec:perm}

Let $n \geq 1$ and consider the language $\Lperm{n}$ of all permutations over the alphabet $\Sigma=[n]$. 
A (partial) DFA for $\Lperm{n}$ with $2^n$ states can be constructed as follows.
Let $Q = \{ q \subseteq [n]\}$ be the set of states, $i=\emptyset$ the initial state, $[n]$ the only final state and the (partial) transition function defined as $\delta(q, a) = q \cup \{a\}$ for all $ a \in \Sigma \setminus q$. An example for $n=3$ is given in \cref{fig:permdfa}.

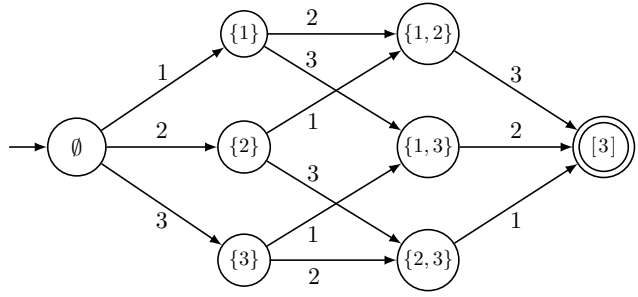
\begin{figure}[t]
\begin{center}
\begin{tikzpicture}[xscale=1.1,semithick,scale=1.00]
\node[state,initial,minimum width=9mm,inner sep=1pt] (0) at(0,1.5) {$\emptyset$};
\node[state,minimum width=8mm,scale=.9]  (1) at (2,3)  {$\{1\}$};
\node[state,minimum width=9mm,scale=.9]  (2) at (2,1.5) {$\{2\}$};
\node[state,minimum width=9mm,scale=.9]  (3) at (2,0)  {$\{3\}$};

\node[state,minimum width=9mm,scale=.9]  (12) at (4.2,3) {$\{1,2\}$};
\node[state,minimum width=9mm,scale=.9]  (13) at (4.2,1.5) {$\{1,3\}$};
\node[state,minimum width=9mm,scale=.9]  (23) at (4.2,0) {$\{2,3\}$};

\node[state,accepting,minimum width=9.5mm,scale=.9] (123) at (6.3,1.5) { $[\hspace*{1pt} 3 \hspace*{1pt}]$};

\draw (0) edge[->, bend left=00]  node[above,scale=0.9] {1} (1);
\draw (0) edge[->, bend left=00]  node[above,scale=0.9] {2} (2);
\draw (0) edge[->, bend left=00]  node[below,scale=0.9] {3} (3);

\draw (1) edge[->, bend left=00]  node[pos=0.33,above,scale=0.9] {3} (13);
\draw (1) edge[->, bend left=00]  node[pos=0.33,above,scale=0.9] {2} (12);

\draw (2) edge[->, bend left=00]  node[pos=0.33,above,scale=0.9] {3} (23);
\draw (2) edge[->, bend left=00]  node[pos=0.33,below,scale=0.9] {1} (12);

\draw (3) edge[->, bend left=00]  node[pos=0.33,below,scale=0.9] {1} (13);
\draw (3) edge[->, bend left=00]  node[pos=0.33,below,scale=0.9] {2} (23);

\draw (12) edge[->, bend left=00]  node[above,scale=0.9] {3} (123);
\draw (13) edge[->, bend left=00]  node[above,scale=0.9] {2} (123);
\draw (23) edge[->, bend left=00]  node[below,scale=0.9] {1} (123);

\end{tikzpicture}
\vspace*{-2mm}
\end{center}
\caption{A (partial) DFA accepting the permutation language $\Lperm{n}$ for $n=3$.}\label{fig:permdfa}
\end{figure}

Jerrum and Snir \cite{jerrum1982} showed a lower bound of $\rpn{\Lperm{n}} \geq 2^n -2$ by a method for (non-commutative) circuit complexity.
Independently, Ellul et al. showed a lower bound of $2^n-1$ \cite[Thm.\:29]{ellul2004} by the fooling set method.
These bounds were recently improved by Molina Lovett and Shallit \cite{shallit2019} to $4^n n^{ - (\log n)/4 + \Theta(1) }$ by a custom argument; they also showed that this bound is tight.
We give an alternative, simpler proof of this latter bound using the log-product bound (\cref{thm:logprod}). 

\begin{thm}[Permutation language, \cite{shallit2019}] \label{thm:perm}
The permutation language $\Lperm{n}$ requires regular expressions of length at least $\rpn{\Lperm{n}} \geq \Omega(4^n n^{- (\log n)/4-3/4})$.
\end{thm}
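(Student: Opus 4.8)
The plan is to invoke the log-product bound (\cref{thm:logprod}) with $L\isdef\Lperm{n}$, so that $|L|=n!$; it then remains to find an $h\in\RRpos$ with $|L(B)|\le h$ for every log-product expression $B$ satisfying $L(B)\subseteq L$, and to check that $n!/h=\Omega\big(4^{n}n^{-(\log n)/4-3/4}\big)$.

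The structural input is that every block occurring inside such a $B$ describes permutations of a fixed sub-alphabet. If $B\equiv B_{1}B_{2}$ (the case $B_{2}B_{1}$ being symmetric) with $L(B)\subseteq\Lperm{n}$, fix any $y\in L(B_{2})$; then $uy\in\Lperm{n}$ for all $u\in L(B_{1})$. If $u,u'\in L(B_{1})$ had different letter sets, say a letter $a$ occurs in $u$ but not in $u'$, then $a$ does not occur in $y$ (otherwise $uy$ would contain $a$ twice), hence $u'y$ omits $a$, contradicting $u'y\in\Lperm{n}$. So all words of $L(B_{1})$ use the same $\deg B_{1}$ letters, each exactly once; that is, $L(B_{1})$ is, up to relabelling the alphabet, a subset of $\Lperm{\deg B_{1}}$, and symmetrically $|L(B_{2})|\le(\deg B_{2})!$.

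Since relabelling affects neither cardinality nor the log-product property, this yields a recursion for $P(m)\isdef\max\{|L(B)|:B\text{ log-product},\ L(B)\subseteq\Lperm{m}\}$: for $m\ge 2$, \cref{def:logprod} writes $B=B_{1}B_{2}$ or $B_{2}B_{1}$ with $B_{1}$ log-product and $t\isdef\deg B_{2}\le\deg B_{1}$, so $t\le\lfloor m/2\rfloor$, and $|L(B)|=|L(B_{1})|\cdot|L(B_{2})|\le P(m-t)\cdot t!$. By structural induction on $B$,
\[ P(m)\ \le\ \max_{1\le t\le\lfloor m/2\rfloor}\,t!\cdot P(m-t)\,,\qquad P(1)=1\,. \]
Equivalently, $q(m)\isdef m!/P(m)$ obeys $q(m)\ge\min_{1\le t\le\lfloor m/2\rfloor}\binom{m}{t}\,q(m-t)$. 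I would then argue that the minimum is (essentially) attained at the balanced split $t=\lfloor m/2\rfloor$, so that, unwinding the recursion down to $q(1)=1$, one gets $q(n)\ge\prod_{i\ge1}\binom{m_{i-1}}{m_i}$ with $m_0=n$ and $m_i=\lceil m_{i-1}/2\rceil$ — a product of $\log_2 n+O(1)$ near-central binomial coefficients with $m_i\approx n/2^{i}$. Using central-binomial estimates of the form $\binom{2k}{k}\ge 4^{k}/(2\sqrt k)$ together with $\sum_i(m_{i-1}-m_i)=n-O(1)$, with $2$ raised to the number of factors being $\Theta(n)$, and with $\prod_i m_i=\Theta\big(n^{(\log n)/2-1/2}\big)$, this gives
\[ q(n)\ \ge\ \Omega\!\left(\frac{4^{\,n}}{n\cdot n^{(\log n)/4-1/4}}\right)\ =\ \Omega\!\big(4^{\,n}n^{-(\log n)/4-3/4}\big)\,, \]
and feeding $h=n!/q(n)$ into \cref{thm:logprod} proves the theorem. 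I expect the main obstacle to be establishing that balanced halving is the worst split, i.e.\ that $\binom{m}{t}q(m-t)\ge\binom{m}{\lfloor m/2\rfloor}q(\lceil m/2\rceil)$ for all $t\le\lfloor m/2\rfloor$ (small $t$, e.g.\ $t=1$, are comfortably above this), and then tracking the rounding and lower-order factors through the product of binomials carefully enough that the exponent comes out as exactly $-(\log n)/4-3/4$ rather than something marginally weaker.
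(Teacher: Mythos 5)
Your proposal is correct and follows essentially the same route as the paper: the disjoint-alphabet observation for the two factors of a log-product expression, the resulting recursion $P(m)\le\max_{1\le t\le\lfloor m/2\rfloor}t!\,P(m-t)$, and the identification of the balanced split as the worst case are exactly the ingredients of the paper's Lemma~6.7, which the paper packages as the explicit ansatz $h(n)=n!\cdot 4^{1-n}n^{(3+\log n)/4}$ verified by induction. The one step you defer --- that $\binom{m}{t}q(m-t)$ is minimized at $t=\lfloor m/2\rfloor$ --- is precisely where the paper does its real work, via convexity of the exponent $f(x)=-nH(x)-2nx+\tfrac14[3\log(nx)+\log^2(nx)]$ on $[1/2,1)$ (so the maximum of $(n-k)!\,h(k)$ sits at an endpoint, with $k=n/2$ tight and $k=n-1$ comfortably below) together with a hand check for $n\le 7$, and your product-of-binomials unwinding would need the same inductive justification before it can be read off.
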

\noindent
We claim that any log-product expression describing a subset of the permutation language $\Lperm{n}$ describes not more than roughly 
$ (n/2)! (n/4)! (n/8)!\cdots 2! \hspace*{1.5pt} 1!$ words. By the Stirling approximation $n! \sim \sqrt{2\pi n} \hspace*{1pt}(n/e)^n$, this term is approximately $n!\cdot 4^{-n} n^{( \log n)/4 + \Theta( 1)}$.
In the following lemma we make this estimate precise. 
\begin{lem}  \label{lem:perm} Let $n \geq 1$ and $B$ be a log-product expression with $L(B) \subseteq \Lperm{n}$.
Then $B$  describes at most $|B|\leq h(n) \isdef n! \cdot 4^{1-n} n^{\frac14 \left[ 3+\log n \right]}$ words.
\end{lem}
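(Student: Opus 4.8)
The plan is to prove the bound by strong induction on $n=\deg B$. For $n=1$ the expression $B$ describes a nonempty subset of $\Lperm{1}=\{1\}$, so $|B|=1=h(1)$. For $n\geq 2$ we may assume (by deleting occurrences of $\epsilon$, which affects neither $|B|$ nor the property of being log-product) that $B$ is not a single letter and contains no~$\epsilon$; by \cref{def:logprod} we can then write $B=B_1B_2$ or $B=B_2B_1$, where $B_1$ is log-product, $\deg B_1\geq \deg B_2 \defis d$, and $1\leq d\leq n/2$ (both parts being nonempty).

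The structural heart is the observation that $L(B_1)$ and $L(B_2)$ are sets of orderings of fixed, complementary letter sets. Indeed, fix any $v_0\in L(B_2)$ (nonempty since $L(B)\neq\emptyset$). For every $u\in L(B_1)$ the word $uv_0$ (or $v_0u$) lies in $L(B)\subseteq \Lperm{n}$, hence $u$ has no repeated letter and its set of letters equals $[n]$ minus the set of letters occurring in $v_0$ --- a fixed set of size $n-d$. Thus $L(B_1)$ is a set of distinct permutations of one fixed $(n{-}d)$-element set, and symmetrically $L(B_2)$ is a set of distinct permutations of one fixed $d$-element set, so $|B_2|\leq d!$. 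Relabelling the $(n{-}d)$-element letter set of $B_1$ via a bijection onto $[n-d]$ (which clearly preserves being log-product and the number of described words) turns $B_1$ into a log-product expression of degree $n-d<n$ describing a subset of $\Lperm{n-d}$, so the induction hypothesis gives $|B_1|\leq h(n-d)$. Therefore $|B|=|B_1|\cdot|B_2|\leq d!\cdot h(n-d)$, and the lemma follows once we establish the purely numerical claim
\begin{equation} \label{eq:perm-key}
 d!\cdot h(n-d) \;\leq\; h(n) \qquad\text{for all integers } n\geq 2 \text{ and } 1\leq d\leq n/2.
\end{equation}

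Proving \cref{eq:perm-key} is the main obstacle and the only nontrivial computation. Inserting $h(n)=n!\,4^{1-n}n^{\frac14(3+\log n)}$, the claim \cref{eq:perm-key} is equivalent to
\[
 \binom{n}{d}\;\geq\;4^{\,d}\,\bigl(\tfrac{n-d}{n}\bigr)^{3/4}\,\frac{(n-d)^{\frac14\log(n-d)}}{n^{\frac14\log n}}\,.
\]
The right-hand side is at most $4^d$ since $n-d\leq n$, so for $d$ bounded away from $n/2$ the elementary estimate $\binom{n}{d}\geq(n/d)^{d}$ already suffices for all but finitely many small $n$ (which are checked directly), whereas for $d$ close to $n/2$ the factor $n^{-\frac14\log n}$ becomes essential and one invokes a sharp central-binomial bound such as $\binom{2m}{m}\geq 4^m/(2\sqrt m)$ together with the unimodality of $d\mapsto\binom{n}{d}$. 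A clean uniform route is to set $\varphi(n)\isdef\log_2 h(n)=\log_2(n!)+2-2n+\tfrac34\log_2 n+\tfrac14(\log_2 n)^2$ and verify $\varphi(n)-\varphi(n-d)\geq\log_2(d!)$ for $1\leq d\leq n/2$ using two-sided Stirling bounds for the factorials and the monotonicity of $x\mapsto x^{\log x}$ on $[1,\infty)$. I expect this estimate to be the delicate part: the inequality is tight precisely at $d=1$ and $d=n/2$, which is exactly why the constants appearing in $h(n)$ --- the $n^{\frac14\log n}$ factor in particular --- cannot be relaxed. Everything else is the structural induction sketched above.
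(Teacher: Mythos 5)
Your structural reduction is exactly the paper's: split $B=B_1B_2$ (or $B_2B_1$) with $B_1$ log-product of degree $n-d\geq n/2$, observe that $L(B_1)$ and $L(B_2)$ consist of permutations of fixed, complementary letter sets, bound $|B_2|\leq d!$, and apply induction to $B_1$ (after relabelling) to get $|B|\leq d!\,h(n-d)$. That part is sound. The genuine gap is the numerical inequality $d!\,h(n-d)\leq h(n)$ for all $1\leq d\leq n/2$, which you correctly identify as the crux but do not actually prove. Your two elementary routes do not cover the whole range: $\binom{n}{d}\geq (n/d)^d\geq 4^d$ only works for $d\leq n/4$, and the central-binomial bound together with unimodality of $d\mapsto\binom{n}{d}$ handles $d=n/2$ but does not obviously extend to $n/4<d<n/2$, because the quantity you must dominate, $4^d\bigl(\tfrac{n-d}{n}\bigr)^{3/4}(n-d)^{\frac14\log(n-d)}\,n^{-\frac14\log n}$, also grows with $d$; knowing that $\binom{n}{d}$ increases up to $d=n/2$ does not by itself yield the comparison at intermediate $d$. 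Your ``clean uniform route'' via $\varphi(n)=\log_2 h(n)$ is a restatement of the same inequality, not a proof.

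The paper closes precisely this hole with a convexity argument: writing $k=n-d$ and using $\binom{n}{k}\geq 2^{nH(k/n)}/\sqrt{2n}$, the quantity $(n-k)!\,h(k)$ is majorized by $g(k)=4n!\sqrt{2n}\cdot 2^{f(k/n)}$ with $f(x)=-nH(x)-2nx+\frac14\bigl[3\log(nx)+\log^2(nx)\bigr]$; one checks $f''(x)>0$ on $[1/2,1)$ for $n\geq 8$, so the maximum of $g$ over $k\in[n/2,\,n-1]$ is attained at an endpoint, and the two endpoints $k=n-1$ (i.e.\ $d=1$) and $k=n/2$ are verified separately, the latter giving exact equality with $h(n)$. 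The cases $n=2,\dots,7$ are checked by hand. To complete your write-up you need an argument of this kind --- convexity in $d$, or monotonicity of the ratio of consecutive terms --- that reduces the inequality to the boundary values where it is tight; without it the range $n/4<d<n/2$ is simply not handled.
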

We show first that \cref{thm:perm} follows from this lemma and then give its proof.
\begin{proof}[Proof of \cref{thm:perm}]
Let $B$ be any log-product expression with $L(B) \subseteq \Lperm{n}$. 
By \cref{lem:perm} we have $|B| \leq h(n)$ and there are $n!$ words in $\Lperm{n}$. 
Thus,  \cref{thm:logprod} yields
\usetagform{simple}
\begin{align}
\rpn{\Lperm{n}} 
&\geq  |\Lperm{n}| / h(n) = 4^{n-1} \, n^{-(\log n)/4 - 3/4} \;.  \tag{\qedhere}
\end{align}
\end{proof}

\begin{proof}[Proof of \cref{lem:perm}]
In contrast to the proofs of \cref{thm:div,thm:even} we do not use $\gamma$-factorizations here, but work directly with \cref{def:logprod}.
We proceed by induction. If $n=1$, then only one single word (consisting of a single letter) can be described, and $|B| \leq 1 =1! \cdot 4^0 1^{3/4 + 0/4} = h(1)$, as desired.

Now let $n \geq 2$ and assume that the claim holds for all $k < n$.
By \cref{def:logprod}, there are homogeneous expressions $A, B'$ such that $B=AB'$ or $B=B' A$ holds, $B'$ is log-product and $k \isdef \deg B' \geq \deg A$. We can assume w.l.o.g. that $\deg A = n-k \geq 1$ holds.
Since every word described by $B$ is a permutation, there must be a partition $ \Sigma_A \cupp \Sigma_{B'} = \Sigma$ of the alphabet $\Sigma=[n]$ with $|\Sigma_A| =\deg A, |\Sigma_{B'}| = \deg B' $ such that all words described by $A$ and by $B'$ are permutations over $\Sigma_A$ and $\Sigma_{B'}$, respectively.
Thus, we have $|A| \leq (\deg A)! = (n-k)!$ and, since~$B'$ is log-product, by induction $|B'| \leq h(k)$ holds.
Hence,
\begin{align}
|B|= |A| \cdot |B'|  
\leq (n \minus k)! \, h(k)  
 &= (n \minus k)! \,k ! \cdot 4^{1-k} k^{\frac14\left[3+\log k \right]}  
 = 4n! \, \binom{n}{k}^{\!\!-1} \! \cdot 4^{-k} k^{\frac{1}{4}\left[3+\log k \right]} \;. \label{eq:perm}
 \end{align}
For the cases $n=2, \dots, 7$ we check by hand that $(n - k)! \,h(k) \leq h(n)$ holds, see \cref{table:byhand}.
\begin{table}[t] 
\setlength{\tabcolsep}{2.6mm}
\def\arraystretch{1.05}
\begin{center}
\begin{tabular}{c||l|l|l|l|l|l}
$n$                              &  2 & 3 & 4 & 5 & 6 & 7  \\ \hline 
$\max\limits_{n-1\geq k \geq n/2}^{\phantom{A}} \,(n-k)!\,h(k)$  &  1.00 & 1.00 & 2.00 & $2.64...$ & $7.92...$ & $12.72...$  \\[2pt]
$h(n)$                                  &  1.00 & $1.32...$ & $2.12...$ & $3.89...$ & $8.58...$ & $20.74...$ 
\end{tabular}
\end{center} \vspace*{-3mm}
\caption{For the cases $n=2, \dots, 7$ we see that for all possible values of $k$ indeed $|B| \leq (n-k)!\,h(k) \leq h(n)$ holds. 
}\label{table:byhand}
\end{table}
For the rest of the proof assume $n\geq 8$.
We use the inequality $\binom{n}{k} \geq  2^{n H(x)} \;\!\! / \sqrt{2n}$, 
where $H(x)= - x \log x - (1\minus x) \log (1\minus x)$ is the binary entropy function and $x=k/n$ (see, e.g., \cite[Ch.\:10, Lem.\:7]{sloane}). By applying this inequality to \cref{eq:perm}, we obtain
\begin{align}
|B| &\leq \;\! g(k)  \isdef  4n! \sqrt{2n} \cdot 2^{f(k/n)}  ~~\text{ with }~~
 f(x)  \isdef   -n H(x)  -2nx + \medfrac{1}{4} \big[3 \log(nx)+\log^2(nx) \big]\;. \label{eq:perm-g}
\end{align}
We claim that $f(x)$ is convex on the interval $[1/2,1)$. Intuitively, this is the case because the entropy function $H(x)$ is concave (and, thus, $-H(x)$ is convex) and dominates the other terms. 
To prove this rigorously, we show that the second derivative $f''(x)$ of $f(x)$ is positive for all $x \in [1/2,1)$:
\begin{DispWithArrows}[fleqn,displaystyle]
 \ln(2)\cdot f''(x)
&=  \frac{n}{1-x} + \frac{n}{x} + \frac{-3-2 \log(nx/e)}{4x^2 } \Arrow{~since $1/2\leq x < 1$} \\[2pt]
&>  2n + n    -3- 2\log(n/e)  \Arrow{~since $n \geq 8$} \\[3pt]
&\geq   16 + 8 -3 -6+ 2\log e  \: > \, 0 
\end{DispWithArrows}
Since $f(x)$ is convex for $x \in [1/2,1)$, also the function $g(k)=4n! \sqrt{2n}\cdot 2^{f(k/n)}$ is convex for $k\in[n/2, n-1]$.
Therefore, $g(k)$ achieves its maximum on the boundary, namely for $k=n/2$ or $k=n-1$.
We show that in both cases $g(k) \leq h(n)$ holds, which completes the proof.

\textit{Case 1:} If $k=n-1$, then $x=(n-1)/n$ and we get for entropy
\begin{align}
nH\!\left( \medfrac{n-1}{n} \right)  = -n \left( \medfrac{n-1}{n} \log\medfrac{n-1}{n} + \medfrac{1}{n} \log \medfrac{1}{n} \right)
=  (n-1) \!\! \underbrace{\log \medfrac{n}{n-1}}_{\geq 1/(n-1)~\:\!(\ast)\label{test}} \!\!\! +  \log n 
\geq \log(2n),
\end{align} where inequality $(\ast)$ holds since $\log(1+\epsilon) \geq \epsilon $ holds for all $ \epsilon \in [0,1]$.
Hence,
\begin{align}
g(n-1) 
& = 4n! \sqrt{2n} \cdot \underbrace{\,2^{-n H((n-1)/n)}}_{\leq \:\! 1/2n }\, \underbrace{\,2^{ -2(n-1)}}_{=\:\! 4^{1-n}} \, \underbrace{\,(n \minus 1)^{\frac14 [3 +\log(n-1) ]}}_{\leq \:\! n^{\frac14 [3  +\log n  ]}}  
\,\leq\, \medfrac{4\sqrt{2n}}{2n} \, h(n) \, \overset{n \geq 8}{\leq} \, h(n) \,.
\end{align}

\textit{Case 2:} If $k = n/2$, then $x=1/2$ and $H(1/2)=1$.  Hence,
\begin{align}
g(n/2) 
& = 4n! \sqrt{2n} \cdot \underbrace{\,2^{-n H(1/2)}}_{=2^{-n} }\, 2^{ -n} \,(n/2)^{\frac14 [3 +\log \frac{ n} {2} ]}  
 \,\overset{(\ast \ast)}{=} \,4n! \; 4^{-n} \, n^{\frac14 \left[ 3+ \log n \right] } \, =h(n)  ,
\end{align}
where equality ($\ast \ast$) holds since 
\begin{align*}
\sqrt{2n} \, (n/2)^{\frac14 \left[3+\log \frac{n}{2}\right]} &=
2^{\frac12 \log(2n)}  \,(n/2)^{\frac14 \left[3+\log \frac{n}{2}\right]} \\
&= 2^{\frac14 \left[ 2\log(2n)-3 -\log \frac n 2 \right]} \, n^{\frac14 \left[3+ \log \frac{n}{2} \right] } \\
&= 2^{\frac14 \log n} \, n^{ \frac14 \left[2+ \log n  \right] }  
\,=\, n^{\frac14 \left[3 + \log n \right]}  \;. \tag*{\qedhere}
\end{align*}

\end{proof}

\goodbreak 
\section{Non-homogeneous and infinite languages} \label{sec:envelope}

Until yet, we only dealt with homogeneous languages (in which all words have the same length).
In \cref{thm:arith} this restriction  is crucial, since
for a non-homogeneous expression it is in general not possible to assign a unique position to each leaf.
For example, consider the expression $(0+00)1$ describing the non-homogeneous language $L=\{01, 001\}$. Then the letter~$1$ appears in one word at position $2$ and in the other at position~$3$.
Also \cref{thm:logprod} does not work for non-homogeneous languages since the definition of log-product expressions is based on the degree which is only defined for homogeneous expressions.

Nevertheless, in some cases one can circumvent these problems. 
Following the notation in \cite{jerrum1982}, for a (possibly infinite) language $L$ define its \emph{lower envelope} $\lenv{L}$ as the set of all words in $L$ of minimal length, and for a finite language $L$ define  its \emph{higher envelope} $\henv{L}$ as the set of all words in $L$ of maximal length.
Note that both envelopes are homogeneous languages.

Further, let $\Sigma$ be an alphabet and $\mu : \Sigma \to \RRpos$ be a nonnegative weighting of the letters in $\Sigma$. For a word $w \in \Sigma^n$ its \emph{weight} is $\mu(w) = \sum_{i=1}^n \mu(w_i)$.
Given such a weighting $\mu$, for a language $L$ its \emph{lower $\mu$-envelope} $\lenvv{L}{\mu}$ is the set of all words in $L$ of minimal weight, and
for a finite language $L$ its \emph{higher $\mu$-envelope} $\henvv{L}{\mu}$ is the set of all words in $L$ of maximal weight.
\begin{lem}[Envelopes] \label{lem:envelopes} 
Let $\Sigma$ be an alphabet, $L$ a finite language and $L'$ a regular language. Then the following hold:
    \begin{enumerate}[topsep=5pt,itemsep=2pt]
        \item[\textup{(a)}] $\rpn{L} \geq \rpn{ \henv{L} }$ and $\rpn{L'} \geq \rpn{\lenv{L'} }$.
        \item[\textup{(b)}]  $\rpn{L}\geq \rpn{\henvv{L}{\mu} }$ and $\rpn{L'}\geq \rpn{ \lenvv{L'}{\!\!\mu} }$ for every weighting $\mu : \Sigma \to \RRpos$.
    \end{enumerate}
\end{lem}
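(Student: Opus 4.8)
The plan is to note first that part~(a) is the special case $\mu\equiv 1$ of part~(b): taking $\mu(a)=1$ for every $a\in\Sigma$ gives $\mu(w)=|w|$, hence $\lenvv{L'}{\mu}=\lenv{L'}$ and $\henvv{L}{\mu}=\henv{L}$, so it suffices to prove~(b). For the lower-$\mu$-envelope part I would take a shortest expression $R$ for $L'$ and transform it, by a bottom-up pruning of its syntax tree, into an expression $T(R)$ with $L(T(R))=\lenvv{L(R)}{\mu}$ and $\rpn{T(R)}\leq\rpn{R}$; then $\rpn{\lenvv{L'}{\mu}}\leq\rpn{T(R)}\leq\rpn{R}=\rpn{L'}$ (if $\lenvv{L'}{\mu}$ happens to equal $\{\epsilon\}$ the inequality is trivial, as $\rpn{\cdot}\geq 1$ always). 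Since no expression contains $\emptyset$, every subexpression $R_u$ describes a non-empty language, so writing $\min_A\mu$ for $\min_{w\in A}\mu(w)$, the value $m_u\isdef\min_{L(R_u)}\mu$ is a well-defined nonnegative real.

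Define $T$ by structural recursion: $T(a)=a$ for a letter $a$ and $T(\epsilon)=\epsilon$; for a union $R_u=R_v+R_w$ put $T(R_u)=T(R_v)$ if $m_v<m_w$, $T(R_u)=T(R_w)$ if $m_v>m_w$, and $T(R_u)=T(R_v)+T(R_w)$ if $m_v=m_w$; for a concatenation $R_u=R_v\cdot R_w$ put $T(R_u)=T(R_v)\cdot T(R_w)$; and for a star $R_u=R_v^{*}$ put $T(R_u)=\epsilon$ if $m_v>0$ and $T(R_u)=(T(R_v))^{*}$ if $m_v=0$. Both claims $L(T(R_u))=\lenvv{L(R_u)}{\mu}$ and $\rpn{T(R_u)}\leq\rpn{R_u}$ then follow by induction on $R_u$ from three elementary facts about minimal-weight sets: $\lenvv{A\cup B}{\mu}$ equals $\lenvv{A}{\mu}$, or $\lenvv{B}{\mu}$, or $\lenvv{A}{\mu}\cup\lenvv{B}{\mu}$, according to whether $\min_A\mu$ is $<$, $>$, or $=\min_B\mu$; $\lenvv{A\cdot B}{\mu}=\lenvv{A}{\mu}\cdot\lenvv{B}{\mu}$, because $\mu(xy)=\mu(x)+\mu(y)$ together with $\mu\geq 0$ forces any factorisation $z=xy$ of a weight-minimal $z\in A\cdot B$ to satisfy $\mu(x)=\min_A\mu$ and $\mu(y)=\min_B\mu$; and $\lenvv{A^{*}}{\mu}=\{\epsilon\}$ when $\min_A\mu>0$ (since $\epsilon\in A^{*}$ has weight $0$) while $\lenvv{A^{*}}{\mu}=(\lenvv{A}{\mu})^{*}$ when $\min_A\mu=0$. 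The length bound is immediate in every case: at a union we either keep the whole subtree or delete one child together with its $+$-node; at a concatenation nothing is added; at a star we replace $R_v^{*}$ by a single node $\epsilon$ or by $(T(R_v))^{*}$, which has size $\rpn{T(R_v)}+1\leq\rpn{R_v}+1=\rpn{R_v^{*}}$.

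For the higher-$\mu$-envelope claims the language $L$ is finite, hence --- by the standing convention of \cref{sec:preliminaries} --- described by a star-free expression, and $M_u\isdef\max_{L(R_u)}\mu$ is finite; the same pruning works verbatim with $\max$ replacing $\min$ (at a union one keeps the branch or branches realising the larger $M$-value), yielding $\rpn{\henvv{L}{\mu}}\leq\rpn{L}$. Specialising both halves to $\mu\equiv 1$ gives part~(a).

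The argument is essentially bookkeeping; the one step that genuinely needs care is the concatenation case, where one must verify, set-theoretically rather than for a single witness, that the weight-minimal words of $L(R_v)\cdot L(R_w)$ are \emph{exactly} the concatenations of weight-minimal words of the two factors --- this is precisely where nonnegativity and additivity of $\mu$ enter --- together with the slightly fussy $\epsilon$/star interaction above, which is why the lower-envelope case must permit the collapse $L(R_v^{*})\mapsto\{\epsilon\}$. Beyond this I do not expect a real obstacle.
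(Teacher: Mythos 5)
Your proof is correct and follows essentially the same route as the paper: a bottom-up pruning of the syntax tree that, at each union node, keeps only the child(ren) attaining the extremal weight, leaves concatenations alone, and treats stars specially. Your explicit star rule (keep $(T(R_v))^{*}$ exactly when $\min_{L(R_v)}\mu=0$, else collapse to $\epsilon$) is in fact a slightly more careful formulation of the paper's remark that star subexpressions describing only zero-weight words must not be replaced, and deducing (a) from (b) via $\mu\equiv 1$ is a harmless reorganization.
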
 
\begin{proof}
(a)  We use the same construction as in \cite[Thm.\:2.4]{jerrum1982}; see also \cite[Clm.\:10]{jukna2015}.  To prove the first part, we transform an expression $R$ for the language $L$ into an expression for $\henv{L}$ without increasing its size. 
Traverse the syntax tree of $R$ in postorder to homogenize the described language of every node as follows. For a leaf there is nothing to do, so consider an inner node $u$ with children $x$ and $y$. By induction, the children $x,y$ already describe homogeneous languages. 
If $u=x+y$ is a union node and  $x$ and $y$ are of different degree, delete the child (and its whole subtree) of smaller degree.
If $u=x\cdot y$ is a concatenation node, there is nothing to do since concatenation preserves homogeneity.

After performing this procedure for all nodes, the resulting expression describes a homogeneous language. 
Since no word of $\henv{L}$ is deleted (we always keep the subtree with the words of larger length), the resulting language must be $\henv{L}$.

To show the second part of (a), first replace all star expressions by the empty word $\epsilon$, i.e., expressions of the form $T^*$ are replaced by $\epsilon$. 
Then proceed as above, but this time delete the children of \emph{larger} degree.

(b) The proof is analogous to (a), this time deleting the children whose described words have smaller resp.\ larger \emph{weight} (rather than their length). Further, we do \emph{not} replace star expressions $T^*$ that describe only words of zero weight.
\end{proof}

The next two corollaries give example applications of \cref{lem:envelopes}. 
The \emph{threshold language} $\thresh{n}{k} = \{ w \in \{0,1\}^n : |w|_1\geq k \}$  
is the monotone closure of the binomial language $\binomial{n}{k}$.
Mousavi \cite{mousavi2017} showed an optimal lower bound of $ \Omega(n \log^k (n))$ for $k\leq 3$.
We give a bound which is asymptotically optimal for $k=n^{\Theta(1)}$.
\begin{cor}[Threshold language] \label{cor:threshold}
Let $k\leq n/2$. Then the threshold language $\thresh{n}{k}$ requires regular expressions of length $\rpn{\thresh{n}{k}} \geq n  k^{\Omega( \log k) }.$
\end{cor}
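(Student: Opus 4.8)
The plan is to obtain this bound as an immediate consequence of the binomial bound (\cref{cor:binomial}) together with the envelope lemma (\cref{lem:envelopes}), by choosing the right weighting of the alphabet. No new combinatorics or log-product argument is needed; the point is just to exhibit a weighting under which the lower envelope of $\thresh{n}{k}$ coincides with $\binomial{n}{k}$.

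Concretely, I would argue as follows. The language $\thresh{n}{k}=\{w\in\{0,1\}^n:|w|_1\geq k\}$ is finite, hence regular. Define the weighting $\mu:\{0,1\}\to\RRpos$ by $\mu(0)=0$ and $\mu(1)=1$, so that $\mu(w)=|w|_1$ for every word $w\in\{0,1\}^n$. Then the words of $\thresh{n}{k}$ of minimal $\mu$-weight are exactly those with $|w|_1=k$, i.e.\ $\lenvv{\thresh{n}{k}}{\mu}=\binomial{n}{k}$. Applying \cref{lem:envelopes}(b) with this $\mu$ gives $\rpn{\thresh{n}{k}}\geq\rpn{\lenvv{\thresh{n}{k}}{\mu}}=\rpn{\binomial{n}{k}}$, and since $k\leq n/2$ we may invoke \cref{cor:binomial} to conclude $\rpn{\thresh{n}{k}}\geq\rpn{\binomial{n}{k}}\geq n\,k^{\Omega(\log k)}$.

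I do not anticipate any real obstacle: the only step requiring care is verifying that the lower $\mu$-envelope of the threshold language is genuinely $\binomial{n}{k}$ (the homogeneity assumption in \cref{cor:binomial} is met since envelopes are homogeneous by construction). An alternative route would be to rerun the arithmetic bound (\cref{thm:arith}) or the log-product bound (\cref{thm:logprod}) directly on $\thresh{n}{k}$, observing that every log-product subexpression of an expression for $\thresh{n}{k}$ is again "monotone-uniform"; but this duplicates work already done for $\binomial{n}{k}$, so the envelope reduction is the cleaner argument and the one I would present.
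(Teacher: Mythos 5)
Your proposal is correct and is essentially identical to the paper's own proof: the paper also takes the weighting $\mu(0)=0$, $\mu(1)=1$, observes that $\lenvv{\thresh{n}{k}}{\mu}=\binomial{n}{k}$, and applies \cref{lem:envelopes}(b) followed by \cref{cor:binomial}. No differences worth noting.
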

\begin{proof}
Let $L=\thresh{n}{k}$ and define the weighting $\mu : \{0,1\} \to \RRpos$ as $\mu(0)=0, \mu(1)=1$. Then the weight of a word equals the number of its ones. The words of smallest weight are those with the fewest ones, thus $\lenvv{L}{\mu} = \binomial{n}{k}$. From \cref{lem:envelopes}\,(b) the inequality $\rpn{L}\geq \rpn{\binomial{n}{k}}$ follows, and \cref{cor:binomial} yields the lower bound $\rpn{\binomial{n}{k}} \geq n k^{\Omega(\log k)}$.
\end{proof}

\begin{cor}[Factors, prefixes, suffixes, subwords]\label{cor:prefix}
Let $L$ be a homogeneous language and let~$L'$ be the set of all factors, all prefixes, all suffixes or all subwords of $L$.
Then $L'$ requires regular expressions of length $\rpn{ L' } \geq \rpn{L}$.
\end{cor}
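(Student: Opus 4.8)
The plan is to obtain this corollary immediately from the higher‑envelope inequality of \cref{lem:envelopes}(a). The point is that forming the set of all factors (or prefixes, suffixes, subwords) of a \emph{homogeneous} language can only introduce words that are strictly shorter than the original ones, while it retains every word of $L$ itself. So the longest words of $L'$ are precisely the words of $L$, and shortening a description of $L'$ to its higher envelope yields a description of $L$.

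Concretely, I would argue as follows. Write $L \subseteq \Sigma^n$. Since $\Sigma$ is finite and $L$ is homogeneous, $L$ is finite, and by our standing convention $L \neq \emptyset$. Any factor, prefix, suffix or subword of a word of length $n$ has length at most $n$, so $L' \subseteq \bigcup_{d=0}^{n} \Sigma^{d}$ is finite as well, and $L' \neq \emptyset$ since $L \neq \emptyset$. On the other hand, every $w \in L$ is a factor (prefix, suffix, subword) of itself, so $L \subseteq L'$; and conversely the only factor (prefix, suffix, subword) of a word of length $n$ that again has length $n$ is that very word. Hence the set of words of $L'$ of maximal length is exactly $L$, i.e.\ $\henv{L'} = L$. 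Applying \cref{lem:envelopes}(a) to the finite language $L'$ then gives $\rpn{L'} \geq \rpn{\henv{L'}} = \rpn{L}$, which is the claim. (Alternatively, one could invoke \cref{lem:envelopes}(b) with the constant weighting $\mu \equiv 1$, but the plain length version is cleaner here.)

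The argument is essentially just bookkeeping; the only point that needs care is the identity $\henv{L'} = L$, and this is exactly where homogeneity of $L$ is used: if $L$ were not homogeneous, the longest words of $L'$ need not lie in $L$ (a long word may be a factor of an even longer one), and the conclusion could fail. So verifying that homogeneity forces all words of $L'$ to have length $\leq n$ with equality only for words of $L$ is the one place where I would be explicit in the write‑up.
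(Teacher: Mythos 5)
Your proof is correct and follows exactly the paper's argument: the higher envelope of $L'$ equals $L$ (which you justify carefully using homogeneity), so \cref{lem:envelopes}(a) gives the claim. The paper's own proof is a one-liner with the same content, so your write-up is just a more explicit version of it.
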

\begin{proof}
The upper envelope of $L'$ is $L$, so the claim follows from \cref{lem:envelopes}\,(a).
\end{proof}
\goodbreak
In \cref{cor:prefix} the condition of homogeneity is crucial. 
To see this, let $\Sigma$ be an alphabet and take any language $L \subseteq \Sigma^{n}$ with $\rpn{L} \geq \omega(n)$. Consider the non-homogeneous language $K \isdef  L \cup \Sigma^{n+1}$. Since the lower envelope of $K$ is $L$, \cref{lem:envelopes}\,(a) implies $\rpn{K} \geq \rpn{L} \geq \omega(n)$.
Let $K'$ be the set of all prefixes of~$K$. Then $K' = \Sigma^{\leq n+1}=\Sigma^0 \cup \Sigma^1 \cup \dots \cup \Sigma^{n+1}$ holds and $\rpn{K'} = O(n)$ follows.

\subsection{Infinite languages} \label{sec:infinite}

So far, all results in this paper were obtained only for \emph{finite} languages. 
However,  also bounds for infinite languages can be derived from them. 
This is particularly useful for languages with \emph{small star height} or languages whose star height is hard to determine. 
The star height $h(R)$ of an expression $R$  is defined as $h(R)=0$, if $R$ is a letter or the empty word $\epsilon$, and
$h(R_1 \cdot R_2) = h(R_1 + R_2) = \max( h(R_1), h(R_2)), ~h(R^*)=h(R)+1$.
That is, the star height is the maximal \emph{nesting depth} of star operations.
The star height $h(L)$ of a regular language~$L$ is the minimal star height of an expression describing $L$.
Gruber and Holzer showed the following:
\begin{thm}[Gruber and Holzer {\cite[Thm.\:6]{gruber2008ops1}}] \label{thm:starheight}
Let $L$ be a regular language.
Then $L$ requires regular expressions of length at least $\rpn{L} \geq \Omega(2^{h(L)/3})$. 
\end{thm}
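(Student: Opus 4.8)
The plan is to prove the equivalent statement: if a regular language $L$ has an expression $R$ of length $s = \rpn{R}$, then its star height satisfies $h(L) \leq 3\log_2 s + O(1)$; rearranging this gives $\rpn{L} \geq \Omega(2^{h(L)/3})$. The essential point is that this has to be a bound on the star height of the described \emph{language}: the expression $R$ itself may have star height that is huge relative to its length (think of $(\cdots((a)^{*})^{*}\cdots)^{*}$, which has star height $\Theta(s)$ but describes $\{a\}^{*}$, of star height $1$), so a naive argument that merely tracks the nesting of stars inside $R$ cannot succeed.

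I would route the argument through the cycle rank of NFAs, using Eggan's theorem in its NFA form (Cohen; Gruber and Holzer): $h(L) = \min\{\mathrm{cr}(N) : N \text{ an NFA accepting } L\}$, where $\mathrm{cr}$ is the cycle rank of the transition digraph ($\mathrm{cr} = 0$ for acyclic graphs, $\mathrm{cr}(G) = 1 + \min_{v}\mathrm{cr}(G-v)$ when $G$ is strongly connected, and the maximum over strongly connected components otherwise). In particular $h(L) \leq \mathrm{cr}(N)$ for \emph{every} NFA $N$ accepting $L$, so it suffices to build, from $R$, one NFA $N$ with $L(N) = L(R)$ and $\mathrm{cr}(N) \leq 3\log_2 s + O(1)$. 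I would construct $N$ by divide-and-conquer on the syntax tree $T$ of $R$: choose a node $v$ whose subtree $T_v$ has between $|T|/3$ and $2|T|/3$ nodes (such a node always exists), write $R = C[R_v]$ for the one-hole context $C$, recursively build low-cycle-rank NFAs for $R_v$ and for $C$ (treating the hole as a fresh letter), and splice the first into the hole-transitions of the second. Since $C$ and $R_v$ each have at most $\tfrac23 s + 1$ nodes, the recursion has depth $O(\log s)$, and Eggan's inequality finishes the proof once the cycle-rank bound on $N$ is in place.

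The crux — and the step I expect to be the main obstacle — is the behaviour of cycle rank under this splicing. Taken at face value one only gets $\mathrm{cr}(N) \leq \mathrm{cr}(N_C) + \mathrm{cr}(N_{R_v})$, and summing such inequalities over the $\Theta(\log s)$ levels of the recursion yields $O(\log^2 s)$ rather than $O(\log s)$. To recover the right bound the construction must be arranged so that the per-level growth of the cycle rank is a bounded \emph{additive} term over a \emph{maximum}, not a sum; this forces a normalization step that flattens nested Kleene stars before splicing — using identities such as $E^{**} \equiv E^{*}$ and $(\alpha\, x\, \beta)^{*} \equiv \epsilon + \alpha\,(x\,\beta\,\alpha)^{*}\,x\,\beta$ to push the hole of $C$ out to star-depth $O(1)$ — so that every directed cycle of $N$ meets at most one "new" strongly connected block per recursion level. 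Careful accounting then gives $h(L) \leq \mathrm{cr}(N) = O(\log s)$, and the precise constant $\tfrac13$ in the exponent drops out of optimizing the split ratio against this per-level increase. Everything else (existence of the balanced split, the standard NFA-splicing, Eggan's inequality) is routine; the bookkeeping for the star-flattening and the resulting cycle-rank estimate is where the real work lies.
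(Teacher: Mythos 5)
First, a point of orientation: this theorem is quoted from Gruber and Holzer \cite[Thm.\:6]{gruber2008ops1}; the present paper gives no proof of it, so your attempt has to be measured against the argument in that reference rather than against anything in this text. Your high-level plan is indeed the right one and matches theirs: restate the claim as $h(L)\leq O(\log \rpn{L})$, invoke the NFA form of Eggan's theorem ($h(L)$ equals the minimum cycle rank over all NFAs accepting $L$), and build one low-cycle-rank NFA by a balanced $1/3$--$2/3$ split of the syntax tree into a subexpression and a one-hole context. You also correctly observe that the naive estimate $\mathrm{cr}(N)\leq\mathrm{cr}(N_C)+\mathrm{cr}(N_{R_v})$ only yields $O(\log^2 s)$.

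The gap is that the decisive step --- turning the per-level cost into an additive constant over a \emph{maximum} --- is only asserted (``careful accounting then gives\ldots''), and the mechanism you propose for it (rewriting the context with identities such as $E^{**}\equiv E^{*}$ and $(\alpha x\beta)^{*}\equiv \epsilon+\alpha(x\beta\alpha)^{*}x\beta$ to push the hole to bounded star-depth) is a detour you never carry out, and one that is not needed. The clean resolution is purely graph-theoretic. Use a Thompson-style construction so that the spliced copy of $N_{R_v}$ meets the rest of $N$ only at the two endpoints $p,q$ of the former hole-transition. Then every directed cycle containing non-boundary vertices of both parts must pass through $p$ and $q$, and the elementary fact $\mathrm{cr}(G)\leq|S|+\mathrm{cr}(G-S)$ (proved by induction on the strongly connected decomposition from $\mathrm{cr}(G)\leq 1+\mathrm{cr}(G-v)$) together with the observation that $G-\{p,q\}$ splits into disjoint pieces, each a subgraph of one of the two recursively built automata, gives $\mathrm{cr}(N)\leq 2+\max\bigl(\mathrm{cr}(N_C),\mathrm{cr}(N_{R_v})\bigr)$ --- exactly the recurrence you need, with no star-flattening at all. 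Two further points deserve attention before the proof is complete: (i) with a $2/3$ split and cost $2$ per level this recurrence gives $h(L)\leq 2\log_{3/2}\rpn{L}+O(1)\approx 3.42\log_2\rpn{L}$, i.e.\ only $\rpn{L}\geq 2^{h(L)/3.42}$, which does not literally imply $\Omega(2^{h(L)/3})$; to reach the stated constant you must sharpen the per-level cost to $1$ (deleting $p$ alone already destroys every cycle meeting both parts, since such a cycle must use both $p$ and $q$) or otherwise account more carefully, and (ii) if the version of Eggan's theorem you invoke requires $\epsilon$-free NFAs, the $\epsilon$-elimination step can alter the separator structure and must be checked not to increase the cycle rank.
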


Consider an infinite variant of the binomial language, namely $\binomialinf{n}{k} \isdef  \{w \in \{0,1\}^* \!: |w|_1\geq  k, |w|_0 \geq n-k\}$, 
and an infinite variant of the permutation language, namely $\Lperminf{n} \isdef \{w \in [n]^* : |w|_a \geq 1 ~\text{for all } a \in [n] \hspace*{1pt} \}$.
For either of these languages we can construct an expression of star height $h=1$ if we take  any expression describing their finite variant ($\binomial{n}{k} $ resp. $\Lperm{n}$) and replace every letter $a$ by $a\cdot \Sigma^*$.
Thus, both languages have star height~\mbox{$h=1$}, and only trivial bounds follow from \cref{thm:starheight}.
In contrast, with the help of envelopes and our previous results we obtain the following lower bounds.
\begin{cor} \label{cor:infinite}
Let $k\leq n/2$. Then the language $\binomialinf{n}{k}$ requires regular  expressions of length at least $\rpn{\binomialinf{n}{k}} \geq nk^{\Omega( \log k) }$.
\end{cor}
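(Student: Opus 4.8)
The plan is to deduce this bound directly from the finite case (\cref{cor:binomial}) by passing to the lower envelope, using \cref{lem:envelopes}\,(a). Recall that \cref{lem:envelopes}\,(a) applies to any regular language $L'$ and gives $\rpn{L'} \geq \rpn{\lenv{L'}}$, where $\lenv{L'}$ is the set of all shortest words of $L'$; since $\binomialinf{n}{k}$ is regular, this is exactly the tool we need.

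First I would identify $\lenv{\binomialinf{n}{k}}$. Every word $w \in \binomialinf{n}{k}$ satisfies $|w| = |w|_0 + |w|_1 \geq (n-k) + k = n$, and equality holds precisely when $|w|_1 = k$ and $|w|_0 = n-k$, i.e.\ exactly for the words of $\binomial{n}{k}$. As $k \leq n$, this set is nonempty, so the minimal word length in $\binomialinf{n}{k}$ is $n$ and $\lenv{\binomialinf{n}{k}} = \binomial{n}{k}$.

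Combining these two observations, \cref{lem:envelopes}\,(a) gives
$\rpn{\binomialinf{n}{k}} \geq \rpn{\lenv{\binomialinf{n}{k}}} = \rpn{\binomial{n}{k}}$,
and since the hypothesis $k \leq n/2$ is exactly what \cref{cor:binomial} requires, we obtain $\rpn{\binomial{n}{k}} \geq n k^{\Omega(\log k)}$, which finishes the proof. (One could alternatively invoke the weighted version \cref{lem:envelopes}\,(b), but the plain length-envelope already suffices here.)

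There is essentially no technical obstacle: the only point demanding a little care is the correct computation of the lower envelope, namely checking that a shortest word cannot ``overshoot'' in either the number of ones or zeros, so that $\lenv{\binomialinf{n}{k}}$ is genuinely $\binomial{n}{k}$ and not a proper superset of it. The rest is a one-line chain of inequalities, and the argument parallels the remark in \cref{sec:results} that the bound for $\Lperm{n}$ carries over to $\Lperminf{n}$ in the same way.
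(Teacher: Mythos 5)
Your proposal is correct and follows exactly the same route as the paper's proof: identify $\lenv{\binomialinf{n}{k}} = \binomial{n}{k}$, apply \cref{lem:envelopes}\,(a), and conclude via \cref{cor:binomial}. Your explicit verification that the lower envelope is precisely $\binomial{n}{k}$ is a welcome bit of extra care that the paper leaves implicit.
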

\begin{proof}
Let $L=\binomialinf{n}{k}$. We have $\lenv{L} =\binomial{n}{k}$, \cref{lem:envelopes}\,(a) implies $\rpn{L} \geq \rpn{\binomial{n}{k}}$, and  \cref{cor:binomial} yields the claimed lower bound.
\end{proof}
\begin{cor}\label{cor:perm-star}
The language $\Lperminf{n}$ requires regular expressions of length at least $\rpn{\Lperminf{n}} \!\geq  \Omega(4^n n^{- (3+\log n)/4})$.
\end{cor}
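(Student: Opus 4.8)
The plan is to mimic the proof of \cref{cor:infinite} verbatim, using the lower envelope together with \cref{lem:envelopes}\,(a). First I would identify $\lenv{\Lperminf{n}}$. A word $w \in [n]^*$ lies in $\Lperminf{n}$ precisely when it contains every letter of $[n]$ at least once, so its length is at least $n$, with equality exactly when each letter occurs exactly once, i.e.\ when $w$ is a permutation of $[n]$. Hence $\lenv{\Lperminf{n}} = \Lperm{n}$.

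Next I would invoke \cref{lem:envelopes}\,(a), which gives $\rpn{\Lperminf{n}} \geq \rpn{\lenv{\Lperminf{n}}} = \rpn{\Lperm{n}}$ (note $\Lperminf{n}$ is a regular language, so the ``lower envelope'' half of part (a) applies; the star expression appearing in the star-height-$1$ description mentioned just before the corollary is harmless, since the construction in the proof of \cref{lem:envelopes}\,(a) simply replaces every $T^*$ by $\epsilon$ before homogenizing). Then \cref{thm:perm} yields
\[
\rpn{\Lperminf{n}} \;\geq\; \rpn{\Lperm{n}} \;\geq\; \Omega\!\big(4^{n} n^{-(\log n)/4 - 3/4}\big) \;=\; \Omega\!\big(4^{n} n^{-(3+\log n)/4}\big),
\]
which is exactly the claimed bound.

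There is essentially no obstacle here: the only things to check are the trivial identification of the lower envelope and the applicability of \cref{lem:envelopes}\,(a) to an infinite regular language, both of which are immediate. The bound is then just a rewriting of the exponent in \cref{thm:perm}.

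\begin{proof}
Let $L=\Lperminf{n}$. Every word of $L$ contains each of the $n$ letters at least once, hence has length at least $n$, and a word of $L$ of length exactly $n$ uses each letter exactly once, i.e.\ is a permutation of $[n]$. Therefore $\lenv{L}=\Lperm{n}$. By \cref{lem:envelopes}\,(a) we get $\rpn{L} \geq \rpn{\lenv{L}} = \rpn{\Lperm{n}}$, and \cref{thm:perm} gives $\rpn{\Lperm{n}} \geq \Omega\big(4^{n} n^{-(\log n)/4 - 3/4}\big) = \Omega\big(4^{n} n^{-(3+\log n)/4}\big)$.
\end{proof}
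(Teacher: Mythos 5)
Your proof is correct and follows exactly the paper's own argument: identify $\lenv{\Lperminf{n}} = \Lperm{n}$, apply \cref{lem:envelopes}\,(a), and invoke \cref{thm:perm}, noting that $-(\log n)/4 - 3/4 = -(3+\log n)/4$. The extra remarks on why the lower-envelope construction handles the star expressions are a nice touch but not needed beyond what the lemma already guarantees.
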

\begin{proof}
Let $L= \Lperminf{n}$. We have $\lenv{L} = \Lperm{n}$, \cref{lem:envelopes}\,(a) implies $\rpn{L} \geq \rpn{\Lperm{n}}$, and  \cref{thm:perm} yields the claimed lower bound.
\end{proof}

\section{Conclusion} \label{sec:conc}

We developed two lower bound methods for the length of regular expressions for finite languages.
With the \emph{arithmetic bound} (\cref{thm:arith}) we reduced expression length of homogeneous languages $L\subseteq \{0,1\}^n$ to monotone arithmetic formula complexity. This method naturally refines the  methods from Ellul et al. \cite{ellul2004} and Gruber and Johannsen~\cite{gruber2008cc} who gave reductions to boolean resp. monotone boolean formula complexity. 
Already with this method we solved two open problems: show a lower bound for the binomial language $\binomial{n}{k} = \{w \in \{0,1\}^n : |w|_1 =k\}$ and determine the blow-up of intersection and shuffle on finite languages.

With the \emph{log-product bound} (\cref{thm:logprod}) we provide a more general method which works for homogeneous languages over arbitrary alphabets, explicitly utilizes non-commutativity (see \cref{sec:utilize}) and even holds for expressions extended by a squaring operation (see \cref{rem:square}).
As applications we proved lower bounds for the divisibility language $\Ldiv{n}{p}$ of all binary numbers with $n$ bits that are divisible by $p$ and the parity language $\Leven{n}{k}= \{w \in [k]^n: |w|_a \text{ is even for all }a \in [k] \}$. These bounds are tight apart from small polynomial factors. For the language $\Lperm{n}$ of all permutations over $[n]$ we presented an alternative proof of the lower bound recently shown by Molina Lovett and Shallit \cite{shallit2019}.

Finally, with the help of \emph{envelopes} we showed how to transfer some results to non-homogeneous and infinite languages (see \cref{sec:envelope}).

\subsection{Open problems} \label{sec:problems}

We obtained bounds for several languages of ``combinatorial type'' (like the binomial or permutation language) or ``modulo arithmetic type'' (like the divisibility or parity language), but no result for a ``string matching type'' language, for example, the language $\Lnw$ of all length $n$ words that do \emph{not} contain a given word $w$ as a factor. 
In fact, this problem is the finite variant of a question already posed in \cite[Open Problem~3]{ellul2004}. 
For the choice $w=1^k$ we conjecture that $\rpn{ \Lnw } \geq n^{\Omega(\log k)}$ can be proven.
\begin{problem}
Prove a lower bound for the language $  \Lnw  \isdef  \Sigma^n \setminus (\Sigma^*  \:\! w \;\! \Sigma^*)$ for some word $w \in \Sigma^*$.
\end{problem}

The lower bound of $n k^{\Omega(\log k)}$ shown for the binomial language in \cref{cor:binomial} is  (asympotically) tight only for large values of $k$.
It would be nice to tighten this bound also for small values, like $k=O(\log n)$. Applying \cref{thm:logprod} together with an appropriate \emph{weighting} on the language (see \cref{rem:weightings}) might be a promising attempt.
\begin{problem}
Prove or disprove a lower bound $\rpn{\binomial{n}{k}} \geq n^{\Omega(\log k)}$ on the binomial language.
\end{problem}

Both our methods rely crucially on the \emph{homogeneity} of the languages. 
Envelopes can provide a loophole (see \cref{lem:envelopes}), but only in some special situations. 
For example, Jukna \cite{jukna2016} developed lower bound methods for non-homogeneous problems for tropical $(\max,+)$ circuits. It might be possible to transfer some ideas to regular expressions.
\begin{problem} 
Show lower bounds for finite, but non-homogeneous languages beyond the use of envelopes.
\end{problem}

Finally, possibly the most interesting question is whether methods from circuit complexity also work for \emph{infinite} languages, that is, for regular expressions \emph{with} star operations.
\begin{problem}
Does circuit complexity help proving lower bounds for regular expressions with star operations?
\end{problem}

\section*{Acknowledgments}
We wish to thank Mario Holldack, Stasys Jukna and Georg Schnitger for inspiring discussions and the anonymous referees for their kind and useful comments.

\bibliographystyle{elsarticle-num-names}
\bibliography{regex}{}

\begin{thebibliography}{38}
\expandafter\ifx\csname natexlab\endcsname\relax\def\natexlab#1{#1}\fi
\providecommand{\url}[1]{\texttt{#1}}
\providecommand{\href}[2]{#2}
\providecommand{\path}[1]{#1}
\providecommand{\DOIprefix}{doi:}
\providecommand{\ArXivprefix}{arXiv:}
\providecommand{\URLprefix}{URL: }
\providecommand{\Pubmedprefix}{pmid:}
\providecommand{\doi}[1]{\href{http://dx.doi.org/#1}{\path{#1}}}
\providecommand{\Pubmed}[1]{\href{pmid:#1}{\path{#1}}}
\providecommand{\bibinfo}[2]{#2}
\ifx\xfnm\relax \def\xfnm[#1]{\unskip,\space#1}\fi
\bibitem[{Cseresnyes and Seiwert(2020)}]{CS20}
\bibinfo{author}{E.~Cseresnyes}, \bibinfo{author}{H.~Seiwert},
\newblock \bibinfo{title}{{R}egular {E}xpression {L}ength via {A}rithmetic
  {F}ormula {C}omplexity},
\newblock in: \bibinfo{booktitle}{Descriptional Complexity of Formal Systems},
  \bibinfo{year}{2020}, pp. \bibinfo{pages}{26--38}.
  \DOIprefix\doi{10.1007/978-3-030-62536-8_3}.
\bibitem[{Ehrenfeucht and Zeiger(1976)}]{zeiger76}
\bibinfo{author}{A.~Ehrenfeucht}, \bibinfo{author}{P.~Zeiger},
\newblock \bibinfo{title}{Complexity measures for regular expressions},
\newblock \bibinfo{journal}{J. Comput. Syst. Sci.} \bibinfo{volume}{12}
  (\bibinfo{year}{1976}) \bibinfo{pages}{134--146}.
  \DOIprefix\doi{10.1016/S0022-0000(76)80034-7}.
\bibitem[{Ellul et~al.(2004)Ellul, Krawetz, Shallit, and Wang}]{ellul2004}
\bibinfo{author}{K.~Ellul}, \bibinfo{author}{B.~Krawetz},
  \bibinfo{author}{J.~Shallit}, \bibinfo{author}{M.-w. Wang},
\newblock \bibinfo{title}{Regular expressions: New results and open problems},
\newblock \bibinfo{journal}{J. Autom. Lang. Comb.} \bibinfo{volume}{9}
  (\bibinfo{year}{2004}) \bibinfo{pages}{233--256}.
  \DOIprefix\doi{10.25596/jalc-2004-233}.
\bibitem[{Gruber and Johannsen(2008)}]{gruber2008cc}
\bibinfo{author}{H.~Gruber}, \bibinfo{author}{J.~Johannsen},
\newblock \bibinfo{title}{Optimal lower bounds on regular expression size using
  communication complexity},
\newblock in: \bibinfo{booktitle}{Foundations of Software Science and
  Computational Structures (FoSSaCS)}, \bibinfo{year}{2008}, pp.
  \bibinfo{pages}{273--286}. \DOIprefix\doi{10.1007/978-3-540-78499-9\_20}.
\bibitem[{Gruber and Holzer(2008)}]{gruber2008ops1}
\bibinfo{author}{H.~Gruber}, \bibinfo{author}{M.~Holzer},
\newblock \bibinfo{title}{Finite automata, digraph connectivity, and regular
  expression size},
\newblock in: \bibinfo{booktitle}{35th International Colloquium on Automata,
  Languages, and Programming {(ICALP)}}, \bibinfo{year}{2008}, pp.
  \bibinfo{pages}{39--50}. \DOIprefix\doi{10.1007/978-3-540-70583-3\_4}.
\bibitem[{Gelade and Neven(2012)}]{gelade12}
\bibinfo{author}{W.~Gelade}, \bibinfo{author}{F.~Neven},
\newblock \bibinfo{title}{Succinctness of the complement and intersection of
  regular expressions},
\newblock \bibinfo{journal}{{ACM} Transactions on Computational Logic (TOCL)}
  \bibinfo{volume}{13} (\bibinfo{year}{2012}) \bibinfo{pages}{4:1--19}.
  \DOIprefix\doi{10.1145/2071368.2071372}.
\bibitem[{Mousavi(2017)}]{mousavi2017}
\bibinfo{author}{H.~Mousavi},
\newblock \bibinfo{title}{Lower bounds on regular expression size},
\newblock \bibinfo{journal}{CoRR} \bibinfo{volume}{abs/1712.00811}
  (\bibinfo{year}{2017}). \href{http://arxiv.org/abs/1712.00811}{{\tt
  arXiv:1712.00811}}.
\bibitem[{{Molina Lovett} and Shallit(2019)}]{shallit2019}
\bibinfo{author}{A.~{Molina Lovett}}, \bibinfo{author}{J.~Shallit},
\newblock \bibinfo{title}{Optimal regular expressions for permutations},
\newblock in: \bibinfo{booktitle}{46th International Colloquium on Automata,
  Languages, and Programming {(ICALP)}}, \bibinfo{year}{2019}, pp.
  \bibinfo{pages}{121:1--12}. \DOIprefix\doi{10.4230/LIPIcs.ICALP.2019.121}.
\bibitem[{Jerrum and Snir(1982)}]{jerrum1982}
\bibinfo{author}{M.~Jerrum}, \bibinfo{author}{M.~Snir},
\newblock \bibinfo{title}{Some exact complexity results for straight-line
  computations over semirings},
\newblock \bibinfo{journal}{J. ACM} \bibinfo{volume}{29} (\bibinfo{year}{1982})
  \bibinfo{pages}{874--897}. \DOIprefix\doi{10.1145/322326.322341}.
\bibitem[{Hrubes et~al.(2010)Hrubes, Wigderson, and Yehudayoff}]{hrubes10}
\bibinfo{author}{P.~Hrubes}, \bibinfo{author}{A.~Wigderson},
  \bibinfo{author}{A.~Yehudayoff},
\newblock \bibinfo{title}{Non-commutative circuits and the sum-of-squares
  problem},
\newblock in: \bibinfo{booktitle}{STOC}, \bibinfo{year}{2010}, pp.
  \bibinfo{pages}{667--676}. \DOIprefix\doi{10.1145/1806689.1806781}.
\bibitem[{Hrube{\v{s}} and Yehudayoff(2013)}]{non-commuting}
\bibinfo{author}{P.~Hrube{\v{s}}}, \bibinfo{author}{A.~Yehudayoff},
\newblock \bibinfo{title}{Formulas are exponentially stronger than monotone
  circuits in non-commutative setting},
\newblock in: \bibinfo{booktitle}{Conference on Computational Complexity},
  \bibinfo{year}{2013}, pp. \bibinfo{pages}{10--14}.
  \DOIprefix\doi{10.1109/CCC.2013.11}.
\bibitem[{Filmus(2011)}]{filmus2011}
\bibinfo{author}{Y.~Filmus},
\newblock \bibinfo{title}{Lower bounds for context-free grammars},
\newblock \bibinfo{journal}{Inf. Process. Lett.} \bibinfo{volume}{111}
  (\bibinfo{year}{2011}) \bibinfo{pages}{895--898}.
  \DOIprefix\doi{10.1016/j.ipl.2011.06.006}.
\bibitem[{Hrube{\v{s}} and Yehudayoff(2011)}]{yehudayoff2011}
\bibinfo{author}{P.~Hrube{\v{s}}}, \bibinfo{author}{A.~Yehudayoff},
\newblock \bibinfo{title}{Homogeneous formulas and symmetric polynomials},
\newblock \bibinfo{journal}{Comput. Complex.} \bibinfo{volume}{20}
  (\bibinfo{year}{2011}) \bibinfo{pages}{559--578}.
  \DOIprefix\doi{10.1007/s00037-011-0007-3}.
\bibitem[{Birget(1992)}]{foolingset1}
\bibinfo{author}{J.-C. Birget},
\newblock \bibinfo{title}{Intersection and union of regular languages and state
  complexity},
\newblock \bibinfo{journal}{Inf. Process. Lett.} \bibinfo{volume}{43}
  (\bibinfo{year}{1992}) \bibinfo{pages}{185--190}.
  \DOIprefix\doi{10.1016/0020-0190(92)90198-5}.
\bibitem[{Glaister and Shallit(1996)}]{foolingset2}
\bibinfo{author}{I.~Glaister}, \bibinfo{author}{J.~Shallit},
\newblock \bibinfo{title}{A lower bound technique for the size of
  nondeterministic finite automata},
\newblock \bibinfo{journal}{Inf. Process. Lett.} \bibinfo{volume}{59}
  (\bibinfo{year}{1996}) \bibinfo{pages}{75--77}.
  \DOIprefix\doi{10.1016/0020-0190(96)00095-6}.
\bibitem[{Khrapchenko(1971)}]{khrapchenko1971}
\bibinfo{author}{V.~M. Khrapchenko},
\newblock \bibinfo{title}{Method of determining lower bounds for the complexity
  of {P}-schemes},
\newblock \bibinfo{journal}{Mathematical Notes of the Academy of Sciences of
  the USSR} \bibinfo{volume}{10} (\bibinfo{year}{1971})
  \bibinfo{pages}{474--479}. \DOIprefix\doi{10.1007/BF01747074}.
\bibitem[{Grigni and Sipser(1995)}]{grigni}
\bibinfo{author}{M.~Grigni}, \bibinfo{author}{M.~Sipser},
\newblock \bibinfo{title}{Monotone separation of logarithmic space from
  logarithmic depth},
\newblock \bibinfo{journal}{J. Comput. Syst. Sci.} \bibinfo{volume}{50}
  (\bibinfo{year}{1995}) \bibinfo{pages}{433--437}.
  \DOIprefix\doi{10.1006/jcss.1995.1033}.
\bibitem[{Geffert et~al.(2010)Geffert, Mereghetti, and Palano}]{italian2010}
\bibinfo{author}{V.~Geffert}, \bibinfo{author}{C.~Mereghetti},
  \bibinfo{author}{B.~Palano},
\newblock \bibinfo{title}{More concise representation of regular languages by
  automata and regular expressions},
\newblock \bibinfo{journal}{Inf. Comput.} \bibinfo{volume}{208}
  (\bibinfo{year}{2010}) \bibinfo{pages}{385--394}.
  \DOIprefix\doi{10.1016/j.ic.2010.01.002}.
\bibitem[{Jukna(2012)}]{jukna2012}
\bibinfo{author}{S.~Jukna}, \bibinfo{title}{Boolean Function Complexity:
  Advances and Frontiers}, volume~\bibinfo{volume}{27},
  \bibinfo{publisher}{Springer Science \& Business Media},
  \bibinfo{year}{2012}.
\bibitem[{Shpilka and Yehudayoff(2010)}]{amir10:survey}
\bibinfo{author}{A.~Shpilka}, \bibinfo{author}{A.~Yehudayoff},
\newblock \bibinfo{title}{Arithmetic circuits: a survey of recent results and
  open questions},
\newblock \bibinfo{journal}{Found. Trends Theor. Comput. Sci.}
  \bibinfo{volume}{5} (\bibinfo{year}{2010}) \bibinfo{pages}{207–388}.
  \DOIprefix\doi{10.1561/0400000039}.
\bibitem[{Saptharishi(2015)}]{survey:github}
\bibinfo{author}{R.~Saptharishi},
\newblock \bibinfo{title}{A survey of lower bounds in arithmetic circuit
  complexity},
\newblock \bibinfo{journal}{Github survey}  (\bibinfo{year}{2015}). \URLprefix
  \url{https://github.com/dasarpmar/lowerbounds-survey/}.
\bibitem[{Gruber and Holzer(2009)}]{gruber2009ops2}
\bibinfo{author}{H.~Gruber}, \bibinfo{author}{M.~Holzer},
\newblock \bibinfo{title}{Tight bounds on the descriptional complexity of
  regular expressions},
\newblock in: \bibinfo{booktitle}{Developments in Language Theory},
  \bibinfo{year}{2009}, pp. \bibinfo{pages}{276--287}.
  \DOIprefix\doi{10.1007/978-3-642-02737-6\_22}.
\bibitem[{Hopcroft et~al.(2001)Hopcroft, Motwani, and Ullman}]{hopcroft2001}
\bibinfo{author}{J.~E. Hopcroft}, \bibinfo{author}{R.~Motwani},
  \bibinfo{author}{J.~D. Ullman},
\newblock \bibinfo{title}{Introduction to automata theory, languages, and
  computation},
\newblock \bibinfo{journal}{ACM SIGACT News} \bibinfo{volume}{32}
  (\bibinfo{year}{2001}) \bibinfo{pages}{60--65}.
\bibitem[{Gruber and Holzer(2015)}]{gruber2015survey}
\bibinfo{author}{H.~Gruber}, \bibinfo{author}{M.~Holzer},
\newblock \bibinfo{title}{From finite automata to regular expressions and back
  -- a summary on descriptional complexity},
\newblock \bibinfo{journal}{Int. J. Found. Comput. Sci.} \bibinfo{volume}{26}
  (\bibinfo{year}{2015}) \bibinfo{pages}{1009--1040}.
  \DOIprefix\doi{10.1142/S0129054115400110}.
\bibitem[{Jukna(2016)}]{jukna2016}
\bibinfo{author}{S.~Jukna},
\newblock \bibinfo{title}{Tropical complexity, sidon sets, and dynamic
  programming},
\newblock \bibinfo{journal}{SIAM J. Discrete Math.} \bibinfo{volume}{30}
  (\bibinfo{year}{2016}) \bibinfo{pages}{2064--2085}.
  \DOIprefix\doi{10.1137/16M1064738}.
\bibitem[{Yehudayoff(2019)}]{VPvsVNP}
\bibinfo{author}{A.~Yehudayoff},
\newblock \bibinfo{title}{Separating monotone {VP} and {VNP}},
\newblock in: \bibinfo{booktitle}{STOC}, \bibinfo{year}{2019}, pp.
  \bibinfo{pages}{425--429}. \DOIprefix\doi{10.1145/3313276.3316311}.
\bibitem[{Jukna(2015)}]{jukna2015}
\bibinfo{author}{S.~Jukna},
\newblock \bibinfo{title}{Lower bounds for tropical circuits and dynamic
  programs},
\newblock \bibinfo{journal}{Theory Comput. Syst.} \bibinfo{volume}{57}
  (\bibinfo{year}{2015}) \bibinfo{pages}{160--194}.
  \DOIprefix\doi{10.1007/s00224-014-9574-4}.
\bibitem[{Jukna(2016)}]{jukna16count}
\bibinfo{author}{S.~Jukna},
\newblock \bibinfo{title}{Lower bounds for monotone counting circuits},
\newblock \bibinfo{journal}{Discrete Appl. Math.} \bibinfo{volume}{213}
  (\bibinfo{year}{2016}) \bibinfo{pages}{139--152}.
  \DOIprefix\doi{10.1016/j.dam.2016.04.024}.
\bibitem[{Nisan(1991)}]{nisan91}
\bibinfo{author}{N.~Nisan},
\newblock \bibinfo{title}{Lower bounds for non-commutative computation
  (extended abstract)},
\newblock in: \bibinfo{booktitle}{STOC}, \bibinfo{year}{1991}, pp.
  \bibinfo{pages}{410--418}. \DOIprefix\doi{10.1145/103418.103462}.
\bibitem[{Valiant(1984)}]{valiant84}
\bibinfo{author}{L.~G. Valiant},
\newblock \bibinfo{title}{Short monotone formulae for the majority function},
\newblock \bibinfo{journal}{J. Algorithms} \bibinfo{volume}{5}
  (\bibinfo{year}{1984}) \bibinfo{pages}{363--366}.
  \DOIprefix\doi{10.1016/0196-6774(84)90016-6}.
\bibitem[{Chen(2008)}]{chungfeller}
\bibinfo{author}{Y.-M. Chen},
\newblock \bibinfo{title}{{The Chung–Feller theorem revisited}},
\newblock \bibinfo{journal}{Discrete Math.} \bibinfo{volume}{308}
  (\bibinfo{year}{2008}) \bibinfo{pages}{1328--1329}.
  \DOIprefix\doi{10.1016/j.disc.2007.03.068}.
\bibitem[{Shamir and Snir(1980)}]{shamir}
\bibinfo{author}{E.~Shamir}, \bibinfo{author}{M.~Snir},
\newblock \bibinfo{title}{On the depth complexity of formulas},
\newblock \bibinfo{journal}{Math. Syst. Theory} \bibinfo{volume}{13}
  (\bibinfo{year}{1980}) \bibinfo{pages}{301--322}.
  \DOIprefix\doi{10.1007/BF01744302}.
\bibitem[{Tiwari and Tompa(1994)}]{tiwari}
\bibinfo{author}{P.~Tiwari}, \bibinfo{author}{M.~Tompa},
\newblock \bibinfo{title}{A direct version of {S}hamir and {S}nir's lower
  bounds on monotone circuit depth},
\newblock \bibinfo{journal}{Inf. Process. Lett.} \bibinfo{volume}{49}
  (\bibinfo{year}{1994}) \bibinfo{pages}{243--248}.
  \DOIprefix\doi{10.1016/0020-0190(94)90061-2}.
\bibitem[{Seiwert(2020)}]{seiwert20}
\bibinfo{author}{H.~Seiwert},
\newblock \bibinfo{title}{{O}perational {C}omplexity of {S}traight {L}ine
  {P}rograms for {R}egular {L}anguages},
\newblock in: \bibinfo{booktitle}{Descriptional Complexity of Formal Systems},
  \bibinfo{year}{2020}, pp. \bibinfo{pages}{180--192}.
  \DOIprefix\doi{10.1007/978-3-030-62536-8_15}.
\bibitem[{Meyer and Stockmeyer(1972)}]{meyer72}
\bibinfo{author}{A.~R. Meyer}, \bibinfo{author}{L.~J. Stockmeyer},
\newblock \bibinfo{title}{The equivalence problem for regular expressions with
  squaring requires exponential space},
\newblock in: \bibinfo{booktitle}{13th Annual Symposium on Switching and
  Automata Theory}, \bibinfo{year}{1972}, pp. \bibinfo{pages}{125--129}.
  \DOIprefix\doi{10.1109/SWAT.1972.29}.
\bibitem[{Holzer and Kutrib(2011)}]{holzer11}
\bibinfo{author}{M.~Holzer}, \bibinfo{author}{M.~Kutrib},
\newblock \bibinfo{title}{The complexity of regular(-like) expressions},
\newblock \bibinfo{journal}{Int. J. Found. Comput. Sci.} \bibinfo{volume}{22}
  (\bibinfo{year}{2011}) \bibinfo{pages}{1533--1548}.
  \DOIprefix\doi{10.1142/S0129054111008866}.
\bibitem[{Diaconis et~al.(1990)Diaconis, Graham, and Morrison}]{hypercube2}
\bibinfo{author}{P.~Diaconis}, \bibinfo{author}{R.~L. Graham},
  \bibinfo{author}{J.~A. Morrison},
\newblock \bibinfo{title}{Asymptotic analysis of a random walk on a hypercube
  with many dimensions},
\newblock \bibinfo{journal}{Random Structures \& Algorithms}
  \bibinfo{volume}{1} (\bibinfo{year}{1990}) \bibinfo{pages}{51--72}.
  \DOIprefix\doi{10.1002/rsa.3240010105}.
\bibitem[{MacWilliams and Sloane(1977)}]{sloane}
\bibinfo{author}{F.~J. MacWilliams}, \bibinfo{author}{N.~J.~A. Sloane},
  \bibinfo{title}{The theory of error correcting codes},
  volume~\bibinfo{volume}{16}, \bibinfo{publisher}{Elsevier},
  \bibinfo{year}{1977}.

\end{thebibliography}

\appendix

\goodbreak
\section{Semirings and circuits} \label{sec:semirings} 

In this section we give some background for the connection between regular expressions without star and monotone arithmetic circuit complexity.
Although the concepts given here are not actually needed in our proofs, it might help the reader's intuition.

A \emph{semiring} $(S,\oplus,\otimes)$ consists of a set $S$ closed under
two binary associative   operations ``addition'' $(\oplus)$ and ``multiplication'' $(\otimes)$, where addition is commutative and multiplication
distributes over addition: $x\otimes(y\oplus z)=(x\otimes y)\oplus (x\otimes
z)$ and $(y\oplus z)\otimes x=(y\otimes x)\oplus (z\otimes x)$.\footnote{For convenience, in this section we understand ``$=$'' semantically rather than syntactically. For example, we say that the union operation ($+)$ is commutative since the expressions $(R_1+ R_2)$ and $(R_2 + R_1)$ describe the same languages although they do not coincide syntactically.}
Furthermore, the set $S$ contains an additive identity element $\nulll$ with $x \oplus \nulll =x$ and a multiplicative identity element $\eins$ with $\eins\otimes x=x\otimes\eins=x$.
A semiring is \emph{commutative} if its multiplication is commutative (that is, $ x\otimes y= y \otimes x$ holds), and is \emph{idempotent} if its addition is idempotent (that is, $x \oplus x = x$ holds).
We consider only semirings with characteristic zero (that is, $\eins \oplus \eins \oplus \dots \oplus \eins \neq \nulll$ holds for every sum of multiplicative identity elements) and with an absorbing additive identity element (that is, $\nulll \otimes  x = x \otimes \nulll = \nulll$ holds for all elements $x$).

Given a number $n \in \NN$, a \emph{circuit} over a semiring is a directed acyclic graph with one sink;
parallel edges joining the same pair of nodes are allowed.  
Each indegree-zero node (a \emph{leaf}) holds one of the
symbols $x_1,\ldots,x_n$ which are interpreted as formal \emph{variables} over $S$, or a semiring element interpreted as a \emph{constant}; we assume here that only the identity elements $\nulll$ and $\eins$ are allowed as constants.
Every other node (a \emph{gate}) has indegree two and performs one of the semiring
operations.  If the semiring is non-commutative, we assume that predecessors of gates are ordered, i.e., every gate has a left and a right predecessor.
 The sink is designated as the output gate.
The \emph{size} of a circuit is the total number of its nodes.  
A circuit whose each node has out-degree $\leq 1$ is a \emph{formula}; that is, formulas are circuits whose underlying graphs are \emph{trees}.

Since in any semiring $(S,\oplus,\otimes)$ multiplication distributes
over addition, each circuit over~$S$ defines (at the output gate) some \emph{formal polynomial}
\begin{align}\label{eq:poly-nc}
 f(x_1,\ldots,x_n)=\bigoplus_{m \in M} \const{m} X \langle m \rangle \ \ \mbox{ with }\ \ X \langle m \rangle  \isdef \bigotimes_{i=1}^{|m|} {m_i}\,
\end{align}
over~$S$ in a natural way, where $M \subseteq \{x_1,\dots,x_n\}^*$ is the set of all \emph{monomials} $X \langle m \rangle$  represented as vectors (or, equivalently, as words) $m=(m_1, \dots ,m_{|m|})$, and $\const{m} \in S \setminus \{\nulll \} $ are (non-zero) coefficients.
If the semiring is commutative, we can  write this polynomial also as  
\begin{equation}\label{eq:poly}
  f(x_1,\ldots,x_n)=\bigoplus_{a\in A}\const{a} X[a] \ \ \mbox{ with }\ \ X[a]  \isdef \bigotimes_{i=1}^n x_i^{a_i}\,
\end{equation}
where $A\subseteq \NN^n$ is its set of \emph{exponent vectors}, and $x_i^k$ stands for $x_i\otimes\cdots\otimes x_i$ $k$-times. An example is shown in \cref{fig:semiring}. 
Note the difference in the notation: In \cref{eq:poly-nc} the entry $m_i \in \{x_1, \dots, x_n\}$ denotes \emph{which variable} occurs in the monomial $X \langle m \rangle$ at position $i$, in \cref{eq:poly} the entry $a_i \in \NN$ denotes the \emph{power} (possible 0) of the $i$-th variable $x_i$ in the monomial $X[a]$.

In this paper, we mainly deal with two semirings: the \emph{free} and the \emph{arithmetic} semiring. 

\paragraph*{The free semiring}
Let $n \in \NN$, $\Sigma=\{x_1, \dots, x_n\}$ be an alphabet and $\reg{\Sigma}$ be the set of all regular expressions  over $\Sigma$ without stars. 
In the non-commutative, idempotent \emph{free} semiring $(\reg{\Sigma}, +, \Cdot)$  we
have $S=\reg{\Sigma}$, $x\oplus y  = x+y$ (union) and $x\otimes y  = x\cdot y$ (concatenation). 
The identity elements are $\nulll= \emptyset$ and $\eins= \epsilon$.  
The formal variables $x_i$ are interpreted as \emph{letters}. 
The polynomial given by \cref{eq:poly-nc} is 
$$f = \sum_{m \in M} \const{m} \:m_1 m_2 \cdots m_{|m|} \,.$$
Since the free semiring is idempotent ($x+x = x$ holds), all coefficients $\const{m}=\eins=\epsilon$ are trivial.
Thus, this polynomial $f$ corresponds to a regular expression $R_f$ (in ``sum-product normal form'' where all unions are at the top). 
In particular, any monomial $X \langle m \rangle = m_1 \otimes \cdots \otimes m_{|m|}$ is a word $m$ and the set $M$ of monomials is just the (finite) language $L(R_f)$ described by the expression $R_f$.
A \emph{formula} over the free semiring is nothing else than a regular expression without stars, viewed as its syntax tree.

\begin{rem}[Circuits for infinite regular languages] \label{rem:italian}
If we additionally allow unary gates performing the star operation, we obtain circuits that describe arbitrary regular languages, also called \emph{straight line programs} for regular languages. 
These circuits can simulate NFAs and can be exponentially more succinct.
For more information we refer to \cite{italian2010}; see also \cite{seiwert20} and references therein.
\end{rem}

\begin{figure}[t]
\begin{center}
\begin{tikzpicture}[semithick,yscale=0.52,xscale=0.95,scale=0.9]

\node[node] (1) at (3.4,0) { $\oplus$};
 
\node[node] (2) at (2,-2) { $\otimes$};  
\node[node] (3) at (4.8,-2) { $\oplus$}; 

\node[node] (7) at (1,-4) { $\oplus$}; 
\node[node] (5) at (3,-4) { $\otimes$}; 

\node[node] (11) at (0.55,-6) {$c$}; 
\node[node] (8) at (1.45,-6) {$a$}; 
\node[node] (8b) at (2.55,-6) {$b$}; 
\node[node] (9) at (3.45,-6) {$a$}; 
\node[node] (9b) at (4.35,-4) {$a$}; 
\node[node] (10) at (5.25,-4) {$a$}; 

\draw[->] (2)->(1);
\draw[->] (3)->(1);
\draw[->] (5)->(2);
\draw[->] (7)->(2);
\draw[->] (9)->(5);
\draw[->] (9)->(5);
\draw[->] (10)->(3);
\draw[->] (8b)->(5);
\draw[->] (9b)->(3);
\draw[->] (11)->(7);
\draw[->] (8)->(7);
\end{tikzpicture}
\vspace*{-2mm}
\end{center}
\caption{The  formula above computes the polynomial
$ (c  \otimes b \otimes a) \oplus (a \otimes b \otimes a) \oplus a \oplus a $; to improve readability we write $a$, $b$ and $c$ instead of $x_1$, $x_2$ resp. $x_3$.
If we plug in the free semiring, this formula defines 
the regular expression $cba+aba+a $ and describes the language $L=\{cba,aba,a\}$.
If we plug in the arithmetic semiring, the formula defines  the polynomial $abc+a^2b+2a$ and produces the set $A=\{(1,1,1),(2,1,0),(1,0,0)\}$.
}\label{fig:semiring}
\end{figure}
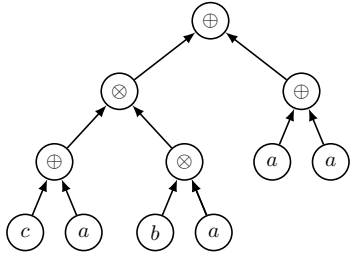

\paragraph*{The arithmetic semiring} \label{sec:arith-circs}

In the commutative \emph{arithmetic} semiring $(\RRpos ,+,\times)$, we have $S=\RRpos$, $x\oplus y  = x+y$ (addition) and $x\otimes y  = x \times y$ (multiplication). The identity elements are $\nulll =0$ and $\eins = 1$. Circuits over the arithmetic semiring are called monotone arithmetic circuits.
Here the polynomial given by \cref{eq:poly} defines a function over the non-negative reals
$$
f(x_1, \dots x_n)=\sum_{a\in A} \const{a} \prod_{i=1}^n x_i^{a_i} \,,
$$
with the set $A\subseteq \NN^n$ of exponent vectors and positive reals $\const{a}$ as coefficients. We call $A$ the \emph{produced set} of the circuit and, up to coefficients, the polynomial $f$ is determined by $A$.
The produced set $A \subseteq \NN^n$ of a circuit can also be defined recursively as follows:

A leaf holding the constant $0$ produces the empty set, a leaf holding the constant $1$ produces the set $\{\vec{0}\}$ and a leaf holding a variable~$x_i$ produces the set $\{\vec{e}_i\}$ where $\vec{e}_i$ is the unit vector with exactly one~$1$ at position~$i$ and~$0$ elsewhere.
An addition ($+$) gate produces the \emph{union} of the sets produced by its predecessors. 
Finally, a multiplication ($\times$) gate produces the \emph{Minkowski sum}~$\mplus$ of the sets produced by its predecessors; the Minkowski sum of two sets $A,B \subseteq \NN^n$ of vectors is $A \mplus B = \{a+b : a\in A,\, b\in B\}$. 
The set produced by the entire circuit is the set produced by its output gate. 
In \cref{table:semirings} we compare produced sets with described languages.

\begin{table}[t]
\begin{center}
\begin{tabular}{c|c|c}
~ & Free semiring & Arithmetic semiring   \\
\hline \\[-9pt]
Constants $\nulll$ and $\eins$  & $L(\emptyset)=\emptyset, \,L(\epsilon)=\{\epsilon\}$ &  $\prodd{0}=\emptyset, \,\prodd{1}=\{\vec{0}\}$ \\[3pt]
Variables/letters     $x_i$ & $L(x_i)=\{x_i\}$ & $\prodd{x_i}=\{\vec{e}_i\}$ \\[3pt]
Addition       $\oplus$& $L(R_1 + R_2) = L(R_1) \cup L(R_2)$ & $\prodd{u +v} = \prodd{u} \cup \prodd{v}$ \\[3pt]
Multiplication $\otimes$ & $L(R_1 \cdot R_2) = L(R_1)\cdot L(R_2) $  & $\prodd{u \times v} = \prodd{u}\mplus \prodd{v} $
\end{tabular}
\vspace*{-2mm}
\end{center}
\caption{Comparison between the free and the arithmetic semiring. Circuits over the free semiring describe languages $L$ while monotone arithmetic circuits produce sets $A$ of exponent vectors. Here, $\mplus$ denotes the Minkowski sum.}\label{table:semirings}
\end{table}

\paragraph*{The boolean semiring} \label{sec:boolean}
For completeness also the \emph{boolean semiring} $(\{0,1\},\lor, \land)$ should be mentioned.
Here we have $S=\{0,1\}$, $x \oplus y = x\lor y$ and $ x \otimes y = x \land y$, both operations are commutative and idempotent. The identity elements are $\nulll = 0$ and $\eins = 1$.
A circuit over the boolean semiring is a monotone boolean circuit.
The polynomial given by \cref{eq:poly} defines a monotone boolean function 
$$
f(x_1, \dots ,x_n)=\bigvee_{a\in A} \: \bigwedge_{i \in [n]: a_i \neq 0} x_i \;.
$$
Since also the ``multiplication'' of the boolean semiring is idempotent ($x\land x=x$ holds), often simplifications in the computed polynomials are possible. This is the reason why monotone boolean complexity can be smaller than monotone arithmetic complexity (see \cref{rem:arithvsbool}). 

\end{document}